\documentclass{amsart}
\usepackage{amsmath}
\usepackage{amssymb}
\usepackage{amsfonts}
\usepackage{graphicx}
\usepackage{texdraw}
\usepackage{graphpap}
\usepackage{wasysym}
\usepackage{enumitem}
\usepackage{color}
\usepackage[OT2,T1]{fontenc}

\theoremstyle{plain}
\newtheorem{thm}{Theorem}[section]
\newtheorem{mth}{Theorem}

\newtheorem{mlem}{Lemma}

\newtheorem{lem}[thm]{Lemma}
\newtheorem*{lemm}{Lemma}

\theoremstyle{remark}

\newtheorem*{rem}{Remark}

\newtheorem*{ex}{Example}

\numberwithin{equation}{section}

\newcommand{\bfR}{\mathbf{R}}

\newcommand{\meso}{r}

\newcommand{\calH}{{\mathcal H}}

\newcommand{\compset}{V}
\newcommand{\calP}{{\mathcal P}}
\newcommand{\calM}{{\mathcal M}}

\newcommand{\calW}{{\mathcal W}}

\newcommand{\bfK}{{\mathbf K}}
\newcommand{\bfL}{{\mathbf L}}
\newcommand{\bfk}{{\mathbf k}}
\newcommand\modul{\tau_0}

\newcommand{\Int}{\operatorname{Int}}

\newcommand{\C}{{\mathbb C}}
\newcommand{\E}{{\mathbf E}}

\newcommand{\const}{\mathrm{const.}}
\newcommand{\eps}{{\varepsilon}}

\newcommand{\re}{\operatorname{Re}}
\newcommand{\im}{\operatorname{Im}}

\newcommand{\Poly}{\operatorname{Pol}}

\newcommand{\Prob}{{\mathbf{P}}}

\renewcommand{\d}{{\partial}}
\newcommand{\dbar}{\bar{\partial}}
\newcommand{\1}{\chi}

\newcommand{\dist}{\operatorname{dist}}
\newcommand{\supp}{\operatorname{supp}}

\newcommand{\Lap}{\Delta}

\def\labs{\left |}
\def\rabs{\right |}
\def\babs#1{\labs {#1} \rabs}

\begin{document}

\title{On bulk singularities in the random normal matrix model}


\subjclass[2010]{60G55; 30C40; 30D15; 35R09}

\author{Yacin Ameur}

\address{Yacin Ameur\\
Department of Mathematics\\
Faculty of Science\\
Lund University\\
P.O. BOX 118\\
221 00 Lund\\
Sweden}

\email{Yacin.Ameur@maths.lth.se}

\author{Seong-Mi Seo}

\address{Seong-Mi Seo\\Department of Mathematical Sciences\\Seoul National University\\Seoul\\ 151-747\\ Republic of Korea}

\email{ssm0112@snu.ac.kr}

\thanks{Seong-Mi Seo was supported by Samsung Science and Technology Foundation, SSTF-BA1401-01.}

\begin{abstract} We extend the method of rescaled Ward identities from \cite{AKM} to study the distribution of eigenvalues close to a bulk singularity, i.e. a point in the interior of the droplet where the density of the classical equilibrium measure vanishes. We prove results to the effect that a certain "dominant part'' of the Taylor expansion determines the microscopic properties near a bulk singularity. A description of the distribution is given in terms of a special entire function, which depends on the nature of the singularity (a Mittag-Leffler function in the case of a rotationally symmetric singularity).
\end{abstract}
\maketitle

Consider a system $\{\zeta_j\}_1^n$ of identical point-charges in the complex plane in the presence of an external field $nQ$, where $Q$ is a suitable function.
The system is assumed to be picked randomly with respect to the
Boltzmann--Gibbs probability law at inverse temperature $\beta=1$,
\begin{equation}\label{bglaw}d\Prob_n(\zeta)=\frac 1 {Z_n}e^{-H_n(\zeta)}\, d^{\,2n}\zeta,\end{equation}
where $H_n$ is the weighted energy of the system,
\begin{equation}\label{mini}H_n(\zeta_1,\ldots,\zeta_n)=\sum_{j\ne k}\log\frac 1 {\babs{\,\zeta_j-\zeta_k\,}}+n\sum_{j=1}^n Q(\zeta_j).\end{equation}
The constant $Z_n$ in \eqref{bglaw} is chosen so that the total mass
is $1$.

It is well-known that (with natural conditions on $Q$) the normalized counting measures $\mu_n=\frac 1 n\sum_{j=1}^n\delta_{\zeta_j}$ converge
to \textit{Frostman's equilibrium measure} as $n\to\infty$. This is a probability measure of the form
\begin{equation}\label{frost}d\sigma(\zeta)=\chi_S(\zeta)\,\Lap Q(\zeta)\, dA(\zeta)\end{equation}
where $\chi_S$ is the indicator function of a certain compact set
$S$ called the \textit{droplet}.

We necessarily have $\Lap Q\ge 0$ on $S$.
In the papers \cite{AKM,AKMW}, the method of rescaled Ward identities was introduced and applied to study microscopic properties of the system $\{\zeta_j\}_1^n$ close to a (moving) point $p\in S$. The situation in those papers is however restricted by the condition that the point $p$ be "regular'' in the sense that $\Lap Q(p)\ge\const>0$. In this note, we extend the method to allow for a "bulk singularity'', i.e. an isolated point $p$ in the interior of $S$ at which $\Lap Q=0$.

In general, a bulk singularity tends to repel particles away, which means that one must use a relatively coarse scale in order to capture the relevant structure.
 We prove results to the effect that (in many cases) the dominant terms in the Taylor expansion
of $\Lap Q$ about $p$ determines the microscopic properties of the system in the vicinity of $p$. Our characterization uses the Bergman kernel for a certain space of entire functions, associated with these dominant terms.
In particular, we obtain quite different distributions depending on the degree to which $\Lap Q$ vanishes at $p$.

\begin{rem} It is well-known that the particles $\{\zeta_j\}_1^n$ can be identified with eigenvalues of random normal matrices with a suitable weighted distribution.
The details of this identification are not important for the present investigation. However, following tradition, we shall sometimes speak of a "configuration of random eigenvalues'' instead of a "particle-system''.
\end{rem}

\begin{rem}The meaning of the convergence $\mu_n\to\sigma$ is that
$\E_n[\mu_n(f)]\to \sigma(f)$ as $n\to\infty$ where $f$ is a suitable test-function, e.g. in the Sobolev space
$W^{1,2}(\C)$, where $\E_n$ is expectation with respect to \eqref{bglaw}. In fact, more can be said, see \cite{AM}.
\end{rem}

\subsection*{Notation} We write $\Lap=\d\dbar$ for $1/4$ of the usual Laplacian, and $dA$ for $1/\pi$ times Lebesgue measure on the plane $\C$. Here $\d=\frac 1 2(\d/\d x-i\d/\d y)$ and $\dbar=\frac 1 2(\d/\d x+i\d/\d y)$ are the usual complex derivatives. We write $\bar{z}$ (or occasionally $z^*$) for the complex conjugate of a number $z$.
 A continuous function $h(z,w)$ will be called \textit{Hermitian} if $h(z,w)=h(w,z)^*$.
$h$ is called \textit{Hermitian-analytic} (\textit{Hermitian-entire}) if $h$ is Hermitian and analytic (entire) in $z$ and $\bar{w}$. A Hermitian function $c(z,w)$ is called a \textit{cocycle} if there is a unimodular function $g$ such that $c(z,w)=g(z)\bar{g}(w)$, where for functions we use the notation $\bar{f}(z)=f(z)^*$. We write $D(p,r)$ for the open disk with center $p$ and radius $r$.

\section{Introduction; Main Results}

\subsection{Potential and equilibrium measure} The function $Q$ is usually called the "external potential''. This function is assumed to be lower semi-continuous and real-valued, except that it may assume the value $+\infty$ in portions of the plane. We also assume: (i) the set $\Sigma_0=\{Q<\infty\}$ has dense interior, (ii) $Q$ is real-analytic in $\Int\Sigma_0$, and (iii) $Q$ satisfies the growth condition
\begin{equation}\label{eq:grow}\liminf_{\zeta\to\infty}\frac {Q(\zeta)}{\log\babs{\,\zeta\,}^{\,2}}>1.\end{equation}

For a suitable measure on $\C$, we define its
$Q$-\textit{energy} by
$$I_Q[\mu]=\iint_{\C^2}\log\frac 1 {\babs{\,\zeta-\eta\,}}\, d\mu(\zeta)d\mu(\eta)+\int_\C Q\, d\mu.$$
The \textit{equilibrium measure} $\sigma=\sigma_Q$ is defined as the probability measure which minimizes $I_Q[\mu]$ over all compactly supported Borel probability measures $\mu$. Existence and uniqueness of such a minimizer is well-known, see e.g. \cite{ST} where also the explicit expression \eqref{frost} is derived, with $S=\supp\sigma$.

\subsection{Rescaling}
Recall that $\Lap Q\ge 0$ on $S$.
The purpose of the present investigation is to study (isolated) points
$p\in \Int S$ at which $\Lap Q(p)=0$. We refer to such points as
\textit{bulk singularities}. Without loss of generality, we can assume that $p=0$ is such a point, and we study the microscopic behaviour of the system $\{\zeta_j\}_1^n$ near $0$.

By the \textit{mesoscopic scale} at $p=0$ we mean the positive number $\meso_n=\meso_n(p)$ having the property
$$n\int_{D(p,\meso_n)}\Lap Q\, dA=1.$$
Intuitively, $\meso_n(p)$ means the expected distance from a particle at $p$ to its closest neighbour. If $p$ is a regular bulk point, then, as is easily seen,
$$\meso_n(p)=1/\sqrt{n\Lap Q(p)}+O(1/n),\quad (n\to\infty),$$
which gives the familiar scaling factor used in papers such as \cite{A,AKM}.

Since the Laplacian $\Lap Q$ vanishes at $0$ and is real-analytic and non-negative in a neighbourhood, there is an integer $k\ge 1$ such that the Taylor expansion of $\Lap Q$ about $0$ takes the form
$\Lap Q(\zeta)=\tilde{P}(\zeta)+O(|\,\zeta\,|^{\,2k-1})$, where
$\tilde{P}(x+iy)=\sum_{j=0}^{2k-2}a_j\,x^{\,j}y^{\,2k-2-j}$
is a positive semi-definite polynomial, homogeneous of degree $2k-2$.

We refer to the number $2k-2=\operatorname{degree}\tilde{P}$ as the \textit{type} of the bulk-singularity at the origin. We shall say that the singularity is \textit{non-degenerate} if
$\tilde{P}$ is positive definite, i.e. if there is a positive constant $c$ such that $\tilde{P}(\zeta)\ge c\,\babs{\,\zeta\,}^{\,2k-2}.$ In the sequel, we tacitly assume that this condition is satisfied.

\smallskip

It will be important to have a good grasp of the size of $r_n=r_n(0)$ as $n\to\infty$. For this, we note that
\begin{align*}1&=n\int_{\babs{\,\zeta\,}<r_n}\Lap Q(\zeta)\, dA(\zeta)\\
&=n\int_0^{r_n}r^{\,2k-1}\, dr\,\frac 1 \pi \int_0^{2\pi}
\tilde{P}(e^{\,i\theta})\, d\theta+O(n\,r_n^{\,2k+1})\\
&=\modul^{-2k}\,n\,r_n^{\,2k}+O(n\,r_n^{\,2k+1})\end{align*}
where $\modul=\modul[Q,0]$ is the positive constant satisfying
\begin{equation}\label{mod}\modul^{-2k}=\frac 1 {2\pi k}\int_0^{2\pi} \tilde{P}(e^{i\theta})\, d\theta.\end{equation}
We will call $\modul$ the \textit{modulus} of the bulk singularity at $0$. We have the following lemma; the simple verification is omitted here.

\begin{lemm} For the mesoscopic scale $r_n$ at $0$ we have $r_n=\modul\, n^{-1/2k}\,(1+O(n^{-1/2k}))$ as $n\to \infty$, where $\modul$ is the modulus \eqref{mod}.
\end{lemm}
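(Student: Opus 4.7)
The plan is to unpack the displayed computation immediately preceding the lemma and then invert it to solve for $r_n$. The computation already furnishes the scalar equation
\[
1 = \modul^{-2k}\, n\, r_n^{\,2k} + O(n\, r_n^{\,2k+1}),
\]
obtained by inserting the Taylor expansion $\Lap Q(\zeta) = \tilde{P}(\zeta) + O(|\zeta|^{\,2k-1})$ into the defining integral, splitting into radial and angular parts, using homogeneity of $\tilde P$ of degree $2k-2$, and recalling the definition \eqref{mod} of $\modul$. I would begin by verifying this identity carefully: the error term genuinely is $O(n r_n^{\,2k+1})$ because the integrand $O(|\zeta|^{\,2k-1})$ integrates over the disk $D(0,r_n)$ against $dA$ to produce a quantity of order $r_n^{\,2k+1}$, uniformly for $r_n$ small (which is justified a posteriori).

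Next, I would extract a crude a priori bound on $r_n$. By non-degeneracy, $\tilde P(\zeta) \ge c|\zeta|^{\,2k-2}$, so the main term $\modul^{-2k}\,n\,r_n^{\,2k}$ is bounded below by a positive multiple of $n r_n^{\,2k}$; the total expression is also bounded above by a positive multiple of $n r_n^{\,2k}$ whenever $r_n$ is small. Hence the equation $1 = n r_n^{\,2k}\bigl(\modul^{-2k} + O(r_n)\bigr)$ already forces $r_n \asymp n^{-1/2k}$, and in particular $r_n \to 0$ as $n \to \infty$, so the implicit ``$r_n$ small'' hypothesis is consistent.

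Finally, I would solve for $r_n$. Rewriting the identity as
\[
r_n^{\,2k} = \frac{\modul^{\,2k}}{n}\bigl(1 + O(r_n)\bigr),
\]
and taking the $2k$-th root, I get $r_n = \modul\, n^{-1/2k}\bigl(1 + O(r_n)\bigr)^{1/2k} = \modul\, n^{-1/2k}\bigl(1 + O(r_n)\bigr)$ by a Taylor expansion of $(1+x)^{1/2k}$ around $x=0$. Inserting the a priori bound $r_n = O(n^{-1/2k})$ into the error term then yields
\[
r_n = \modul\, n^{-1/2k}\bigl(1 + O(n^{-1/2k})\bigr),
\]
which is the claim. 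There is no real obstacle here; the only point requiring mild care is tracking the error term through the bootstrap, but this is precisely the standard implicit-function-style argument one expects for an asymptotic defined by a leading-order integral condition — consistent with the authors' remark that the verification is simple.
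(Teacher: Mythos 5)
Your argument is correct and is exactly the route the paper intends: the paper derives the identity $1=\modul^{-2k}n\,r_n^{\,2k}+O(n\,r_n^{\,2k+1})$ in the display immediately preceding the lemma and then states the result with the verification omitted, and your proof supplies that omitted step by establishing the a priori bound $r_n\asymp n^{-1/2k}$ (together with $r_n\to 0$) and bootstrapping it through the $2k$-th root.
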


\begin{ex} For the \textit{Mittag-Leffler potential} $Q=\babs{\,\zeta\,}^{\,2k}$, the droplet is the disk $\babs{\,\zeta\,}\le k^{-1/2k}$. For $k=1$ we have the well-known \textit{Ginibre potential}. For $k\ge 2$, the Mittag-Leffler potential has a bulk singularity at the origin of type $2k-2$. It is easy to check that the modulus equals $\modul=k^{-1/2k}$.
\end{ex}

Let $p$ be an integer, $1\le p\le n$.
The \textit{$p$-point function} of the point-process $\{\zeta_j\}_1^n$ is the function of $p$ complex variables $\eta_1,\ldots,\eta_p$ defined by
$$\bfR_{n,p}(\eta_1,\ldots,\eta_p)=\lim_{\delta\to 0}\frac {\Prob_n\left(\{\zeta_j\}_1^n\cap D(\eta_\ell,\delta)\ne \emptyset,\,
\ell=1,\ldots,p\right)}{\delta^{2p}}.$$
 The $p$-point function $\bfR_{n,p}$ should really be understood as the density in the measure
$\bfR_{n,p}(\eta_1,\ldots,\eta_p)\, dA(\eta_1)\cdots dA(\eta_p)$. This should be kept in mind when we subject the $\eta_j$ to various transformations.

A well-known algebraic fact ("Dyson's determinant formula'', see e.g. \cite{M} or \cite{ST}, p. 249.) states that the $p$-point function takes the form of a determinant,
$$\bfR_{n,p}(\eta_1,\ldots,\eta_p)=\det\left(\bfK_n(\eta_i,\eta_j)\right)_{i,j=1}^p$$
where $\bfK_n$ is a certain Hermitian function called a \textit{correlation kernel} of the process.
(Cf. Section \ref{Sec2}.)
Of particular importance is the \textit{one-point function}
$\bfR_n=\bfR_{n,1}.$

We now rescale about the origin on the mesoscopic scale $r_n$ about the bulk singularity at $0$. The \textit{rescaled system} $\{z_j\}_1^n$ is taken to be
\begin{equation}\label{scale}z_j=r_n^{-1}\zeta_j,\quad j=1,\ldots,n,\end{equation}
with the law given by the image of the Boltzmann-Gibbs distribution \eqref{bglaw} under the scaling \eqref{scale}.

It follows that the rescaled system $\{z_j\}_1^n$ is determinantal with $p$-point function
\begin{equation}\label{iip}R_{n,p}(z_1,\ldots,z_p)=r_n^{\,2p}\,\bfR_{n,p}(\zeta_1,\ldots,\zeta_p)=
\det(K_{n}(z_i,z_j))_{i,j=1}^p,\end{equation}
where the correlation kernel $K_n$ for the rescaled system is given by
\begin{equation}\label{volk}K_n(z,w)=r_n^{\,2}\,\bfK_n(\zeta,\eta),\qquad (z=r_n^{-1}\zeta,\,w=r_n^{-1}\eta).\end{equation}
In particular, the one-point function of the process $\{z_j\}_1^n$ is $R_n(z)=K_n(z,z).$

Clearly a correlation kernel $K_n(z,w)$ is only determined up to multiplication by a cocycle $c_n(z,w)$.

\subsection{Main structural lemma}
Now suppose that $Q$ has a bulk-singularity of type $2k-2$ at the origin. It will be useful to single out a canonical "dominant part'' of $Q$ near $0$. To this end, let $P(x+iy)$ be the Taylor polynomial of $Q$ of degree $2k$ about the origin. Let $H$ be the holomorphic polynomial
$$H(\zeta)=Q(0)+2\d Q(0)\cdot \zeta+\d^2 Q(0)\cdot\zeta^{\,2}+\cdots+\frac 2 {(2k)!}\d^{2k}Q(0)\cdot \zeta^{\,2k}.$$
We will write
$$Q_0=P-\re H.$$
We then have the basic decomposition
\begin{equation}\label{badem}Q=Q_0+\re H+Q_1\end{equation}
where $Q_1(\zeta)=O(|\,\zeta\,|^{\,2k+1})$ as $\zeta\to 0$.

The following lemma gives the basic structure of limiting kernels at a singular point (not necessarily in the bulk).

\begin{mlem} \label{lema} There exists a sequence $c_n$ of cocycles such that every subsequence of the sequence $c_nK_n$ has a subsequence converging uniformly on compact subsets to some Hermitian function $K$. Every limit point $K$ has the structure
\begin{equation}\label{Mod}K(z,w)=L(z,w)e^{-Q_0(\modul z)/2-Q_0(\modul w)/2}\end{equation} where $L$ is an Hermitian-entire function.
\end{mlem}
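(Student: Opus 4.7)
The plan is to exhibit the cocycle explicitly, factor the rescaled kernel into a Hermitian-entire part times a controlled exponential, and extract a subsequential limit by Montel.

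First I would use the standard factorization $\bfK_n(\zeta,\eta)=\mathbf{k}_n(\zeta,\eta)\,e^{-nQ(\zeta)/2-nQ(\eta)/2}$, where $\mathbf{k}_n$ is the reproducing kernel of polynomials of degree $<n$ with inner product $\int f\bar g\,e^{-nQ}\,dA$; in particular $\mathbf{k}_n(\zeta,\eta)$ is a polynomial in $\zeta$ and $\bar\eta$. Inserting $\zeta=r_nz$, $\eta=r_nw$, applying the decomposition $Q=Q_0+\re H+Q_1$, and writing $F_n(z):=\tfrac n2 H(r_nz)$ (a holomorphic polynomial in $z$ of degree $\le 2k$), one obtains
\begin{equation*}
K_n(z,w)=r_n^2\,\mathbf{k}_n(r_nz,r_nw)\,e^{-\re F_n(z)-\re F_n(w)}\,e^{-nQ_0(r_nz)/2-nQ_0(r_nw)/2}\,e^{-nQ_1(r_nz)/2-nQ_1(r_nw)/2}.
\end{equation*}

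Next I would take the unimodular function $g_n(z):=e^{-i\im F_n(z)}$ and set $c_n(z,w):=g_n(z)\overline{g_n(w)}$. A short direct computation yields the key identity $c_n(z,w)\,e^{-\re F_n(z)-\re F_n(w)}=e^{-F_n(z)-\overline{F_n(w)}}$, so $c_nK_n(z,w)=L_n(z,w)\,e^{-nQ_0(r_nz)/2-nQ_0(r_nw)/2}\,e^{-nQ_1(r_nz)/2-nQ_1(r_nw)/2}$ with
\begin{equation*}
L_n(z,w):=r_n^2\,\mathbf{k}_n(r_nz,r_nw)\,e^{-F_n(z)-\overline{F_n(w)}},
\end{equation*}
which is Hermitian-entire (holomorphic in $z$ and in $\bar w$). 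Using the mesoscopic-scale lemma $nr_n^{2k}=\modul^{2k}(1+o(1))$, the homogeneity of $Q_0$ of degree $2k$, and the estimate $Q_1(\zeta)=O(|\zeta|^{2k+1})$, the two remaining exponential factors satisfy $nQ_0(r_nz)\to Q_0(\modul z)$ and $nQ_1(r_nz)\to 0$ locally uniformly; hence
\begin{equation*}
c_nK_n(z,w)=L_n(z,w)\,e^{-Q_0(\modul z)/2-Q_0(\modul w)/2}(1+o(1)),
\end{equation*}
locally uniformly in $\C^2$.

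Finally, to produce the desired limit I would apply Montel's theorem to the Hermitian-entire family $\{L_n\}$, viewed as holomorphic functions of $(z,\bar w)\in\C^2$. By the Schwarz inequality for positive Hermitian kernels, $|L_n(z,w)|^2\le L_n(z,z)L_n(w,w)$, and $L_n(z,z)=R_n(z)\,e^{Q_0(\modul z)}(1+o(1))$, so local uniform boundedness of $L_n$ reduces to a local uniform bound on the rescaled one-point function $R_n$. This is the main obstacle: the crude bound $\bfR_n(\zeta)\le Cn$ yields only $R_n(z)\le Cnr_n^2=O(n^{1-1/k})$, which is useless when $k\ge 2$. What is needed is a refined estimate of the form $\bfR_n(\zeta)\le Cn\Lap Q(\zeta)$ in a neighborhood of the singularity (obtained via weighted polynomial methods, cf.\ \cite{AKM}), which after rescaling gives $R_n(z)=O(|z|^{2k-2})$ on compact sets. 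Granted this, Montel furnishes a locally uniformly convergent subsequence $L_{n_j}\to L$ with $L$ Hermitian-entire, and passing to the limit in the displayed equation produces $K(z,w)=L(z,w)e^{-Q_0(\modul z)/2-Q_0(\modul w)/2}$, as claimed.
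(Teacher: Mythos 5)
Your overall route coincides with the paper's: the same factorization through $\bfk_n$, the same canonical decomposition $Q=Q_0+\re H+Q_1$, the same explicit cocycle $c_n(z,w)=e^{-i\im F_n(z)}e^{\,i\im F_n(w)}$ absorbing the phase of the holomorphic piece, the same identification of a Hermitian-entire $L_n$, and the same treatment of the exponential factors via homogeneity of $Q_0$ and $Q_1=O(|\zeta|^{2k+1})$. So the architecture is right.

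The one place where your proposal has a genuine gap is exactly the place you yourself flag: the local uniform bound on $\{L_n\}$ that makes Montel applicable. You cannot simply cite \cite{AKM} for this. The estimates there are proved under the hypothesis $\Lap Q(p)\ge\const>0$, whereas the point of the present lemma is precisely the singular case $\Lap Q(0)=0$; a new argument is required, and it is the technical heart of the statement. The paper supplies it as a separate lemma (Lemma~\ref{lem:bound}), which is short but not automatic. The idea is to characterize $K_n(z,z)$ as a supremum of $|f(z)|^2$ over unit-norm rescaled weighted polynomials $f=p\,e^{-\tilde Q_n/2}$, $\tilde Q_n(z)=nQ(r_nz)$, observe that $\Lap\tilde Q_n(z)=\Lap Q_0(z)+O(r_n)$ is \emph{bounded above} on compacts uniformly in $n$ (this is where the rescaling pays off), and then note that for $\alpha$ a large enough constant the function $g_n=u\,e^{-\tilde Q_n/2+\alpha|z|^2/2}$ has subharmonic modulus-squared, so the sub-mean-value inequality yields $|u(z)|^2e^{-\tilde Q_n(z)}\le C\int_{D(z,\delta)}|u|^2e^{-\tilde Q_n}\,dA$. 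This gives $R_n(z)\le C$ on compacts, which is exactly what Montel needs. In short, the estimate is not imported from the regular theory but re-proved with the Laplacian bound replaced by one that holds at the rescaled, not the original, scale.

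A small additional remark on your diagnosis: the claimed estimate $\bfR_n(\zeta)\le Cn\Lap Q(\zeta)$ cannot hold near $\zeta=0$ (the right side vanishes there while $\bfR_n(0)>0$), and the rescaled form $R_n(z)=O(|z|^{2k-2})$ would be false at $z=0$ for $k\ge 2$. What is both true and sufficient is the weaker $R_n(z)=O(1)$ on compacts, which is what Lemma~\ref{lem:bound} actually establishes.
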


Following \cite{AKM}, we refer to a limit point $K$ in Lemma \ref{lema} as a \textit{limiting kernel} whereas $L$ is a \textit{limiting holomorphic kernel}. We also speak of the \textit{limiting 1-point function}
\begin{equation}\label{polar}R(z)=K(z,z)=L(z,z)e^{-Q_0(\modul z)}.\end{equation}
Note that $R$ determines $K$ and $L$ by polarization.

\begin{rem} Each limiting one-point function gives rise to a unique \textit{limiting point field} (or "infinite particle system'') $\{z_j\}_1^\infty$ with intensity functions
$$R_k(z_1,\ldots,z_p)=\det(K(z_i,z_j))_{i,j=1}^p.$$ (This follows from Lenard's theory, see \cite{S} or \cite{AKM}.) It is possible that a limiting point field is
\textit{trivial} in the sense that $K=0$.
\end{rem}

\subsection{Universality results}
We will prove universality for two kinds of bulk singularities. Referring to the canonical decomposition $Q=Q_0+\re H+Q_1$ with
$Q_0$ of degree $2k$, we say a singularity at $0$ is:
\begin{enumerate}[label=(\roman*)]
\item \textit{homogeneous} if $Q_1=0$ and $H(z)=c\,z^{\,2k}$ for some constant $c$,
\item \textit{dominant radial} if $Q_0$ is radially symmetric, i.e. $Q_0(z)=Q_0(|\,z\,|)$.
\end{enumerate}
We remark that a homogeneous singularity is necessarily located in the bulk of the droplet; for other types of singularities this must be postulated.

In the following we denote by $L_0$ the Bergman kernel of the space of entire functions $L^2_a(\mu_0)$
associated with the measure
\begin{equation}\label{fock0}d\mu_0(z)= e^{-Q_0(\modul z)}\, dA(z).\end{equation}

\begin{mth} \label{mth2}
If there is a homogeneous singularity at $0$ we have $L=L_0$ for each limiting holomorphic kernel $L$.
\end{mth}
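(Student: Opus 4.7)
The plan is to exploit the fact that the homogeneous hypothesis forces $Q=Q_0+\re H$ to be globally a polynomial of degree $2k$ (since $Q_1=0$), so the finite-$n$ correlation kernel $\bfK_n$ is exactly the reproducing kernel for $\Poly_n$ (polynomials of degree $<n$) in $L^2(e^{-nQ}dA)$. My strategy is to absorb $\re H$ into a cocycle and identify the rescaled limit of these reproducing kernels with $L_0$.

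First I would handle the holomorphic perturbation $H(\zeta)=c\zeta^{2k}$. Under rescaling, $n\re H(r_n z)\to\re(\tilde c z^{2k})$ with $\tilde c=\modul^{2k}c$. Using the isometry $f\mapsto f\cdot e^{\tilde c z^{2k}/2}$ between $L^2_a(e^{-Q_0(\modul z)-\re(\tilde c z^{2k})}dA)$ and $L^2_a(\mu_0)$, the Bergman kernel of the former equals $L_0(z,w)e^{\tilde c z^{2k}/2+\overline{\tilde c w^{2k}}/2}$. A direct computation then shows that the associated correlation kernel factors as
\[L_0(z,w)\,e^{-Q_0(\modul z)/2-Q_0(\modul w)/2}\cdot g(z)\bar g(w),\]
with $g(z)=e^{i\im(\tilde c z^{2k})/2}$ unimodular, so $\bar g(z)g(w)$ is a cocycle. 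Choosing $c_n$ to be a finite-$n$ analogue of this cocycle reduces the problem to showing that the so-adjusted $c_nK_n$ converges to $L_0(z,w)\,e^{-Q_0(\modul z)/2-Q_0(\modul w)/2}$, i.e.\ $L=L_0$.

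For the core convergence, I would use that the rescaled reproducing kernel $r_n^2\bfK_n^{\#}(r_nz,r_nw)$ is the reproducing kernel for $\Poly_n$ against the weight $e^{-nQ(r_nz)}dA$, which converges to $e^{-Q_0(\modul z)-\re(\tilde c z^{2k})}dA$ by homogeneity of $Q_0$ and $nr_n^{2k}\to\modul^{2k}$. By Lemma~\ref{lema}, any subsequential limit $L$ is Hermitian-entire and positive-definite, hence the reproducing kernel of some Hilbert space $\mathcal H_L$ of entire functions. A Cauchy/mass-one bound derived from the structural form in Lemma~\ref{lema} shows $\mathcal H_L$ embeds contractively into $L^2_a(\mu_0)$. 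For any polynomial $p$, the reproducing identity for $\bfK_n^{\#}$ valid when $\deg p<n$ then passes to the limit (via a dominated-convergence argument controlled by the same bound) and gives $p\in\mathcal H_L$ with $p(w)=\langle p,L(\cdot,w)\rangle_{\mu_0}$. Density of polynomials in $L^2_a(\mu_0)$, which follows from the growth $Q_0(\modul z)\ge c'|z|^{2k}$ for some $c'>0$, then forces $\mathcal H_L=L^2_a(\mu_0)$, hence $L=L_0$.

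The main obstacle is the uniform control needed to pass the reproducing identity to the limit: one must bound the tail $\int_{|z|>R}|p(z)||\bfK_n^{\#}(r_nz,r_nw)|\,e^{-nQ(r_nz)}dA(z)$ uniformly in $n$. This demands Gaussian-type off-diagonal estimates for the rescaled kernels, obtained via the mass-one inequality combined with the uniform growth $Q_0(\modul z)\ge c'|z|^{2k}$, plus care in handling the finite-$n$ dependence of the weight. Once these estimates are in place, the density-of-polynomials argument finishes the proof cleanly.
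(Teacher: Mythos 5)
Your proposal follows essentially the same route as the paper: identify the rescaled holomorphic kernel $L_n$ as a reproducing kernel for a polynomial subspace of $L^2(\mu_0)$, show the limiting space embeds contractively in $L^2_a(\mu_0)$ (Theorem~\ref{pthm}), pass the reproducing identity to the limit, and conclude by density. The idea is sound, but you overcomplicate the argument by failing to exploit the paper's normalization. Under the convention $\modul=1$ and $r_n=n^{-1/2k}$, homogeneity of $Q_0$ and $nr_n^{2k}=1$ give the \emph{exact identity} $nQ(r_nz)=Q_0(z)+\re\left(c\,z^{2k}\right)$ for every $n$ (this uses $Q_1=0$); the rescaled weight does not depend on $n$ at all. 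Consequently the spaces $\calH_n=\{q\,e^{-c\,z^{2k}/2}\,;\,q\in\Poly(n)\}$ sit isometrically nested inside $L^2(\mu_0)$, the reproducing identity $\langle f,L_{n,z}\rangle_{\mu_0}=f(z)$ is exact at finite $n$, and no weight-convergence control, dominated-convergence tail estimate, or Gaussian off-diagonal bound is needed to pass to the limit — one only uses local uniform convergence $L_n\to L$ together with the uniform $L^2(\mu_0)$-bound $\|L_{n,z}\|^2=L_n(z,z)$. Your proposal instead treats $nQ(r_nz)$ as merely converging and plans for "Gaussian-type off-diagonal estimates'' to control a tail; this is unnecessary work and obscures the key structural simplification that makes the homogeneous case easy. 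Two smaller slips: the isometry you write, $f\mapsto f\,e^{\tilde c z^{2k}/2}$ from $L^2_a(e^{-Q_0-\re(\tilde c z^{2k})}\,dA)$ to $L^2_a(\mu_0)$, has the wrong sign in the exponent (it should be $f\mapsto f\,e^{-\tilde c z^{2k}/2}$); and when you say the limit "gives $p\in\calH_L$'' it is the weighted polynomial $p\,e^{-cz^{2k}/2}$, not the raw polynomial $p$, that is reproduced, so the density you need is that of $\bigcup_n\calH_n$ in $L^2_a(\mu_0)$, which follows from polynomial density in $L^2_a\bigl(e^{-Q_0-\re(cz^{2k})}\,dA\bigr)$ via the unitary multiplier, not directly from polynomial density in $L^2_a(\mu_0)$.
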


The next result concerns limiting holomorphic kernel $L(z,w)$ which are \textit{rotationally symmetric} in the sense that $L(z,w)=L(ze^{it},we^{it})$ for all real $t$. Equivalently, $L$ is rotationally symmetric if there is an entire function $E$ such that $L(z,w)=E(z\bar{w})$. (We leave the simple verification of this to the reader.)

\begin{mth} \label{mth3} If a bulk singularity at $0$ is dominant radial, then $L=L_0$ for each rotationally symmetric limiting kernel.
\end{mth}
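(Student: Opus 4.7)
My plan is to identify $L$ with $L_0$ by comparing monomial coefficients in their series expansions.

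\textbf{Setup.} By the hypothesised rotational symmetry, I may write $L(z,w)=E(z\bar w)=\sum_{m\ge 0}c_m(z\bar w)^m$, with $c_m\ge 0$ because the one-point function $L(z,z)=\sum_m c_m|z|^{2m}$ is a non-negative radial limit of one-point functions. Since $Q_0$ is radially symmetric, the measure $d\mu_0(z)=e^{-Q_0(\modul z)}\,dA(z)$ is rotation invariant, and the monomials $\{z^m\}_{m\ge 0}$ form an orthogonal basis of $L^2_a(\mu_0)$. Writing $h_m=\|z^m\|^2_{\mu_0}$, the Bergman kernel takes the form $L_0(z,w)=\sum_{m\ge 0}(z\bar w)^m/h_m$. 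The goal is therefore to prove $c_m=1/h_m$ for every $m\ge 0$.

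\textbf{Upper bound.} I would next show that $c_m h_m\le 1$ for each $m$. Introduce the orthonormal system in $L^2(\C,dA)$ given by $f_m(z)=z^m e^{-Q_0(\modul z)/2}/\sqrt{h_m}$. A direct computation using orthogonality of the monomials in $L^2(\mu_0)$ shows that the integral operator $T_K$ with kernel $K$ acts diagonally, $T_K f_m=c_m h_m f_m$. Each finite-$n$ operator $T_{\bfK_n}$ (and hence its rescaled version) is an orthogonal projection on $L^2(dA)$, so a contraction; the uniform convergence on compacts furnished by Lemma~\ref{lema}, combined with Gaussian-type upper bounds for the $K_n$ as in \cite{AKM}, is enough to pass the sesquilinear forms $\langle T_{K_n}f_m,f_m\rangle\to\langle T_Kf_m,f_m\rangle$ to the limit. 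Consequently
\[
c_m h_m=\langle T_Kf_m,f_m\rangle\le 1,
\]
which gives $c_m\le 1/h_m$.

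\textbf{Equality.} To upgrade the inequality $c_m\le 1/h_m$ to equality, I would invoke the rescaled Ward identity from \cite{AKM,AKMW}, applied at the bulk-singular point. In the limit $n\to\infty$ this identity yields a functional equation linking the limiting one-point function $R(z)=E(|z|^2)e^{-Q_0(\modul z)}$ to its Cauchy transform and to $\partial Q_0(\modul\cdot)$. Because $L$ is rotationally symmetric and $Q_0$ is radial, this equation is stable under the circle action and hence decouples monomial by monomial, producing an explicit recursion for the coefficients $c_m$ which, once combined with the upper bound above, forces $c_m=1/h_m$.

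\textbf{Main obstacle.} The delicate step is the rigorous passage of the Ward identity to the limit at a bulk singularity: because $\Lap Q$ vanishes at the origin, the a priori gradient and density estimates used for regular bulk points in \cite{AKM} must be replaced by analogues adapted to the coarser mesoscopic scale $r_n\sim\modul\,n^{-1/(2k)}$. Once the limiting radial Ward equation is established, the coefficient comparison is essentially formal, and the rotational symmetry of $L$ does the remaining work.
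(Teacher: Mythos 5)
Your setup and the broad outline are right: expand $L(z,w)=E(z\bar w)=\sum_{m\ge 0}c_m(z\bar w)^m$, observe that the radial measure $\mu_0$ diagonalizes in the monomial basis so that $L_0(z,w)=\sum_m(z\bar w)^m/h_m$ with $h_m=\|z^m\|^2_{L^2(\mu_0)}$, and then aim to show $c_m=1/h_m$. Your upper bound $c_mh_m\le1$ is also a correct statement; the paper gets it from Theorem \ref{pthm} (contractive embedding of $\calH_*$ in $L^2_a(\mu_0)$, hence $L_0-L$ is a positive matrix, which for rotationally symmetric kernels is equivalent to coefficient--wise positivity), while your operator-theoretic route through the projections $T_{K_n}$ is a legitimate alternative, modulo the omitted weak-convergence bookkeeping. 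The Ward recursion you allude to is also what the paper derives, after an explicit polar-coordinate computation of the Cauchy transform $C(z)$: for each $k\ge1$, either $a_k=0$ or $\sum_{j<k}a_jh_j=k$.

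However, there is a genuine gap at the "Equality" step. The Ward recursion together with $c_mh_m\le1$ does \emph{not} force $c_m=1/h_m$. Indeed, any truncated series $E(z)=\sum_{j<N}z^j/h_j+a_Nz^N$ with $0<a_Nh_N\le1$ satisfies both the recursion and the upper bound (simply take $a_j=0$ for $j>N$; the recursion is vacuous whenever $a_k=0$). To rule out such degenerate polynomial solutions, the paper invokes the a priori asymptotic of Theorem \ref{apth}, part (ii), namely $R(z)=\Lap Q_0(z)\,(1+O(z^{1-k}))$ as $z\to\infty$, equivalently $E(|z|^2)\sim\Lap Q_0(z)e^{Q_0(z)}$; this immediately forbids $E$ from being a polynomial, and only then does the recursion propagate to give $a_j=1/h_j$ for all $j$. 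Your "Main obstacle" paragraph misidentifies where the real work lies: the passage of the Ward identity to the limit is handled in Theorem \ref{01ward} by a fairly routine argument, whereas the essential missing ingredient in your proposal is Theorem \ref{apth}(ii), a nontrivial bulk estimate for the one-point function proved in Section \ref{Sec5} via approximate Bergman projections adapted to the degenerate weight $Q_0$. Without invoking that growth estimate somewhere, your argument does not close.
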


The result was conjectured in \cite{AKM}, Section 7.3.

We do not know whether or not each limiting kernel at a dominant radial bulk singularity is rotationally symmetric. This question seems to be related to the problem of deciding the \textit{translation invariance} of limiting kernels at regular boundary points. See \cite{AKM} for several comments about this, notably the interpretation in terms of a twisted convolution equation in Section 7.1. It is natural to conjecture that the kernel in Theorem \ref{mth2} be equal to the limiting kernel in general, regardless of the nature of a (non-degenerate) bulk singularity.

\begin{figure}[ht]
\begin{center}

\includegraphics[width=.4\textwidth]{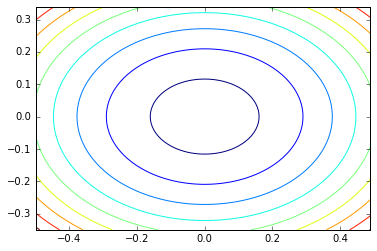}
\hspace{.1\textwidth}
\includegraphics[width=.38\textwidth]{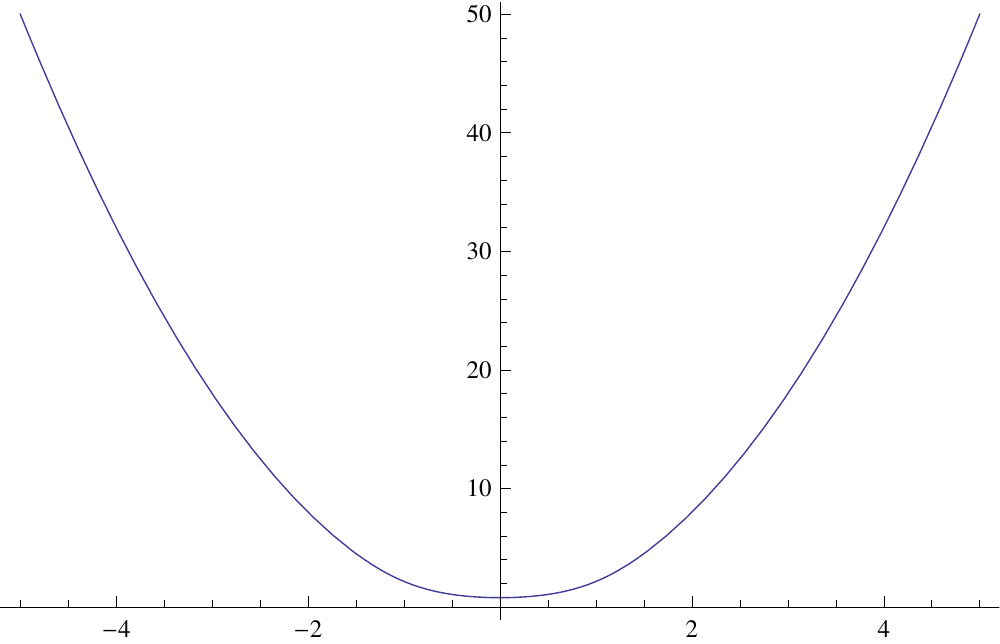}
\end{center}

\caption{Some level curves of $R_0(z)=L_0(z,z)e^{-Q_0(\modul z)}$ for $Q_0(z)=|\,z\,|^{\,4}-|\,z\,|^{\,2}\,\re(\,z^{\,2}\,)/2$
and the graph of $R_0(x)=M_2(x^{\,2})\,e^{-Q_0(\modul x)}$ for the Mittag-Leffler potential $Q_0(z)=\babs{\,z\,}^{\,4}$.}
\end{figure}

\begin{rem}
Note that, as a consequence of the reproducing property of the kernel $L_0$, we have in the situation of the above theorems the \textit{mass-one equation} for a limiting kernel $K$, $\int_\C\babs{\,K(z,w)\,}^{\,2}\, dA(w)=R(z)$.
\end{rem}

\begin{ex} For the Mittag-Leffler potential $Q=|\,\zeta\,|^{\,2k}$ it is possible to calculate the limiting kernel $L$ explicitly, using orthogonal polynomials (see \cite{AKM}, Section 7.3). The result is that
\begin{equation}\label{mll}L(z,w)=M_k(z\bar{w}),\end{equation}
where $$M_k(z)=\modul^{\,2}\,k\sum_0^\infty \frac {(\modul^{\,2} z)^{\,j}}{\Gamma\left(\frac {1+j}k\right)}.$$
The function $M_k$ can be expressed as
$M_k(z)=\modul^{\,2}\,k\, E_{1/k,1/k}(\modul^{\,2} z)$
 where $E_{a,b}$ is the Mittag-Leffler function (see \cite{GKMR})
\begin{equation}\label{twopm}E_{a,b}(z)=\sum_0^\infty\frac {z^{\,j}}{\Gamma(aj+b)}.\end{equation}

Using Theorem \ref{mth2} we can now see that the kernel in \eqref{mll} is universal for potentials of the form
$Q=|\,\zeta\,|^{\,2k}+\re\left(c\,\zeta^{\,2k}\right)$. (We must insist that $|\,c\,|<1$ to insure that the growth assumption of $Q$ at infinity is satisfied, see \eqref{eq:grow}.) By Theorem \ref{mth3} the universality holds also for all rotationally symmetric limiting kernels $L(z,w)=E(z\bar{w})$ for more general potentials of the form
$Q(\zeta)=\babs{\,\zeta\,}^{\,2k}+ \re H(\zeta)+Q_1(\zeta)$.
\end{ex}

\begin{rem} For $k=1$ (i.e. when $0$ is a "regular'' bulk point) the space $L^2_a(\mu_0)$ becomes the standard Fock space, normed by
$\|\,f\,\|^{\,2}=\int_\C|\,f(z)\,|^{\,2}e^{-\,|\,z\,|^{\,2}}\, dA(z)$. In this case we have $R=1$ for the limiting 1-point function, by the well-known Ginibre$(\infty)$-limit. (See e.g. \cite{AKM}.)
\end{rem}

\subsection{Further results} In the following, we consider a potential with canonical decomposition $Q=Q_0+\re H+Q_1$. Following \cite{AKM}, we shall prove auxiliary results which fall in three categories.

\subsubsection*{Ward's equation}
Let $R(z)=K(z,z)$ be a limiting kernel in Lemma \ref{lema}. At a point $z$ where $R>0$, we put
\begin{align}B(z,w)&=\frac {\babs{\,K(z,w)\,}^{\,2}}{K(z,z)}=\frac {\babs{\,L(z,w)\,}^{\,2}}{L(z,z)}e^{-Q_0(\modul w)},\\
\quad C(z)&=\int_\C \frac {B(z,w)}{z-w}\, dA(w).
\end{align}
We call $B(z,w)$ a \textit{limiting Berezin kernel} rooted at $z$; $C(z)$ is its \textit{Cauchy transform}.

\begin{mth} \label{01ward} Let $R$ be a limiting $1$-point function.
\begin{enumerate}[label=(\roman*)]
\item Zero-one law: Either $R=0$ identically, or else $R>0$ everywhere.
\item Ward's equation: If $R$ is non-trivial, we have that
\begin{equation}\label{cowa}\dbar C(z)=R(z)-\Lap_z \left[Q_0(\modul z)\right]-\Lap_z\log R(z).\end{equation}
\end{enumerate}
\end{mth}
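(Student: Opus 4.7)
My plan is to derive Ward's equation (ii) first in a distributional sense on $\C$ by rescaling the finite-$n$ Ward identity (coming from reparametrisation invariance of $Z_n$) and passing to the limit through Lemma \ref{lema}, and then to use this distributional equation together with the Hermitian-entire structure of $L$ to deduce the zero-one law (i); once (i) holds, the pointwise form of (ii) on $\{R>0\}$ is automatic.

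Starting from the variation $\zeta_j\mapsto\zeta_j+t\psi(\zeta_j)$ in $Z_n$ and differentiating at $t=0$ yields, for each $\psi\in C_c^\infty(\C)$,
\[
\E_n\Big[\sum_j \d\psi(\zeta_j)+\sum_{j\neq k}\tfrac{\psi(\zeta_j)}{\zeta_j-\zeta_k}-n\sum_j \psi(\zeta_j)\d Q(\zeta_j)\Big]=0,
\]
which, via $\bfR_{n,2}=\bfR_n\otimes\bfR_n-|\bfK_n|^2$, becomes an integral identity in $\bfR_n$, $\bfK_n$ and $\d Q$. Rescale by $z=\zeta/r_n$ and test against $\phi(z)=\psi(r_n z)$. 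The term $nr_n\,\d Q(r_nz)$ is controlled through the decomposition $Q=Q_0+\re H+Q_1$: since $\d\dbar Q_0=\tilde P$ and the purely (anti-)holomorphic parts of $P$ up to order $2k$ have been subtracted into $\re H$, $Q_0$ is a homogeneous polynomial of degree $2k$, so
\[
nr_n\,\d Q_0(r_nz)=nr_n^{2k}\,\d Q_0(z)\longrightarrow \modul^{2k}\d Q_0(z)=\d_z[Q_0(\modul z)];
\]
the $Q_1$-contribution is $O(nr_n^{2k+1})=o(1)$, and the $\re H$-contribution is cancelled by the unimodular cocycle $c_n$ underlying Lemma \ref{lema}. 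Along the convergent subsequence $c_nK_n\to K$, using the Cauchy--Schwarz bound $B_n(z,w)\le R_n(w)$ together with Gaussian-type decay of $R_n$ (inherited from the factor $e^{-Q_0(\modul w)}$) to control tails and justify convergence of the Cauchy transforms, the rescaled identity becomes
\[
\int (\d\phi)R\, dA+\int\phi(U-C)R\, dA=\int\phi\,\d_z[Q_0(\modul z)]\,R\, dA
\]
for every $\phi\in C_c^\infty(\C)$. Using $\dbar_z(1/(z-w))=\delta_w(z)$ (in our $dA$-normalisation) and integration by parts, this rearranges into the distributional Ward equation
\[
\dbar C=R-\Lap[Q_0(\modul z)]-\Lap\log R
\]
on all of $\C$.

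To deduce (i), assume $R\not\equiv 0$ but $R(z_0)=0$ at some $z_0$. Since correlation kernels are positive semi-definite and cocycles preserve this, $K$ and hence $L$ are positive semi-definite Hermitian-entire. Cauchy--Schwarz $|L(z,w)|^2\le L(z,z)L(w,w)$ at $w=z_0$ forces $L(\cdot,z_0)\equiv 0$, and by Hermitian symmetry $L(z_0,\cdot)\equiv 0$. Iterating Weierstrass division on the Taylor expansion of $L$ around $(z_0,\bar z_0)$ — the iteration terminates at some finite $m\ge 1$ since $L\not\equiv 0$ — yields
\[
L(z,w)=(z-z_0)^m(\bar w-\bar z_0)^m\,L^{(m)}(z,w),\qquad L^{(m)}(z_0,z_0)>0.
\]
Consequently $B(z,w)=|L^{(m)}(z,w)|^2|\bar w-\bar z_0|^{2m}e^{-Q_0(\modul w)}/L^{(m)}(z,z)$ extends real-analytically across $z=z_0$, and its super-polynomial decay in $w$ makes $\dbar C$ smooth there. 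On the other hand $R(z)\asymp|z-z_0|^{2m}$ near $z_0$, so $\Lap\log R=m\delta_{z_0}+(\text{smooth})$ distributionally carries a positive Dirac mass at $z_0$, contradicting the distributional Ward equation. Hence $R>0$ everywhere, giving (i), and (ii) follows pointwise.

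The main obstacle is step one: uniform-in-$n$ tail estimates on the rescaled kernels $K_n,R_n,B_n$ strong enough to pass the Cauchy transforms through to the limit, together with the careful accounting of the cocycle $c_n$ that cancels $\re H$ in $nr_n\,\d Q(r_nz)$ without disturbing the rest. Once the distributional Ward equation is secured, the singularity analysis yielding the zero-one law is essentially elementary.
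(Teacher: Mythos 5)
Your overall strategy coincides with the paper's: derive a rescaled distributional Ward equation, then rule out a zero $z_0$ of $R$ by producing a point mass in $\Lap\log R$ that the left-hand side $\dbar C$ cannot match. Two remarks on the details.

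First, your route to the zero-one law is a genuine variant: instead of the paper's single factorization $R=|z-z_0|^2\tilde R$ combined with the logarithmic subharmonicity of $L(z,z)$ (Lemmas \ref{lem:logsub} and \ref{zjick}), you iterate Weierstrass division to write $L=(z-z_0)^m(\bar w-\bar z_0)^mL^{(m)}$ with $L^{(m)}(z_0,z_0)>0$. For the iteration to be legitimate you must observe at each step that $L^{(j)}$ is again a positive matrix (take $\beta_i=\alpha_i/(z_i-z_0)$ in the positivity form), so that $L^{(j)}(z_0,z_0)=0$ again forces $L^{(j)}(\cdot,z_0)\equiv 0$; otherwise the Cauchy--Schwarz step does not carry over. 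With that observation your termination argument is correct, and it buys you a cleaner local picture ($\Lap\log\tilde R$ smooth near $z_0$ rather than merely a positive measure). Also, your invocation of the cocycle to "cancel'' the $\re H$-contribution in $nr_n\,\d Q(r_nz)$ is not quite what happens: at the level of $\d Q$ that term contains divergent pieces; they drop out only after applying $\dbar$, because $\Lap\re H=0$. The paper differentiates the identity before taking limits for precisely this reason.

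Second, and this is the substantive gap you yourself flag: the crucial input that the Cauchy transform $C$ is bounded on compacts off the zero set (equivalently, that $\dbar C$ carries no atom at $z_0$) is asserted via "super-polynomial decay,'' but not proved. In the paper this is Lemma \ref{ubdd}, which rests on the uniform pointwise bound $K_n(w,w)\le M_\compset$ from Lemma \ref{lem:bound} (to control $B_n(z,w)\le K_n(w,w)$ on the near range) together with the mass-one inequality \eqref{mass:ineq} (to control the far range), and which also justifies passing $C_{n_\ell}\to C$ in the sense of distributions so that the limit Ward equation holds across $Z$. Without this step the contradiction at $z_0$ is not secured, so this is where your sketch most needs to be completed; once it is, both (i) and the pointwise form of (ii) follow as you describe.
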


As $n\to\infty$ it may well happen that $R_n\to 0$ locally uniformly (if the singularity at $0$ is in the exterior of the droplet).

\begin{figure}[ht]
\begin{center}

\includegraphics[width=.4\textwidth]{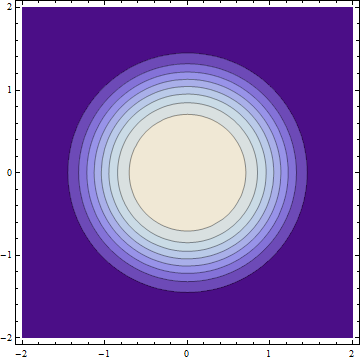}
\hspace{.1\textwidth}
\includegraphics[width=.4\textwidth]{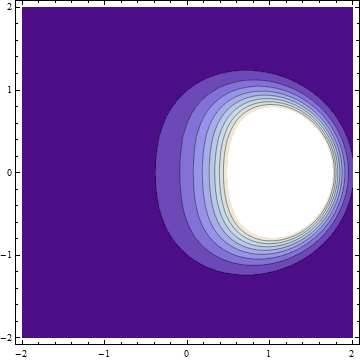}
\end{center}

\caption{Some level curves of the Berezin kernel $B(z,w)$ rooted at $z=0$ and $z=1$
pertaining to $Q_0(z)=\babs{\,z\,}^{\,4}$.}

\end{figure}

\subsubsection*{Apriori estimates} To rule out the possibility of trivial limiting kernels, we shall use the following result.

\begin{mth} \label{apth} Let $R$ be any limiting kernel, and let $R_0(z)=L_0(z,z)e^{-Q_0(\modul z)}$ where $L_0$ is the Bergman kernel
of the space $L^2_a(\mu_0)$. Then
\begin{enumerate}[label=(\roman*)]
\item \label{r0p} $R_0(z)=\Lap_z [Q_0(\modul z)]\cdot (1+O(z^{1-k}))$, as $z\to\infty$,
\item \label{r1p} $R(z)\,\,\,=\Lap_z [Q_0(\modul z)]\cdot (1+O(z^{1-k}))$, as $z\to\infty$.
\end{enumerate}
\end{mth}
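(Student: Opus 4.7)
The key algebraic observation is that since $P$ is the Taylor polynomial of $Q$ of degree $2k$ and $\Lap Q$ vanishes to order exactly $2k-2$ at the origin, we have $\Lap P=\tilde{P}$ identically; combined with the harmonicity of $\re H$, this yields $\Lap Q_0=\tilde{P}$ and hence
\begin{equation*}
\Lap_z[Q_0(\modul z)]=\modul^{\,2k}\,\tilde{P}(z),
\end{equation*}
a positive homogeneous polynomial of degree $2k-2$ bounded below by $c|z|^{\,2k-2}$ at infinity. Both (i) and (ii) thus reduce to showing that $R_0$ and $R$ agree with $\modul^{\,2k}\tilde{P}(z)$ to a multiplicative error of size $|z|^{\,1-k}$ as $|z|\to\infty$.

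For part (i), I would apply Hörmander $\dbar$-techniques to the Fock-type space with weight $\phi(z):=Q_0(\modul z)$. Since $\Lap\phi(z)\asymp|z|^{\,2k-2}$, the natural length-scale at $z$ is $\ell(z):=[\Lap\phi(z)]^{-1/2}\asymp|z|^{\,1-k}$. For the upper bound on $L_0(z,z)$ I would start from the submean-value inequality for $f\in L^{2}_a(\mu_0)$,
\begin{equation*}
|f(z)|^{\,2}\,e^{-\phi(z)}\,\le\,\frac{C}{\ell(z)^{\,2}}\int_{D(z,\ell(z))}|f(w)|^{\,2}\,e^{-\phi(w)}\,e^{\phi(w)-\phi(z)}\,dA(w),
\end{equation*}
and absorb the factor $e^{\phi(w)-\phi(z)}$ via the Taylor expansion $\phi(w)=\phi(z)+2\re[\d\phi(z)(w-z)]+\Lap\phi(z)|w-z|^{\,2}+\rho(w,z)$; on the disk $D(z,\ell(z))$ the cubic remainder satisfies $|\rho|\lesssim|z|^{\,2k-3}\ell(z)^{\,3}\asymp|z|^{\,-k}$, yielding $L_0(z,z)e^{-\phi(z)}\le\Lap\phi(z)(1+O(z^{\,1-k}))$ after taking the supremum over $f$. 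For the matching lower bound I would construct a peaked test function by truncating an approximate-holomorphic extension of $e^{\phi/2}$ to a disk of radius $\ell(z)$ about $z$ and correcting via Hörmander's weighted $L^{\,2}$-estimate for $\dbar$ (applicable since $\Lap\phi(z)>0$ for $|z|$ large). The bound $L_0(z,z)\ge|f(z)|^{\,2}/\|f\|^{\,2}$ then yields the matching estimate.

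For part (ii), let $R$ be any non-trivial limiting one-point function, obtained as the uniform-on-compacts limit of $R_{n_j}(z)=K_{n_j}(z,z)$ along a subsequence. Two bounds are needed. The upper bound $R(z)\le R_0(z)$ rests on the pre-limit inequality $K_n(z,z)\le K_n^{\#}(z,z)$, where $K_n^{\#}$ denotes the Bergman kernel of the polynomial subspace $\{p:\deg p<n\}$ with weight $e^{-nQ}$, combined with the standard locally-uniform convergence of $K_n^{\#}$ to $L_0$ after rescaling; comparison with (i) then gives the upper half of the claimed estimate. For the lower bound I would invoke the bulk one-point asymptotic $\bfR_n(\zeta)=n\Lap Q(\zeta)(1+o(1))$ valid on compact subsets of $\Int S$ on which $\Lap Q$ is bounded below (see \cite{A}). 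After substituting $\zeta=r_n z$ and using $\Lap Q(r_n z)=r_n^{\,2k-2}\tilde{P}(z)(1+O(r_n|z|))$ together with $nr_n^{\,2k}=\modul^{\,2k}(1+O(n^{-1/2k}))$, one obtains $R_n(z)\ge\modul^{\,2k}\tilde{P}(z)(1-o(1))$ uniformly on any annulus $c\le|z|\le\eps r_n^{-1}$. Passing to the limit $n_j\to\infty$ and then letting $|z|\to\infty$ yields the matching lower bound, the precise rate $O(z^{\,1-k})$ being enforced through (i), the mass-one equation $\int|K(z,w)|^{\,2}\,dA(w)=R(z)$, and Ward's equation of Theorem \ref{01ward}.

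The principal obstacle is the lower bound in (ii): the locally uniform convergence $R_{n_j}\to R$ controls the limit only on compact subsets of $\C$, whereas the claim concerns the behaviour as $z\to\infty$. The trick is a diagonal argument—allow $z$ to grow moderately with $n$ while keeping $|z|\le\eps r_n^{-1}$ so that the bulk density approximation for $\bfR_n$ remains valid—and then use the mass-one and Ward equations to propagate the pre-limit estimate to the actual limit $R$. Quantifying the error sharply at the rate $O(z^{\,1-k})$ is the main bookkeeping challenge, and it reduces ultimately to the same $\dbar$-scale $\ell(z)\asymp|z|^{\,1-k}$ driving part (i).
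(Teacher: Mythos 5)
Your broad strategy for part \ref{r0p}---approximate Bergman reproduction at the H\"ormander scale $\ell(z)=[\Lap\phi(z)]^{-1/2}\asymp|z|^{1-k}$, with a peak function for the lower bound and a Taylor-expanded weight for the upper bound---is in the same spirit as the paper's proof, which constructs an explicit approximate kernel $L_0^\sharp(z,w)=[\d_1\dbar_2 A_0](z,w)e^{A_0(z,w)}$ and a self-correcting bootstrap. However, your upper-bound sketch via the sub-mean-value inequality does not by itself deliver the \emph{sharp} constant $\Lap\phi(z)$: integrating a subharmonic square over $D(z,\ell(z))$ only yields $L_0(z,z)e^{-\phi(z)}\le C\,\Lap\phi(z)$ with an unspecified $C$, because the Gaussian normalization $\int_\C e^{-\Lap\phi(z)|w-z|^2}dA=1/\Lap\phi(z)$ requires integrating over all of $\C$, not a fixed-radius disk. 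Getting the multiplicative constant to be $1+O(z^{1-k})$ is precisely what forces the paper to run the approximate-projection argument (Lemmas \ref{uv} and \ref{ho:lem}) and then bootstrap the resulting quadratic inequality $\big||z|^{2k-2}c(z)-R_0(z)\big|\le C_1\sqrt{R_0(z)}+C_2|z|^{k-1}$. Your sketch would need to be tightened into essentially this machinery.

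The proposed proof of part \ref{r1p} has a genuine gap. You appeal to the bulk one-point asymptotic $\bfR_n(\zeta)=n\Lap Q(\zeta)(1+o(1))$, which is known on compact subsets of $\Int S$ where $\Lap Q$ is bounded \emph{below}, and then substitute $\zeta=r_n z$. But $\zeta=r_n z\to 0$ and $\Lap Q(r_n z)\sim r_n^{2k-2}\Lap Q_0(z)\to 0$, so the region of interest is never uniformly bounded away from the singularity; the regular-bulk asymptotics do not apply in this shrinking regime (if they did, the whole theorem would be a triviality). There is a further circularity: you invoke the mass-one \emph{equation} $\int|K(z,w)|^2dA(w)=R(z)$, but before the apriori estimate is established one only has the mass-one \emph{inequality} \eqref{mass:ineq}---the equality is stated in the paper as a consequence of the universality theorems, whose proofs depend on part \ref{r1p}. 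The paper instead proves \ref{r1p} directly by rerunning the approximate-projection argument with the unrescaled weight $e^{-nQ}$ over the annulus $r_n\le|\zeta|\le r_n\log n$ (Lemmas \ref{mack} and \ref{bp1}), substituting the obstacle function $\check{Q}$ so that H\"ormander's estimate applies globally, and then passing to the rescaled inequality \eqref{fup}; the ``diagonal argument'' you gesture at is exactly this, but it must be carried out with that machinery rather than borrowed from the regular-bulk theory.
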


Part \ref{r0p} depends on an estimate of the Bergman kernel for the space $L^2_a(\mu_0)$. Related estimates
valid when $Q_0$ is a function satisfying uniform estimates of the type $0<c\le\Lap Q_0\le C$ are found in Lindholm's paper \cite{L}.

In our situation, the function $\Lap Q_0$ takes on all values between $0$ and $+\infty$, which means that the results from \cite{L} are not directly applicable. It has turned out convenient to include an elementary discussion for the case at hand, following the method of "approximate Bergman projections'' in the spirit of \cite{AKM}, Section 5. This has the advantage that proof of part \ref{r1p} follows after relatively simple modifications.

\begin{rem} Part \ref{r0p} of Theorem \ref{apth} seems to be of some relevance for the investigation of density conditions for sampling and interpolation in Fock-type spaces $L^2_a(\mu_0)$; see the recent paper \cite{FGHKR}, Remark 5.6. (A very general result of this sort was obtained by different methods in the paper \cite{MMOC}, where the hypothesis on the "weight" $Q_0$ is merely that the Laplacian $\Lap Q_0$ be a doubling measure.)
\end{rem}

\begin{rem}
In the case $Q=|\,z\,|^{\,2\lambda}$, the asymptotic formula in Theorem \ref{apth} \ref{r0p} has an alternative proof by more classical methods, using an asymptotic expansion for the function $M_\lambda(z)$ as $z\to\infty$ (\cite{GKMR}, Section 4.7). The formula \ref{r0p} can be recognized as giving the leading term in that expansion.
\end{rem}

\subsubsection*{Positivity} Recall that a Hermitian function $K$ is called a \textit{positive matrix} if
$$\sum_{i,j=1}^N\alpha_i\bar{\alpha}_j K(z_i,z_j)\ge 0$$ for all points $z_j\in\C$ and all complex scalars $\alpha_j$. It is clear that each limiting (holomorphic) kernel is a positive matrix.

\begin{mth} \label{pthm} Let $L$ be a limiting holomorphic kernel. Then $L$ is the Bergman kernel for a Hilbert space $\calH_*$ of entire functions which sits contractively in $L^2_a(\mu_0)$. Moreover, $L_0-L$ is a positive matrix.
\end{mth}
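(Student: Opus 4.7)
The plan is to reduce the positivity $L_0-L\ge 0$ (equivalently, by Aronszajn's inclusion theorem, the contractive containment $\calH_*\hookrightarrow L^2_a(\mu_0)$) to the trivial finite-$n$ fact that the polynomial subspace sits isometrically inside a weighted Fock space, then transfer through the limit via Fatou's lemma. First observe that $L$ is a positive matrix: each $c_nK_n$ is positive as a correlation kernel of a determinantal process modified by a unimodular cocycle, and the factor $e^{-Q_0(\modul z)/2-Q_0(\modul w)/2}$ in \eqref{Mod} has the product form $g(z)\overline{g(w)}$ with $g(z)=e^{-Q_0(\modul z)/2}$ real-valued, so positivity of $K$ transfers to $L$. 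By the Moore--Aronszajn theorem, $L$ is the reproducing kernel of a uniquely determined Hilbert space $\calH_*$ of entire functions.

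For the finite-$n$ setup, decompose $\bfK_n=L_n^{\mathrm{pol}}\,e^{-nQ/2-nQ/2}$, where $L_n^{\mathrm{pol}}$ is the reproducing kernel of the polynomials $\calP_n$ in the weighted inner product of $L^2(e^{-nQ}dA)$, and introduce the holomorphic twist $\gamma_n(z):=e^{-nH(r_n z)/2}$. Rescaling polynomials $p\mapsto r_n\,p(r_n\cdot)$ and twisting by $\gamma_n$ produces a finite-dimensional subspace $\hat\calP_n$ of $L^2_a(\tilde\mu_n)$, where $d\tilde\mu_n:=e^{-nQ_0(r_n w)-nQ_1(r_n w)}dA$, whose reproducing kernel is
\[\hat L_n(z,w)=\gamma_n(z)\overline{\gamma_n(w)}\cdot r_n^{\,2}\,L_n^{\mathrm{pol}}(r_n z,r_n w).\]
Using homogeneity of $Q_0$ (so $nQ_0(r_n w)\to Q_0(\modul w)$) and $Q_1(\zeta)=O(|\zeta|^{\,2k+1})$ near $0$ (so $nQ_1(r_n w)\to 0$), the density $d\tilde\mu_n/dA$ converges pointwise to $e^{-Q_0(\modul w)}=d\mu_0/dA$. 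Moreover a direct calculation shows $\hat L_n$ equals the cocycle-modified rescaled correlation kernel of Lemma \ref{lema} divided by $e^{-Q_0(\modul z)/2-Q_0(\modul w)/2}$ (up to an $o(1)$ factor coming from $nQ_0(r_n\cdot)-Q_0(\modul\cdot)$ and $nQ_1(r_n\cdot)$), so $\hat L_n\to L$ locally uniformly along the chosen subsequence.

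For $f=\sum_j\alpha_j L(\cdot,z_j)\in\calH_*$ a finite linear combination, set $f_n:=\sum_j\alpha_j\,\hat L_n(\cdot,z_j)\in\hat\calP_n$. Then $f_n\to f$ locally uniformly and, by the reproducing property at level $n$,
\[\|f_n\|^2_{L^2(\tilde\mu_n)}=\|f_n\|^2_{\hat\calP_n}=\sum_{i,j}\alpha_j\bar\alpha_i\,\hat L_n(z_j,z_i)\;\longrightarrow\;\sum_{i,j}\alpha_j\bar\alpha_i\,L(z_j,z_i)=\|f\|^2_{\calH_*}.\]
Since $|f_n(w)|^2\,d\tilde\mu_n/dA\to |f(w)|^2\,d\mu_0/dA$ pointwise and the integrands are non-negative, Fatou's lemma yields $\|f\|^2_{L^2(\mu_0)}\le\liminf\|f_n\|^2_{L^2(\tilde\mu_n)}=\|f\|^2_{\calH_*}$. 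A standard Cauchy-completion argument extends this contractive bound from finite linear combinations to all of $\calH_*$, yielding the contractive containment $\calH_*\hookrightarrow L^2_a(\mu_0)$; by Aronszajn's inclusion theorem this is equivalent to $L_0-L\ge 0$.

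The crux is the identification in the middle step: the specific holomorphic twist $\gamma_n=e^{-nH(r_n\cdot)/2}$ absorbs the divergent harmonic part of the rescaled weight $nQ(r_n\cdot)$, turning $\hat L_n$ into a bona fide reproducing kernel inside a weighted Bergman space whose measure converges pointwise to $\mu_0$. This is the same circle of ideas behind the decomposition $Q=Q_0+\re H+Q_1$ and the cocycle modification in Lemma \ref{lema}; once it is in place, the Fatou step is routine and the remainder is bookkeeping.
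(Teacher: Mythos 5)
Your proposal is correct and follows essentially the same route as the paper: represent $L$ as the locally uniform limit of reproducing kernels $L_n$ of the twisted polynomial subspaces $\calH_n\subset L^2_a(\tilde\mu_n)$ (this is exactly Lemma \ref{lem:compact}, which you reconstruct rather than cite), then apply Fatou's lemma to finite linear combinations $\sum\alpha_j L(\cdot,z_j)$ to obtain the contractive embedding $\calH_*\hookrightarrow L^2_a(\mu_0)$, and finally invoke Aronszajn. One small imprecision worth noting: the identity $\hat L_n = L_n$ is exact, not "up to an $o(1)$ factor" --- with the normalization $\modul=1$, homogeneity gives $nQ_0(r_n z)=nr_n^{2k}Q_0(z)=Q_0(z)$ identically, so the only approximation in the whole setup lives in the weight $\tilde\mu_n$ through the term $Q_{1,n}=nQ_1(r_n\cdot)=O(r_n)$, and that is precisely what Fatou is there to handle.
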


Here $L_0$ is the Bergman kernel of $L^2_a(\mu_0)$. It may well happen that the space $\calH_*$ degenerates to $\{0\}$. This is the case when the singularity at $0$ is located in the exterior of the droplet.

\subsubsection*{Comments}

An interesting generalization of our situation is obtained by allowing for a suitably scaled logarithmic singularity at a (regular
or singular) bulk point. More precisely, if $\tilde{Q}$ is a smooth in a neighbourhood of $0$, we consider a
potential of the form $Q(\zeta)=\tilde{Q}(\zeta)+2(c/n)\log\babs{\,\zeta\,}$ where $c<1$ is a constant.
Rescaling by $z=r_n^{-1}\zeta$ where $c+n\int_{D(0,r_n)}\Lap \tilde{Q}\, dA=1$, we find $r_n\sim (1-c)^{1/2k}\modul n^{-1/2k}$ as $n\to\infty$, where $2k-2$ is the type of $\tilde{Q}$ and $\modul=\modul[\tilde{Q},0]$.
It is hence natural to define the dominant part by $Q_0(z):=c'\,\tilde{Q}_0(z)+2c\log|\,z\,|$, where
$\tilde{Q}_0$ is the dominant part of $\tilde{Q}$ and $c'$ a suitable constant depending on $c$.
In particular, if $Q(\zeta)=c_1|\,\zeta\,|^{\,2\lambda}+(c_2/n)\log\babs{\,\zeta\,}$ with suitable $c_1,c_2>0$,
the dominant part becomes of the type
\begin{equation}\label{pppart}Q_0(z)=r^{\,2\lambda}+2\left(1-\frac \lambda\mu\right)\log r,\qquad r=|\,z\,|,\end{equation}
for suitable constants $\lambda$ and $\mu$.
The potential \eqref{pppart}
was introduced in the paper \cite{AK}, where all rotationally symmetric solutions to the corresponding Ward equation \eqref{cowa} were found. Recently, certain potentials of this form were studied in a context of Riemann surfaces, in a scaling limit about certain types of singular points (conical singularities and branch points) see \cite{LCCW}.
We will return to this issue in a forthcoming paper \cite{AKS}.

As in \cite{AKM}, Section 7.7, we note that it is possible to introduce an "inverse temperature'' $\beta$ into the setting; the case at hand then corresponds to $\beta=1$. For general $\beta$, the rescaled process $\{z_j\}_1^n$ is no longer determinantal, but the rescaled intensity functions $R_{n,p}^\beta$ make perfect sense. As $n\to\infty$, we  \textit{formally} obtain a "Ward's equation at a bulk singularity'' of the form
\begin{equation}\label{bewa}\dbar C^\beta(z)=R^\beta(z)-\Lap_z\left[Q_0(\modul z)\right]-\frac 1 \beta\Lap_z\log R^\beta(z).\end{equation}
Here $C^\beta(z)$ should be understood as the Cauchy transform of the $\beta$-Berezin kernel $B^\beta(z,w)=(R_1^\beta(z)R_1^\beta(w)-R_2^\beta(z,w))/R_1^\beta(z)$.
The objects in \eqref{bewa} are so far understood mostly on a physical level. We now give a few remarks in this spirit.

First, if $0$ is a regular bulk-point, i.e. if $\Lap Q(0)>0$, then it is believed that $R^\beta=1$ identically, i.e., the right hand side in \eqref{bewa} should vanish. The equation \eqref{bewa} then reflects the fact that the Berezin kernel $B^\beta(z,w)=b^\beta(r)$ depends only on the distance $r=|\,z-w\,|$. When $\beta=1$ one has the well-known identity $b^1(r)=e^{-\,r^{\,2}}$. For other $\beta$ we do not know of an explicit expression, but
it was shown by Jancovici in \cite{J} that
\begin{equation*}b^\beta(r)=b^1(r)+(\beta-1)f(r)+O(\,(\beta-1)^{\,2}\,),\qquad (\beta\to 1)\end{equation*}
where $f$ is a certain explicit function.
 In the bulk-singular case, the kernel $B^\beta(z,w)$ will not just depend on $|\,z-w\,|$, but still it seems natural to expect that we have an expansion of the form
\begin{equation}\label{expii}B^\beta(z,w)=b^1(z,w)+(\beta-1)f(z,w)+O(\,(\beta-1)^{\,2}\,),\qquad (\beta\to 1)\end{equation}
where $b^1(z,w)=\babs{\,L_0(z,w)\,}^{\,2}e^{-Q_0(\modul w)}/L_0(z,z)$, $L_0$ being the Bergman kernel of the space $L^2_a(\mu_0)$. A natural problem, which will not be taken up here, is to determine the function $f(z,w)$ in \eqref{expii}. (A similar investigation at regular boundary points was made recently in the paper \cite{CFTW}.)

For boundary points, the term "singular'' has a different meaning than for bulk points. Indeed, the singular points $p$ (cusps or double points) studied in the paper \cite{AKMW} all satisfy $\Lap Q(p)>0$. An example of a situation
at which $\Lap Q=0$ at a boundary point (at $0$)
is provided by the potential $Q=|\,\zeta\,|^{\,4}-\sqrt{2}\re(\,\zeta^{\,2}\,)$. (The boundary of $S$ is here a "figure 8'' with $0$ at the point of self-intersection, see \cite{BGM}.) A natural question is whether it is possible to define non-trivial scaling limits at (or near) this kind of singular points, in the spirit of \cite{AKMW}.

There is a parallel theory for scaling limits for Hermitian random matrix ensembles. In this situation, the droplet is a union of compact intervals. It is well known that the sine-kernel appears in the scaling limit
about a "regular bulk point'', i.e. an interior point where the density of the equilibrium measure is strictly positive. In a generic case, all points are regular, see \cite{KM}. Special bulk points where the equilibrium density vanishes may be called "singular''; at such points other types of universality classes appear, see \cite{BE,C,CKV,PS}.

Finally, we wish to mention that the investigations in this paper were partly motivated by applications to the distribution of Fekete points close to a bulk singularity (see \cite{A}). This issue will be taken up in a later publication.

\subsection{Plan of the paper} In Section \ref{Sec2} we prove the general structure formula for limiting kernels (Lemma \ref{lema}). We also prove the positivity theorem (Theorem \ref{pthm}).

In Section \ref{Sec3} we prove Ward's equation and the zero-one law (Theorem \ref{01ward}).

In Section \ref{Sec4} we prove the universality results (theorems \ref{mth2} and \ref{mth3}). Our proof of Theorem \ref{mth3} depends on the apriori estimate from Theorem \ref{apth}, part \ref{r1p}.

In the last two sections, we prove the asymptotics for the functions $R_0$ and $R$ in Theorem \ref{apth}. For $R_0$, (part \ref{r0p}) see Section \ref{Sec4.5}; for $R$, (part \ref{r1p}) see Section \ref{Sec5}.

\subsection{Convention} Multiplying the potential $Q$ by a suitable constant, we can in the following assume that the modulus $\modul=1$. In fact, the slightly more general assumption that $\modul=1+O(n^{-1/2k})$ as $n\to\infty$ will do equally well. This means the mesoscopic scale about $0$ can be taken as $r_n=n^{-1/2k}$, where $2k-2$ is the type of the singularity. In the sequel, this will be assumed \textit{throughout}.

\section{Structure of limiting kernels} \label{Sec2}

In this section, we prove Lemma \ref{lema} on the general structure of limiting kernels and the positivity theorem
\ref{pthm}. We shall actually prove a little more: a limiting holomorphic kernel can be written as a subsequential limit of kernels for certain specific Hilbert spaces of entire functions. In later sections, we will use this additional information for our analysis of homogeneous bulk singularities.

\subsection{Spaces of weighted polynomials}
It is well-known that we can take for correlation kernel for the process $\{\zeta_j\}_1^n$ the reproducing kernel for a suitable space of weighted polynomials. Here the "weight'' can either be incorporated into the polynomials themselves, or into the norm of the polynomials. We will use both these possibilities. In the following we shall use the symbol "$\Poly(n)$'' for the linear space of holomorphic polynomials of degree at most $n-1$ (without any topology). We write $\mu_n$ for the measure $d\mu_n=e^{-nQ}\, dA$.

We let $\calP_n$ denote the space $\Poly(n)$ regarded as a subspace of $L^2(\mu_n)$. The symbol $\calW_n$ will denote the set of weighted polynomials $f=p\,e^{-nQ/2}$, ($p\in\Poly(n)$) regarded as a subspace of $L^2=L^2(dA)$. We write
$\bfk_n$ and $\bfK_n$ for the reproducing kernels of $\calP_n$ and $\calW_n$ respectively, and we note that
$$\bfK_n(\zeta,\eta)=\bfk_n(\zeta,\eta)\,e^{-nQ(\zeta)/2-nQ(\eta)/2}.$$

Now suppose that $Q$ has a bulk singularity at the origin, of type $2k-2$ and rescale at the mesoscopic scale by
$$k_n(z,w)=r_n^{\,2}\,\bfk_n(\zeta,\eta),\quad K_n(z,w)=r_n^{\,2}\,\bfK_n(\zeta,\eta),\qquad (z=r_n^{-1}\zeta,\,\,w=r_n^{-1}\eta).$$

\subsection{Limiting holomorphic kernels} Suppose that there is a bulk singularity of type $2k-2$ at the origin.
Consider the canonical decomposition $Q= Q_0 +\re H+ Q_1$ and write
$h=\re H$. Thus $h$ is of degree at most $2k$, $Q_0$ is a positive definite homogeneous polynomial of degree $2k$, and
$Q_1(\zeta)=O(\babs{\,\zeta\,}^{\,2k+1})$ as $\zeta \to 0$.
\begin{lem} \label{lem:bound}
For each compact subset $\compset$ of $\C$, there is a constant $C=C(\compset)$ such that
$K_n(z,z) \leq C$ for $z \in \compset$.
\end{lem}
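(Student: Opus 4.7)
The plan is to use the extremal characterization of the reproducing kernel together with the submean value inequality, after absorbing the harmonic part of $Q$ into an entire factor.

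Since $\bfK_n$ is the reproducing kernel of $\calW_n \subset L^2(dA)$, we have
\[
\bfK_n(\zeta,\zeta) = \sup\bigl\{|f(\zeta)|^{\,2} : f \in \calW_n,\ \|f\|_{L^2(dA)} \le 1\bigr\},
\]
and hence $K_n(z,z) = r_n^{\,2}\,\bfK_n(\zeta,\zeta)$ is bounded by $r_n^{\,2}$ times the analogous supremum. Any $f \in \calW_n$ has the form $f = p\,e^{-nQ/2}$ with $p \in \Poly(n)$, so using the canonical decomposition $Q = Q_0 + \re H + Q_1$ I would write
\[
f(\zeta) = \tilde{p}(\zeta)\,e^{-n(Q_0(\zeta)+Q_1(\zeta))/2}, \qquad \tilde{p}(\zeta) := p(\zeta)\,e^{-nH(\zeta)/2}.
\]
The key observation is that $\tilde p$ is entire (as $H$ is a polynomial), so $|\tilde p|^{\,2}$ is subharmonic.

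Next I would apply the submean value inequality to $|\tilde p|^{\,2}$ on a disk $D(\zeta,\rho)$ with radius $\rho = c\,r_n$ (for a fixed small $c>0$), getting
\[
|\tilde p(\zeta)|^{\,2} \le \frac{1}{\rho^{\,2}}\int_{D(\zeta,\rho)} |\tilde p(w)|^{\,2}\,dA(w),
\]
and then multiply by $e^{-n(Q_0(\zeta)+Q_1(\zeta))}$ and rewrite the right hand side back in terms of $|p|^{\,2}e^{-nQ}$. This produces a weight factor $\exp\bigl(n[(Q_0+Q_1)(w) - (Q_0+Q_1)(\zeta)]\bigr)$ inside the integral.

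The main (mild) estimate is to show this weight factor is uniformly bounded. For $z \in V$ compact we have $\zeta = r_n z$ and $w \in D(\zeta,\rho)$, so $|w|, |\zeta| \le C(V)\, r_n$; since $Q_0$ is homogeneous of degree $2k$ and $Q_1(\zeta) = O(|\zeta|^{\,2k+1})$, both $n|Q_0(w)|$ and $n|Q_1(w)|$ (and similarly at $\zeta$) are dominated by $n\,r_n^{\,2k} \cdot \const(V) = \const(V)$ using the normalization $r_n = n^{-1/2k}$. Consequently
\[
|f(\zeta)|^{\,2} \le \frac{C(V)}{\rho^{\,2}} \int_{D(\zeta,\rho)} |p(w)|^{\,2}\,e^{-nQ(w)}\,dA(w) \le \frac{C(V)}{c^{\,2}\,r_n^{\,2}}\|f\|_{L^2(dA)}^{\,2},
\]
and multiplying by $r_n^{\,2}$ yields the claimed uniform bound on $K_n(z,z)$. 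The only real obstacle is the bookkeeping for the weight factor; everything else is standard Bergman-space machinery.
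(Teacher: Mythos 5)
Your proof is correct, and the overall strategy coincides with the paper's: use the extremal characterization of the reproducing kernel, exploit a subharmonicity/submean-value estimate on a mesoscopic disk, and then control the residual weight factor. The one genuine difference is how you arrange the subharmonicity. The paper works with $g_n = u\,e^{-\tilde Q_n/2+\alpha|z|^2/2}$ (rescaled variables), choosing $\alpha>\sup_{\compset_\delta}\Lap Q_0$ so that $\Lap\log|g_n|^2\ge-\Lap\tilde Q_n+\alpha>0$ for large $n$; this approach does not need the canonical decomposition at all, only that $\Lap\tilde Q_n$ is bounded on a slightly enlarged compact. You instead peel off the holomorphic polynomial $H$ from $Q=Q_0+\re H+Q_1$, making $\tilde p=p\,e^{-nH/2}$ entire so that $|\tilde p|^2$ is subharmonic outright, and then bound the residual factor by noting $n(Q_0+Q_1)=O(1)$ on a disk of radius $O(r_n)$ about $\zeta=r_nz$. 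Both versions need the same mesoscopic weight estimate (you bound $n(Q_0+Q_1)$ directly; the paper bounds the compensating Gaussian factor $e^{\alpha(|z|+\delta)^2}$), so the arithmetic is essentially the same. One cosmetic slip: the displayed ``identity'' $f=\tilde p\,e^{-n(Q_0+Q_1)/2}$ holds only in modulus (the two sides differ by the unimodular factor $e^{-in\,\im H/2}$), but since the argument only uses $|f|^2=|\tilde p|^2e^{-n(Q_0+Q_1)}$ this does not affect anything.
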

\begin{proof}
Let $\tilde{\calW}_n$ denote the space of all "rescaled'' weighted polynomials $p \cdot e^{-\tilde{Q}_n/2}$ where $p\in \Poly(n)$ and $\tilde{Q}_n(z)= nQ(r_n z)$. Regarding $\tilde{\calW}_n$ as a subspace of $L^2$, we recognize that $K_n$ is the reproducing kernel of $\tilde{\calW}_n$. Hence
\begin{align}\label{ker:sup}
K_n(z,z)=\sup\{\,|\,f(z)\,|^{\,2} \, ; \, f \in \tilde{\calW}_n,\, \|\, f\, \|\leq 1\, \}.
\end{align}
Fix a number $\delta>0$ and let $\compset_{\delta}= \{\, z \in \C \, ;\, \dist(z,\compset)\leq \delta\, \}$. We also pick a number $\alpha > \sup \{\, \Lap Q_0(z)\, \ ; \ z \in \compset_{\delta}\, \}$. Now let $u$ be an analytic function in a neighbourhood of $\compset_{\delta}$ and consider the function $g_n(z)= u(z)\, e^{-\,\tilde{Q}_n (z)/2+\,\alpha\, \babs{\,z\,}^{\,2}/2}$. Note that
\begin{equation*}\Lap \tilde{Q}_n(z) = n\, r_n^{\,2} (\Lap Q_0(r_n z) + \Lap Q_1(r_n z))=n\, r_n^{\,2k}\,
\Lap Q_0(z)+O(n \,r_n^{\,2k+1}),\quad (n\,r_n^{\,2k}=1).
\end{equation*}
 Hence
  $\Lap \log |\,g_n(z)\,|^{\,2} \geq  - \Lap \tilde{Q}_n(z) + \alpha>0$ for all sufficiently large $n$ and all $z \in \compset_{\delta}$. Thus $\babs{\,g_n\,}^{\,2}$ is subharmonic in $\compset_\delta$, so for $z \in \compset$
\begin{align*}
	\babs{\,g_n(z)\,}^{\,2} &\leq \delta^{-2} \int_{D(z,\delta)} \babs{\,g_n (w)\,}^{\,2} dA(w)\\
	&= \delta^{-2} e^{\,\alpha\, \left(|\,z\,|+\delta\right)^{\,2}}
	\int_{D(z,\delta)} \babs{\,u(w)\,}^{\,2} e^{- \tilde{Q}_n(w)} dA(w).
\end{align*}
We obtain
\begin{align} \label{sub:ineq}
	\babs{\,u(z)\,}^{\,2} e^{-\tilde{Q}_n(z)} \leq \delta^{-2} e^{\,\alpha\,\left(\,2\,M_{\compset}\,\delta\,+\, \delta^{\,2}\,\right)} \int_{D(z,\delta)} |\,u\,|^{\,2}e^{-\tilde{Q}_n} dA,
\end{align}
where $M_{\compset}=\sup_{z\in \compset}{|\,z\,|}$. By (\ref{ker:sup}) and (\ref{sub:ineq}), $K_n(z,z)$ is bounded for $z\in\compset$.
\end{proof}

We now use the holomorphic polynomial $H$ in the decomposition $Q=Q_0+\re H+Q_1$ to
define a Hermitian-entire function ("rescaled holomorphic kernel'') by
\begin{equation}\label{rhk}L_n(z,w)=r_n^{\,2}\,\bfk_n(\zeta,\eta)\,e^{-n(H(\zeta)+\bar{H}(\eta))/2},\qquad z=r_n^{-1}\zeta,\, w=r_n^{-1}\eta.\end{equation}
Let us write
\begin{equation}H_n(z)=n\,H(r_nz),\qquad Q_{1,n}(z)=n\,Q_1(r_nz),\end{equation}
so that $nQ(r_nz)=Q_0(z)+\re H_n(z)+Q_{1,n}(z)$ and
$$L_n(z,w)=k_n(z,w)\,e^{-H_n(z)/2-\bar{H}_n(w)/2}.$$

Define a Hilbert space of entire functions by
\begin{equation}\label{heff}\calH_n=\{f=q\cdot e^{-H_n/2};\, q\in\Poly(n)\}\end{equation}
equipped with the norm of $L^2(\tilde{\mu}_n)$ where
\begin{equation}\label{apmeas}d\tilde{\mu}_n(z)=e^{-Q_0(z)-Q_{1,n}(z)}\, dA(z).\end{equation}
Observe that $Q_{1,n}=O(r_n)$ as $n\to\infty$ where the $O$-constant is uniform on each given compact subset of $\C$. In particular $\tilde{\mu}_n\to\mu_0$ vaguely where
$d\mu_0=e^{-Q_0}\, dA.$

The following result implies Lemma \ref{lema}; it also generalizes Lemma 4.9 in \cite{AKM}.

\begin{lem} \label{lem:compact} Each subsequence of the kernels $L_n$ has a further subsequence converging
locally uniformly to a Hermitian-entire limit $L$. Furthermore, $L_n$ is the reproducing kernel of the space $\calH_n$, and $L$ satisfies the "mass-one inequality'',
\begin{align} \label{mass:ineq}
	\int \babs{\,L(z,w)\,}^{\,2}\, d\mu_0(w) \leq L(z,z).
\end{align} Finally,
there exists a sequence of cocycles $c_n$ such that each subsequence of $c_nK_n$ converges locally uniformly to a Hermitian function $K$ of the type $K(z,w)=L(z,w)e^{-Q_0(z)/2-Q_0(w)/2}$,
\end{lem}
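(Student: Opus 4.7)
The plan is, in order: (i) identify $L_n$ as the reproducing kernel of $\calH_n$ via the rescaling unitary $V\colon\calP_n\to\calH_n$; (ii) bound $L_n$ on compact sets and extract Hermitian-entire subsequential limits by a normal-families argument; (iii) derive the mass-one inequality by passing to the limit in the reproducing identity on compact sets; (iv) identify the cocycles $c_n$ via a direct algebraic rearrangement of exponentials.

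For (i), consider the linear map $V\colon\calP_n\to\calH_n$ defined by $(Vp)(z)=r_n\,p(r_n z)\,e^{-H_n(z)/2}$. Combining the decomposition $nQ(r_n z)=Q_0(z)+\re H_n(z)+Q_{1,n}(z)$ with the change of variables $\zeta=r_n z$ gives $\|Vp\|_{L^2(\tilde\mu_n)}=\|p\|_{L^2(\mu_n)}$, so $V$ is a unitary isomorphism. Starting from the reproducing identity $p(\zeta)=\int p(\eta)\,\bfk_n(\zeta,\eta)\,d\mu_n(\eta)$, rescaling $\eta=r_n w$, writing $p(r_n w)$ in terms of $V^{-1}f$, and using the clean identity $e^{H_n(w)/2}\,e^{-nQ(r_n w)}=e^{-\overline{H_n(w)}/2}\,e^{-Q_0(w)-Q_{1,n}(w)}$ lets the exponentials collapse to
\[f(z)=\int_\C f(w)\,L_n(z,w)\,d\tilde\mu_n(w),\quad f\in\calH_n,\]
which is the reproducing property for $L_n$ on $\calH_n$.

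For (ii), the identity $L_n(z,z)=K_n(z,z)\,e^{Q_0(z)+Q_{1,n}(z)}$ together with Lemma \ref{lem:bound} and $Q_{1,n}\to 0$ locally uniformly bounds $L_n(z,z)$ on any compact set. The Cauchy--Schwarz inequality for a reproducing kernel, $|L_n(z,w)|^2\le L_n(z,z)\,L_n(w,w)$, then gives a local bound for $L_n$ on $\C\times\C$. Since each $L_n$ is entire in $z$ and in $\bar w$, hence jointly holomorphic in $(z,\bar w)$, Montel's theorem furnishes, out of every subsequence, a further subsequence converging locally uniformly to a Hermitian-entire limit $L$. For (iii), restrict the reproducing identity $\int_\C|L_n(z,w)|^2\,d\tilde\mu_n(w)=L_n(z,z)$ to an arbitrary compact set $W$; uniform convergence $L_n\to L$ and $e^{-Q_{1,n}}\to 1$ on $W$ along the convergent subsequence then yields $\int_W|L(z,w)|^2\,d\mu_0(w)\le L(z,z)$, and letting $W\nearrow\C$ gives the mass-one inequality.

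For (iv), expand $nQ(r_n z)/2=Q_0(z)/2+\tfrac14 H_n(z)+\tfrac14\overline{H_n(z)}+Q_{1,n}(z)/2$ and compare the exponents of $K_n(z,w)=k_n(z,w)\,e^{-nQ(r_n z)/2-nQ(r_n w)/2}$ and $L_n(z,w)=k_n(z,w)\,e^{-H_n(z)/2-\overline{H_n(w)}/2}$. The $z$-part of the difference in the exponents is $\tfrac14(H_n(z)-\overline{H_n(z)})=\tfrac i2\im H_n(z)$, and the $w$-part is $-\tfrac i2\im H_n(w)$. Setting $g_n(z)=e^{-i\,\im H_n(z)/2}$ (unimodular) and $c_n(z,w)=g_n(z)\,\overline{g_n(w)}$ produces a cocycle with
\[c_n(z,w)\,K_n(z,w)=L_n(z,w)\,e^{-Q_0(z)/2-Q_0(w)/2}\,e^{-Q_{1,n}(z)/2-Q_{1,n}(w)/2};\]
since $Q_{1,n}\to 0$ locally uniformly, every subsequence of $c_n K_n$ has a further subsequence converging locally uniformly to $L(z,w)\,e^{-Q_0(z)/2-Q_0(w)/2}$. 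The main obstacle is precisely this cocycle identification: because $H_n$ is non-real, $\re H_n$ does not split as the sum of a function of $z$ and a function of $w$, so one has to separate $H_n$ and $\overline{H_n}$ into real and imaginary parts by hand before the residual ``twist'' becomes recognizable as a cocycle.
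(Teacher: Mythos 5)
Your proposal is correct and follows essentially the same route as the paper: the normal-families argument hinges on the local bound furnished by Lemma \ref{lem:bound} combined with the explicit exponential relation between $K_n$ and $L_n$, the mass-one inequality is extracted from the reproducing identity for $L_n$ (you pass to the limit by restriction to compacts plus a monotone limit, while the paper uses Fatou applied to the identity written for $K_n$ — the two are interchangeable), and the cocycle is identified by the same exponent bookkeeping that isolates $\tfrac{i}{2}\im(H_n(z)-H_n(w))$. Your step (i), using the rescaling unitary $V:\calP_n\to\calH_n$, is a slightly more structural way to see that $L_n$ reproduces on $\calH_n$ than the paper's direct inner-product computation, but the underlying algebra is the same.
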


\begin{proof}
Define a function $E_n(z,w)$ by
\begin{align*}E_n(z,w)&= e^{\,n(H(\zeta)/2+\bar{H}(\eta)/2-Q(\zeta)/2-Q(\eta)/2)}\\
&=e^{-Q_0(z)/2-Q_0(w)/2-Q_{1,n}(z)/2-Q_{1,n}(w)/2
+i\im( H_n(z)-H_n(w))/2}.
\end{align*}
Note that $K_n= L_n E_n$ where $L_n$ is the Hermitian-entire kernel \eqref{rhk}.
Now, if $h=\re H$ then
\begin{align*}
\im(H_n(z)-H_n(w))/2= \sum_{j=1}^{2k} n\,r_n^{\,j}
\im \left(  \frac{\d^j h(0)}{j!} (z^{\,j}-w^{\,j})  \right) .
\end{align*}
We have shown that
$$E_n(z,w)= c_n(z,w)\,e^{-Q_0(z)/2-Q_0(w)/2}\,(1+o(1)),\quad (n\to\infty)$$ where $o(1)\to 0$ locally uniformly on $\C^2$ and $c_n$ is a cocycle:
$$c_n(z,w)=\prod_{j=1}^{2k}\exp\left[i\,n\,r_n^{\,j}\im\left(  \frac{\d^j h(0)}{j!} (z^{\,j}-w^{\,j})  \right)\right].$$

On the other hand, for each compact subset $\compset$ of $\C^2$ there is a constant $C$ such that
$$\babs{\,L_n(z,w)\,}^{\,2} = \babs{\, \frac{K_n(z,w)}{ E_n(z,w)}\,}^{\,2}
\leq C  K_n(z,z)\, K_n(w,w)\, e^{\,Q_0(z)+Q_0(w)}$$
for sufficiently large $n$. By Lemma \ref{lem:bound}, the functions $L_n$ have a uniform bound on $\compset$. We have shown that $\{L_n\}$ is a normal family. We can hence extract a subsequence $\{ L_{n_\ell} \}$, converging locally uniformly to a Hermitian-entire function $L(z,w)$.

Choosing cocycles $c_n$ such that $c_{n} E_n \to e^{-Q_0(z)/2-Q_0(w)/2}$ uniformly on compact subsets as $n\to\infty$, we now obtain that
$$c_{n_\ell}K_{n_\ell}= c_{n_\ell}E_{n_\ell}L_{n_\ell} \to e^{-Q_0(z)/2-Q_0(w)/2} L(z,w)=K(z,w).$$

The reproducing property $\int |\,K_{n_\ell}(z,w)\,|^{\,2} dA(w)= K_{n_\ell}(z,z)$ means that
\begin{align*}
\int \babs{\,E_{n_\ell}(z,w) L_{n_\ell}(z,w)\,}^{\,2} dA(w) =  E_{n_\ell}(z,z) L_{n_\ell}(z,z).
\end{align*}
Letting $\ell \to \infty$, we obtain the mass-one inequality (\ref{mass:ineq}) by Fatou's lemma.

There remains to prove that $L_n$ is the reproducing kernel for the space $\calH_n$. For this, we write $L_{n,w}(z)=L_n(z,w)$ and note that
for an element $f=q\cdot e^{-H_n/2}$ of $\calH_n$ we have
\begin{align*}\langle f,L_{n,w}\rangle_{L^2(\tilde{\mu}_n)}&=\int_\C q(z)\,e^{-H_n(z)/2}\,\bar{L}_n(z,w)\,e^{-Q_0(z)-Q_{1,n}(z)}\,dA(z)\\
&=e^{-H_n(w)/2}\int_\C q(z)\,\bar{k}_n(z,w)\,e^{-nQ(r_nz)}\, dA(z).
\end{align*}
Noting that $k_n$ is the reproducing kernel for the space $\tilde{\calP}_n$ of polynomials of degree at most $n-1$ normed by $\|\,p\,\|^{\,2}=\int_\C\babs{\,p(z)\,}^{\,2}e^{-nQ(r_nz)}\, dA(z)$, we now see that
$$\langle f,L_{n,w}\rangle_{L^2(\tilde{\mu}_n)}=e^{-H_n(w)/2}q(w)=f(w).$$
The proof of the lemma is complete. \end{proof}

\subsection{The positivity theorem} Let $\mu_0$ be the measure $d\mu_0=e^{-Q_0}\, dA$ and define $L_0(z,w)$ to be the Bergman kernel for the Bergman space $L^2_a(\mu_0)$. Let $L=\lim L_{n_\ell}$ be a limiting holomorphic kernel at $0$.

Recall that the kernel $L_n$ is the reproducing kernel for a certain subspace $\calH_n$ of $L^2_a(\tilde{\mu}_n)$, where $\tilde{\mu}_n\to\mu_0$ in the sense that the densities converge uniformly on compact sets, as $n\to\infty$.
See Lemma \ref{lem:compact}.

For $L=\lim L_{n_\ell}$, the assignment $\langle L_z,L_w\rangle_*=L(w,z)$ defines a positive semi-definite inner product on the linear span $\calM$ of the $L_z$'s. In fact, the inner product is either trivial ($L(z,z)=0$ for all $z$), or else it is positive definite: this holds by the zero-one law in Theorem \ref{01ward}, which will be proved in the next section.

By Fatou's lemma, we now see that, for all choices of points $z_j$ and scalars $\alpha_j$,
\begin{align*}\left\|\,\sum_{j=1}^N\alpha_jL_{z_j}\,\right\|_{L^2(\mu_0)}^{\,2}&\le\liminf_{\ell\to\infty}
\sum_{i,j=1}^N\alpha_i\bar{\alpha_j}\int_\C L_{n_\ell}(w,z_i)\bar{L}_{n_\ell}(w,z_j)\, d\tilde{\mu}_{n_\ell}(w)\\
&=\liminf_{\ell\to\infty}\sum_{i,j=1}^N\alpha_i\bar{\alpha}_jL_{n_\ell}(z_i,z_j)=\sum_{i,j=1}^N\alpha_i\bar{\alpha}_j
L(z_i,z_j)\\
&=\left\|\,\sum_{j=1}^N\alpha_jL_{z_j}\,\right\|_*^{\,2}.
\end{align*}
This shows that $\calM$ is contained in $L^2(\mu_0)$ and that the inclusion $I:\calM\to L^2(\mu_0)$ is a contraction.
Hence the completion $\calH_*$ of $\calM$ can be regarded as a contractively embedded subspace of $L_a^2(\mu_0)$.

Since the space $L^2_a(\mu_0)$ has reproducing kernel $L_0(z,w)$, it follows from a theorem of Aronszajn (\cite{Ar}, p. 355) that the difference $L_0-L$ is a positive matrix. The proof of Theorem \ref{pthm} is complete. q.e.d.

\section{Ward's equation and the zero-one law} \label{Sec3}

\subsection{Ward's equation}
Given a limiting kernel $K$ in Lemma \ref{lem:compact}, we recall the definitions
$$R(z)=K(z,z),\quad B(z,w)=\frac {\babs{\,K(z,w)\,}^{\,2}}{K(z,z)},\quad C(z)=\int_\C\frac {B(z,w)}{z-w}\, dA(w).$$

The goal of this section is to prove
Theorem \ref{01ward}, which we here restate in the following form (the case $\modul=1$).

\begin{lem} \label{bsw} If $R$ does not vanish identically, then $R>0$ everywhere and we have
$$\dbar C(z)=R(z)-\Lap Q_0(z)-\Lap\log R(z).$$
\end{lem}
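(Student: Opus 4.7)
\emph{Strategy.} The plan is to derive the identity from an exact Ward identity at each finite level $n$, rescale about the singularity, and pass to the limit along the subsequence of Lemma \ref{lem:compact}; the zero-one law will then follow by combining the positivity of $K$ with the distributional behaviour of $\Lap\log R$ near any zero of $R$.

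\emph{Finite-$n$ identity and rescaling.} First I would perturb the integration variables in $Z_n$ via $\zeta_j \mapsto \zeta_j + \eps \phi(\zeta_j)$ for $\phi \in C^\infty_c(\C)$, expand to first order in $\eps$, and symmetrize in $j,k$. Rewriting in terms of the finite-$n$ Berezin kernel $\bfB_n(\zeta,\eta)=|\bfK_n(\zeta,\eta)|^2/\bfK_n(\zeta,\zeta)$ and its Cauchy transform $\bfC_n(\zeta)=\int_\C \bfB_n(\zeta,\eta)/(\zeta-\eta)\,dA(\eta)$, this yields the pointwise identity
\[
\dbar \bfC_n(\zeta) = \bfR_n(\zeta) - n\,\Lap Q(\zeta) - \Lap\log \bfR_n(\zeta)
\]
on $\{\bfR_n>0\}$. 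Rescaling by $z=r_n^{-1}\zeta$ and using $Q=Q_0+\re H+Q_1$ with $Q_1(\zeta)=O(|\zeta|^{2k+1})$ together with the normalization $nr_n^{2k}=1$, one has $n\Lap Q(r_n z)=\Lap Q_0(z)+O(r_n)$ locally uniformly, so the rescaled identity reads
\[
\dbar C_n(z) = R_n(z) - \Lap Q_0(z) - \Lap\log R_n(z) + O(r_n).
\]

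\emph{Passage to the limit.} Along the subsequence from Lemma \ref{lem:compact} one has $c_{n_\ell}K_{n_\ell}\to K$ locally uniformly, and since the Berezin kernel is cocycle-invariant, $B_{n_\ell}\to B$ and $R_{n_\ell}\to R$ locally uniformly. On the open set $U=\{R>0\}$, $R_n$ is bounded below on compacta for large $n$, so $\log R_n\to \log R$ in $C^2_{\mathrm{loc}}(U)$. Convergence $C_{n_\ell}\to C$ requires uniform decay of $B_n$ at infinity, which follows from the mass-one inequality \eqref{mass:ineq} combined with standard Fock-type tail bounds. Passing to the limit gives Ward's equation on $U$.

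\emph{Zero-one law and main obstacle.} Suppose $R\not\equiv 0$ and $R(z_0)=0$. Positivity of $K$ and Cauchy--Schwarz give $|K(z,z_0)|^2 \le R(z)R(z_0)=0$, so $L(\cdot,z_0)\equiv 0$. By Theorem \ref{pthm}, $L$ is the Bergman kernel of a Hilbert space $\calH_*$ of entire functions; expanding $L(z,z)=\sum_j |e_j(z)|^2$ in an orthonormal basis, $L(z_0,z_0)=0$ forces each $e_j$ to vanish at $z_0$ with some common minimal order $m\ge 1$. Hence near $z_0$ we have $L(z,z)=|z-z_0|^{2m}g(z)$ with $g(z_0)>0$ real-analytic, so $\Lap\log R = \Lap\log L(z,z) - \Lap Q_0$ carries a distributional point mass $2\pi m\,\delta_{z_0}$ incompatible with the (continuous) right-hand side of Ward's equation viewed distributionally near $z_0$. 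This contradiction forces $R>0$ everywhere, and the equation extends to all of $\C$. The main obstacle is precisely this zero-one step: reconciling the real-analytic behaviour of $R$ at a putative zero with the nonlinear term $\Lap\log R$ requires treating Ward's equation distributionally across $\{R=0\}$, which is the delicate part of the argument.
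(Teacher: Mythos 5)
Your overall strategy mirrors the paper's: derive the finite-$n$ Ward identity by varying the integration variables, rescale, pass to a subsequential limit, and derive the zero-one law by producing a distributional point mass in $\Lap\log R$ at a putative zero. The rescaling step and the factorization $L(z,z)=|z-z_0|^{2m}g(z)$ (which follows from the mass-one inequality forcing $L(\cdot,z_0)\equiv 0$) are essentially what the paper does in Lemmas \ref{R:zero}--\ref{zjick}.

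However, there is a genuine gap in your zero-one step. You assert that the point mass $m\,\delta_{z_0}$ in $\Lap\log R$ is ``incompatible with the (continuous) right-hand side of Ward's equation viewed distributionally near $z_0$,'' but you have not shown that the other side of the equation --- namely $\dbar C$ --- cannot itself carry a compensating point mass at $z_0$. A priori, $C$ could blow up like $-1/(z-z_0)$, in which case $\dbar C$ would contribute $-\delta_{z_0}$ and the equation could hold. Ruling this out is precisely the content of the paper's Lemma \ref{ubdd}, which establishes that $C$ stays \emph{bounded} on compact subsets of $\C\setminus Z$; this uses the uniform bound $K_n(z,z)\le C_\compset$ from Lemma \ref{lem:bound} together with the mass-one inequality to control the near and far parts of the Cauchy integral. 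Once that boundedness is in hand, the paper's argument (decompose the positive measures $\mu=\1_D(\Lap Q_0+\Lap\log R)=\delta_{z_0}+\nu$ and $\nu=\1_D(\Lap Q_0+\Lap\log\tilde R)\ge 0$, apply Weyl's lemma, and derive a forced blow-up $C(z)\sim -1/(z-z_0)$) yields the contradiction. Your write-up identifies this as ``the delicate part'' but does not supply it, and without it the argument does not close. Two smaller points: the passage $C_{n_\ell}\to C$ does not need Fock-type tail bounds --- the mass-one inequality alone controls $\int_{|z-w|>1/\eps}B_n(z,w)\,dA(w)$; and with the paper's normalization $\Lap=\d\dbar$ and $dA=$ Lebesgue$/\pi$, the point mass is $m\,\delta_{z_0}$, not $2\pi m\,\delta_{z_0}$.
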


For the proof of Lemma \ref{bsw}, we recall the setting of Ward's identity from \cite{AKM}.

For a test function $\psi \in C_0^{\infty}(\C)$, we define a function $W_{n}^{+}[\psi]$ of $n$ variables by
\begin{align*}
W_{n}^{+}[\psi] = I_n[\psi]-II_n[\psi]+III_n[\psi],
\end{align*}
where
\begin{align*}
&I_n[\psi](\zeta)= \frac{1}{2} \sum_{j\ne k}^{n} \frac{\psi(\zeta_j)-\psi(\zeta_k)}{\zeta_j- \zeta_k}, \quad II_n[\psi](\zeta)=n\sum_{j=1}^{n} \partial Q(\zeta_j) \cdot \psi(\zeta_j), \quad \mathrm{and} \\
& III_n[\psi](\zeta) = \sum_{j=1}^{n}\partial \psi (\zeta_j) \quad \mathrm{for} \quad \zeta=(\zeta_1, \cdots, \zeta_n) \in \C^n.
\end{align*}
We now regard $\zeta$ as picked randomly with respect to the Boltzmann-Gibbs distribution \eqref{bglaw}.  $W_n^+[\psi]$ is then a random variable;
the Ward identity proved in \cite{AKM}, Section 4.1 states that its expectation vanishes:
\begin{equation}\label{W:identity}\E_n W_n^{+}[\psi]= 0.\end{equation}

We shall now rescale in Ward's identity about $0$ at the mesoscopic scale $r_n=n^{-1/2k}$, given that
the basic decomposition
$Q=Q_0+\re H+Q_1$ in \eqref{badem} holds. (We do not need to assume that $0$ is in the bulk at this stage.)

To facilitate for the calculations, it is convenient to recall a simple
algebraic fact (see e.g. \cite{M}): if $f$ is a function of $p$ complex variables, and if $f(\zeta_1,\ldots,\zeta_p)$ is regarded as a random variable on the sample space $\{\zeta_j\}_1^n$ with respect to the Boltzmann-Gibbs law, then the expectation is
\begin{equation}\label{exp:var}\E_n\left[f(\zeta_1,\ldots,\zeta_p)\right]=\frac {(n-p)!}{n!}\int_{\C^p}f\cdot\bfR_{n,p}\, dV_p\end{equation}
where $dV_p(\zeta_1,\ldots,\zeta_p)=dA(\zeta_1)\cdots dA(\zeta_p)$.

We rescale about $0$ via $z=r_n^{-1}\zeta$, $w=r_n^{-1}\eta$, recalling that the $p$-point functions transform as densities. We remind that $R_{n,p}(z)=r_n^{\,2p}\,\bfR_{n,p}(\zeta)$ denotes the rescaled $p$-point function and use the abbreviation $R_n=R_{n,1}$ for the one-point function. We also write
\begin{align}\label{bern}B_n(z,w)&=\frac{R_n(z)R_n(w)-R_{n,2}(z,w)}{R_n(z)}=\frac{\babs{\,K_n(z,w)\,}^{\,2}}{R_n(z)},\\
\label{cauchyn}C_n(z)&=\int\frac {B_n(z,w)}{z-w}\, dA(w).
\end{align}

\begin{lem} We have that
\begin{align*}
	\bar{\partial} C_n(z) = R_n(z) - \Lap Q_0(z) - \Lap \log R_n(z) +o(1),
\end{align*}
where $o(1) \to 0$ uniformly on compact subsets of $\C$ as $n\to\infty$.
\end{lem}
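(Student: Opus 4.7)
The approach is to first establish the finite-$n$ Ward equation as an \emph{exact} identity in the original $\zeta$-coordinates, and then pass to the mesoscopic variable $z=\zeta/r_n$. The $o(1)$ error enters only in the rescaling step, through the remainder $Q_1$ in the canonical decomposition $Q=Q_0+\re H+Q_1$.

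\emph{Finite-$n$ Ward equation.} Starting from \eqref{W:identity}, $\E_n W_n^+[\psi]=0$ for an arbitrary $\psi\in C_c^\infty(\C)$, I apply \eqref{exp:var} to each of the three terms. For $\E_n I_n[\psi]$, symmetry of $(\psi(\zeta)-\psi(\eta))/(\zeta-\eta)$ in $(\zeta,\eta)$ combined with the determinantal structure $\bfR_{n,2}=\bfR_n\otimes\bfR_n-|\bfK_n|^{\,2}$ gives
$$\E_n I_n[\psi]=\int\psi(\zeta)\,\bfR_n(\zeta)\bigl[\bfL_n(\zeta)-\mathbf{C}_n(\zeta)\bigr]\,dA(\zeta),$$
where $\bfL_n(\zeta)=\int\bfR_n(\eta)/(\zeta-\eta)\,dA(\eta)$ is the Cauchy transform of $\bfR_n$ and $\mathbf{C}_n$ is the Cauchy transform of the unrescaled Berezin kernel $\bfB_n(\zeta,\eta)=|\bfK_n(\zeta,\eta)|^{\,2}/\bfR_n(\zeta)$. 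Integrating by parts in $\E_n III_n[\psi]=\int\partial\psi\,\bfR_n\,dA$ yields $-\int\psi\,\bfR_n\,\partial\log\bfR_n\,dA$ on the open set $\{\bfR_n>0\}$. Assembling Ward's identity and invoking the arbitrariness of $\psi$ together with continuity of the relevant terms on $\{\bfR_n>0\}$ (a dense open set, since $\bfR_n$ is real-analytic), I obtain the pointwise identity $\bfL_n-\mathbf{C}_n=n\,\partial Q+\partial\log\bfR_n$. Applying $\dbar$ and using $\dbar\bfL_n=\bfR_n$ (from $\dbar_\zeta(1/(\zeta-\eta))=\delta$ in the $dA$-normalization) produces the \emph{exact} finite-$n$ Ward equation
$$\dbar\mathbf{C}_n(\zeta)=\bfR_n(\zeta)-n\Lap Q(\zeta)-\Lap\log\bfR_n(\zeta).$$

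\emph{Rescaling.} Under $\zeta=r_n z$ one has $R_n(z)=r_n^{\,2}\bfR_n(\zeta)$, $C_n(z)=r_n\,\mathbf{C}_n(\zeta)$, $\dbar_\zeta=r_n^{-1}\dbar_z$, $\Lap_\zeta=r_n^{-2}\Lap_z$, and the additive constant $-2\log r_n$ in $\log\bfR_n$ dies under $\Lap$. Multiplying the finite-$n$ equation by $r_n^{\,2}$ transforms every term into its rescaled version, leaving
$$\dbar C_n(z)=R_n(z)-n\,r_n^{\,2}\Lap Q(r_n z)-\Lap\log R_n(z).$$
Using $Q=Q_0+\re H+Q_1$ with $\re H$ harmonic and homogeneity $Q_0(r_n z)=r_n^{\,2k}Q_0(z)$, one finds $n\,r_n^{\,2}\Lap Q_0(r_n z)=n\,r_n^{\,2k}\Lap Q_0(z)=\Lap Q_0(z)$ by the normalization $n\,r_n^{\,2k}=1$. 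Since $Q_1(\zeta)=O(|\zeta|^{\,2k+1})$ gives $\Lap Q_1(\zeta)=O(|\zeta|^{\,2k-1})$, we obtain $n\,r_n^{\,2}\Lap Q_1(r_n z)=O(n\,r_n^{\,2k+1})=O(r_n)$ uniformly on compact subsets. Hence $n\,r_n^{\,2}\Lap Q(r_n z)=\Lap Q_0(z)+o(1)$ locally uniformly, and the lemma follows.

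\emph{Main obstacle.} Once the exact finite-$n$ Ward equation is in hand, the proof reduces to bookkeeping, with the only substantive input being the identity $n\,r_n^{\,2}\Lap Q(r_n z)=\Lap Q_0(z)+o(1)$. This is where the structure of the bulk singularity enters: the harmonicity of $\re H$ kills its contribution to $\Lap Q$ (its dangerously growing low-order derivatives are instead absorbed into cocycles at the level of $K_n$), and the homogeneity of $Q_0$ paired with the mesoscopic scaling $n\,r_n^{\,2k}=1$ delivers the exact limit.
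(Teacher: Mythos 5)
Your proof is correct and follows essentially the same route as the paper's: both go from the exact Ward identity $\E_n W_n^+[\psi]=0$ via the determinantal splitting $\bfR_{n,2}=\bfR_n\bfR_n-|\bfK_n|^2$ to the distributional identity, divide by the one-point function, take $\dbar$, and then exploit $n r_n^{2k}=1$, the harmonicity of $\re H$, and the bound $\Lap Q_1=O(|\zeta|^{2k-1})$ to identify $n r_n^2\Lap Q(r_n z)=\Lap Q_0(z)+O(r_n)$. The only organizational difference is that you establish the exact finite-$n$ Ward equation in $\zeta$-coordinates first and rescale at the end, whereas the paper rescales the test function $\psi_n(\zeta)=\psi(r_n^{-1}\zeta)$ at the outset and carries out the computation in the $z$-variable throughout; these are equivalent by the cocycle-free scaling relations you list ($R_n=r_n^2\bfR_n$, $C_n=r_n\mathbf{C}_n$, etc.).
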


\begin{proof}
We fix a test function $\psi \in C_{0}^{\infty}(\C)$ and let $\psi_n (\zeta) = \psi (r_n^{-1} \zeta)$.
The change of variables $z=r_n^{-1}\zeta$ and $w= r_n^{-1}\eta$ gives that
\begin{align*}
	\E_n I_n[\psi_n] &= \int_{\C} \psi_n(\zeta)\, dA(\zeta) \int_{\C} \frac{\mathbf{R}_{n,2}(\zeta, \eta)}{\zeta- \eta} \,dA(\eta)\\
	&= r_n^{-1} \int_{\C} \psi (z)\, dA(z) \int_{\C} \frac{R_{n,2}(z,w)}{z-w}\, dA(w)
\end{align*}
and
\begin{align*}
	\E_n II_n[\psi_n] & = n \int_{\C} \partial Q(\zeta)\, \psi_n(\zeta)\, \mathbf{R}_{n,1}(\zeta)\, dA(\zeta)
	= n\int_{\C} \partial Q( r_n z)\, \psi(z)\, R_{n,1}(z)\, dA(z).
\end{align*}
Likewise, changing variables and integrating by parts, we obtain
\begin{align*}
	\E_n III_n[\psi_n] &= \int_{\C} \partial\psi_n(\zeta) \, \mathbf{R}_{n,1}(\zeta)\, dA(\zeta)
	= r_n^{-1} \int_{\C} \partial \psi (z)\, R_{n,1}(z)\, dA(z)\\
	&= -r_n^{-1} \int_{\C} \psi(z) \,\partial R_{n,1}(z)\, dA(z).
\end{align*}
Hence, by the Ward identity in \eqref{W:identity}, we have
\begin{align*}
	&\int_{\C} \psi(z)\, dA(z) \int_{\C} \frac{R_{n,2}(z,w)}{z-w} \,dA(w) \\
	&= n\, r_n \int_{\C} \partial Q (r_n z)\, \psi(z) \, R_{n,1}(z) \, dA(z) + \int_{\C} \psi(z)\, \partial R_{n,1} (z) \, dA(z).
\end{align*}
Since $\psi$ is an arbitrary test function, we have in the sense of distributions,
\begin{align*}
	\int_{\C} \frac{R_{n,2}(z,w)}{z-w} \, dA(w)
	= n r_n \partial Q(r_n z) \, R_{n,1}(z) + \partial R_{n,1}(z).
\end{align*}
Dividing through by $R_{n,1}(z)$ and using the fact that
\begin{align*}
	R_{n,2}(z,w) = R_{n,1}(z) \left( R_{n,1}(w) - B_{n}(z,w)  \right),
\end{align*}
we obtain
\begin{align*}
	\int_{\C} \frac{R_{n,1}(w)}{z-w} dA(w) - \int_{\C} \frac{B_n(z,w)}{z-w} dA(w)
	= n r_n \partial Q(r_n z) + \partial \log R_{n,1}(z).
\end{align*}
Differentiating with respect to $\bar{z}$, we get
\begin{align*}
	R_{n,1}(z) - \bar{\partial}C_{n}(z) = n r_n^{\,2}\, \Lap Q(r_n z) + \Lap \log R_{n,1}(z).
\end{align*}
Since $\Lap Q( r_n z) = r_n^{\,2(k-1)} \Lap Q_0(z) + O(r_n^{\,2k-1})$ uniformly on compact subsets of $\C$  as $n \to \infty$ and $r_n = n^{-1/2k}$ we obtain
\begin{align*}
	\bar{\partial} C_n(z) = R_{n,1}(z)- \Lap Q_0(z) - \Lap \log R_{n,1}(z) + o(1)
\end{align*}
where $o(1) \to 0$ uniformly on compact subsets of $\C$ as $n \to \infty$.
\end{proof}

\subsection{The proof of Theorem \ref{01ward}} We will need a few lemmas.

\begin{lem} \label{R:zero}
If $R(z_0)=0$ then there is a real analytic function $\tilde{R}$ such that
\begin{align*}
	R(z)= \babs{\,z-z_0\,}^{\,2} \tilde{R}(z).
\end{align*}
If $R$ does not vanish identically, then all zeros of $R$ are isolated.
\end{lem}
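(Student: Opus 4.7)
The plan is to exploit positivity of the limiting kernel together with the Hermitian-entire structure from Lemma \ref{lem:compact} to obtain a clean factorization of $L(z,w)$ near a diagonal zero, and then deduce both claims.

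First I would use the positivity of $K$. Since each $K_n$ is a correlation kernel of a determinantal process, it is a positive matrix, and this property is preserved in the limit. In particular, for any $z,z_0\in\C$ the $2\times 2$ matrix with entries $K(z_0,z_0)$, $K(z_0,z)$, $K(z,z_0)$, $K(z,z)$ is positive semidefinite, so its determinant is non-negative. Hence $|K(z_0,z)|^{\,2}\le K(z_0,z_0)\,K(z,z)$. If $R(z_0)=K(z_0,z_0)=0$, this forces $K(z_0,z)=0$ for every $z$. Passing to the holomorphic kernel via $K(z,w)=L(z,w)\,e^{-Q_0(z)/2-Q_0(w)/2}$, this gives $L(z_0,w)=0$ for all $w$, and by the Hermitian property $L(z,z_0)=0$ for all $z$ as well.

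Next I would factor. For each fixed $w$, the function $z\mapsto L(z,w)$ is entire with a zero at $z=z_0$, so $L(z,w)/(z-z_0)$ extends to an entire function of $z$. Likewise, for each fixed $z$ the function $\bar w\mapsto L(z,w)$ is entire in $\bar w$ with a zero at $\bar w=\bar z_0$. By a standard division argument (Weierstrass/Hartogs in the two variables $z,\bar w$) one can write
\[
L(z,w)=(z-z_0)\overline{(w-z_0)}\,M(z,w),
\]
with $M$ Hermitian-entire. Restricting to the diagonal gives
\[
R(z)=L(z,z)\,e^{-Q_0(z)}=|\,z-z_0\,|^{\,2}\,\tilde{R}(z),\qquad \tilde R(z):=M(z,z)\,e^{-Q_0(z)},
\]
which is real-analytic on $\C$. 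This settles the first statement.

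For the second statement, suppose the zero set of $R$ has an accumulation point $z_*$, and pick $z_n\to z_*$ with $R(z_n)=0$. By the step above, $L(z,z_n)=0$ for every $z$ and every $n$. Fix $z$ and consider the entire function $\bar w\mapsto L(z,w)$; it vanishes at the points $\bar z_n$, which accumulate at $\bar z_*$, so it vanishes identically by the identity theorem. Hence $L(z,w)=0$ for all $w$, and since $z$ was arbitrary, $L\equiv 0$, so $R\equiv 0$. The main subtlety is merely verifying that positivity of $K$ survives the limit and that the one-variable identity theorem can be applied in the $\bar w$-slice; both are immediate from Lemma \ref{lem:compact} and the Hermitian-entire structure of $L$.
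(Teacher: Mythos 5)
Your proof is correct and follows the same overall structure as the paper's: show that $L(z_0,\cdot)\equiv 0$, factor out $(z-z_0)\overline{(w-z_0)}$ from the Hermitian-entire kernel, and use the one-variable identity theorem in a slice for the isolation of zeros. The one place you diverge is in the key step establishing $L(z_0,w)=0$ for all $w$: you invoke positivity of the limiting kernel as a positive matrix together with the $2\times 2$ Cauchy--Schwarz inequality $|K(z_0,z)|^2\le K(z_0,z_0)K(z,z)$, whereas the paper instead applies the mass-one inequality \eqref{mass:ineq} from Lemma~\ref{lem:compact}, deducing $\int e^{-Q_0(w)}|L(z_0,w)|^2\,dA(w)\le L(z_0,z_0)=0$. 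Both facts are consequences of the reproducing-kernel structure of $K_n$ surviving the limit, and both are explicitly available in the paper (positivity is stated just before Theorem~\ref{pthm}), so the two arguments are essentially interchangeable; yours is perhaps a touch more elementary since it avoids Fatou's lemma and the integral estimate, relying only on pointwise convergence of positive matrices. The factorization and identity-theorem steps match the paper verbatim.
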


\begin{proof}
The assumption gives that the holomorphic kernel $L$ corresponding to $R$ satisfies $L(z_0, z_0)=0$. Hence
$\int e^{-Q_0(w)}\,|\,L(z_0,w)\,|^{\,2} dA(w) \leq 0$ by the mass-one inequality \eqref{mass:ineq}.
Thus $L(z_0,w)=0$ for all $w\in\C$. Since $L$ is Hermitian-entire, we can thus write
$$L(z,w)=(z-z_0)\,(w-z_0)^*\,\tilde{L}(z,w)$$ for some Hermitian-entire function $\tilde{L}$.
We now have $R(z)=\babs{\,z-z_0\,}^{\,2} \tilde{L}(z,z)e^{-Q_0(z)}$.

For the second statement, we assume that $R$ does not vanish identically and there exists a zero $z_0$ of $R$ which is not isolated. Then, we can take a sequence $\{z_j\}_{1}^{\infty}$ of distinct zeros of $R$ which converges to $z_0$, whence by the above argument, for each $j$ we obtain $L(z_j, w)=0$ for all $w \in \C$. If we fix $w$, then $L(z,w)=0$ for all $z\in \C$ since $L(z,w)$ is holomorphic in $z$. Hence $L=0$ identically.
\end{proof}

\begin{lem}\label{lem:logsub}
$L(z,w)$ is a positive matrix and $z \mapsto L(z,z)$ is logarithmically subharmonic.
\end{lem}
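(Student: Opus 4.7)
The plan splits into two independent parts, both leveraging the approximation $L = \lim L_{n_\ell}$ established in Lemma~\ref{lem:compact}.

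For the \emph{positive matrix} claim I would simply transfer the property from $L_n$ to $L$. By Lemma~\ref{lem:compact}, each $L_n$ is the reproducing kernel of the Hilbert space $\calH_n$, so for every finite collection of points $z_i$ and scalars $\alpha_i$,
$$\sum_{i,j=1}^{N}\alpha_i\bar\alpha_j L_n(z_i,z_j)=\left\|\sum_{j=1}^{N}\bar\alpha_j L_{n,z_j}\right\|^{\,2}_{L^2(\tilde\mu_n)}\ge 0.$$
Because the defining subsequence $L_{n_\ell}\to L$ converges locally uniformly, the finite sum on the left converges pointwise, giving $\sum_{i,j}\alpha_i\bar\alpha_j L(z_i,z_j)\ge 0$.

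For \emph{logarithmic subharmonicity}, I would first establish it at the pre-limit level. Since $\calH_n$ is finite-dimensional (of dimension $n$, corresponding to $\Poly(n)$), it admits an orthonormal basis $\{e_j^{(n)}\}_{j=1}^{n}$; each basis element has the form $q_j\, e^{-H_n/2}$ with $q_j\in\Poly(n)$ and $H_n$ a polynomial, and is therefore \emph{entire}. The standard diagonal expansion of a reproducing kernel yields
$$L_n(z,z)=\sum_{j=1}^{n}\babs{\,e_j^{(n)}(z)\,}^{\,2},$$
a finite sum of squared moduli of entire functions. On the open set where this sum is positive, a direct computation gives $\Lap\log L_n(z,z)\ge 0$, so $L_n(z,z)$ is log-subharmonic.

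It remains to push log-subharmonicity through the locally uniform limit. For any $z_0\in\C$ with $L(z_0,z_0)>0$, continuity of $L$ gives a uniform positive lower bound on $L$ on some small closed disk $\overline{D(z_0,r)}$, so $L_{n_\ell}$ is also uniformly bounded below there for large $\ell$. Hence the sub-mean value inequality
$$\log L_{n_\ell}(z_0,z_0)\le \frac 1 {2\pi}\int_0^{2\pi}\log L_{n_\ell}(z_0+re^{i\theta},z_0+re^{i\theta})\,d\theta$$
passes to the limit by dominated convergence, yielding the sub-mean value inequality for $\log L(z,z)$ wherever $L(z,z)>0$. The main obstacle is the possibility that $L(z,z)$ vanishes at some points; this is resolved using Lemma~\ref{R:zero}, which forces any non-trivial $L$ to have only isolated zeros and, more precisely, near such a zero $z_0$ to factor as $L(z,w)=(z-z_0)(w-z_0)^*\tilde L(z,w)$ with $\tilde L$ again Hermitian-entire and positive matrix. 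Thus locally $\log L(z,z)=2\log\babs{\,z-z_0\,}+\log\tilde L(z,z)$, the first summand being subharmonic and the second handled by iterating the argument on $\tilde L$; the case $L\equiv 0$ is trivial. Combining this with the already established sub-mean value inequality on $\{L>0\}$ and the upper semi-continuity of $\log L(z,z)$ gives subharmonicity of $\log L(z,z)$ on all of $\C$.
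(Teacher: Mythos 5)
Your proof is correct, but it takes a genuinely different route from the paper's. The paper argues \emph{intrinsically} with the limiting kernel $L$: it endows the linear span of the functions $L_z$ with the semi-definite inner product $\langle L_z,L_w\rangle_*=L(w,z)$, notes that since $L$ is Hermitian-entire the derivative $\bar\partial_z L_z$ also lies in the resulting Hilbert space $\calH_*$, and then computes
\[
\Lap_z\log L(z,z)=\frac{\|L_z\|_*^2\,\|\bar\partial_z L_z\|_*^2-|\langle\bar\partial_z L_z,L_z\rangle_*|^2}{L(z,z)^2}\ge 0
\]
by Cauchy--Schwarz, wherever $L(z,z)>0$. At a zero of $L(z,z)$ it simply observes $\log L(z,z)=-\infty$, which makes the sub-mean value inequality automatic. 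Your approach instead establishes log-subharmonicity at the pre-limit level, using the explicit finite-rank diagonal expansion $L_n(z,z)=\sum_j|e_j^{(n)}(z)|^2$, and then transports the sub-mean value inequality through the locally uniform limit; you then deal with possible zeros of $L(z,z)$ by invoking the factorization from Lemma~\ref{R:zero} and iterating. Both routes work: yours is more elementary in that it avoids the abstract RKHS argument and uses only the concrete structure of the $L_n$, while the paper's is shorter, works for any positive-matrix Hermitian-entire kernel regardless of provenance, and dispenses with zeros in one line (so the factorization machinery you deploy there is not needed). One minor point: when you pass the sub-mean value inequality to the limit, invoking dominated convergence is fine precisely because Lemma~\ref{R:zero} lets you choose $r$ so small that $\overline{D(z_0,r)}$ avoids the (isolated) zeros, giving a uniform lower bound on $L_{n_\ell}$ there; it would be worth saying this explicitly since otherwise one would have to argue via reverse Fatou.
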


\begin{proof}
  It is clear that $L$ is a positive matrix. Now write $L_z(w) : = L(w,z)$ and define a semi-definite inner product by $\langle L_z, L_w \rangle_{*} :=L(w,z)$ on the linear span of the functions $L_z$ for $z\in\C$. The completion of this span forms a (perhaps semi-normed) Hilbert space $\calH_{*}$ and $L$ is a reproducing kernel of the space. Now when $L(z,z)>0$
\begin{align}\label{Lap:log}
	\Lap_z \log L(z,z) = \frac{L(z,z) \Lap L(z,z)-\partial_z L(z,z)\, \bar{\partial}_z L(z,z)}{L(z,z)^2}.
\end{align}
Since $L(z,w)$ is Hermitian-entire, we have $\bar{\partial}_{z}L_z \in \calH_{*}$, $\langle \bar{\partial}_z L_z, L_z \rangle_{*} = \bar{\partial}_z L(z,z)$, and
$\langle \bar{\partial}_z L_z , \bar{\partial}_z L_z \rangle_{*}= \Lap L(z,z)$. Hence, the numerator of (\ref{Lap:log}) can be written as
\begin{align*}
	\|\, L_z\, \|_{*}^{\,2} \cdot \|\,\bar{\partial} L_z\, \|_{*}^{\,2} - \babs{\,\langle \bar{\partial}_z L_z, L_z \rangle_{*}\,}^{\,2},
\end{align*}
which is non-negative by the Cauchy-Schwarz inequality.

At points where $L(z,z)=0$, $\log L(z,z)$ satisfies the sub-mean value property since $\log L(z,z)=-\infty$. Hence the function $\log L(z,z)$ is subharmonic on $\C$.
\end{proof}

\begin{lem} \label{zjick}
If $R(z_0)=0$ and $R(z)=\babs{\,z-z_0\,}^{\,2}\tilde{R}(z)$, then $\Lap Q_0 + \Lap \log \tilde{R} \geq 0$ in a neighborhood of $z_0$.
\end{lem}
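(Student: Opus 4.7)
The plan is to reduce the inequality to logarithmic subharmonicity of a suitable positive matrix, by factoring out the zero at $z_0$ at the level of the holomorphic kernel $L$ (not just the one-point function $R$), and then to reuse the Cauchy--Schwarz / reproducing-kernel argument from Lemma \ref{lem:logsub}.

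First, I would unpack what the hypothesis $R(z_0)=0$ says about the holomorphic kernel $L$. By polarization, $R(z)=L(z,z)e^{-Q_0(z)}$, so $L(z_0,z_0)=0$. The argument already used in the proof of Lemma \ref{R:zero} (applying the mass-one inequality \eqref{mass:ineq} at the point $z_0$) then forces $L(z_0,w)=0$ for all $w$, and since $L$ is Hermitian-entire this yields a factorization
\begin{equation*}
L(z,w)=(z-z_0)\,(w-z_0)^{\,*}\,\tilde L(z,w)
\end{equation*}
with $\tilde L$ Hermitian-entire. Consequently $\tilde R(z)=\tilde L(z,z)\,e^{-Q_0(z)}$, so that wherever $\tilde R>0$,
\begin{equation*}
\Lap\log\tilde R(z)=\Lap\log\tilde L(z,z)-\Lap Q_0(z),
\end{equation*}
and the desired inequality $\Lap Q_0+\Lap\log\tilde R\ge 0$ reduces to the logarithmic subharmonicity of $z\mapsto\tilde L(z,z)$.

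Next, I would verify that $\tilde L$ is itself a positive matrix. Indeed, for any points $w_1,\dots,w_N\in\C$ and scalars $\beta_j\in\C$, choose auxiliary scalars $\alpha_j=(w_j-z_0)\,\beta_j$ and use that $L$ is a positive matrix to write
\begin{equation*}
\sum_{i,j=1}^N\beta_i\bar\beta_j\,\tilde L(w_i,w_j)=\sum_{i,j=1}^N\alpha_i\bar\alpha_j\,\frac{L(w_i,w_j)}{(w_i-z_0)(w_j-z_0)^{\,*}}\ge 0
\end{equation*}
at points $w_j\ne z_0$; by continuity the inequality extends to all of $\C^N$. Hence $\tilde L$ is a Hermitian-entire positive matrix, and so it is the reproducing kernel of a (possibly semi-normed) Hilbert space of entire functions, just as in the proof of Lemma \ref{lem:logsub}.

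Now I would apply the Cauchy--Schwarz argument of Lemma \ref{lem:logsub} verbatim to $\tilde L$: at any point $z$ with $\tilde L(z,z)>0$ one gets
\begin{equation*}
\Lap_z\log\tilde L(z,z)=\frac{\|\tilde L_z\|_*^{\,2}\,\|\dbar\tilde L_z\|_*^{\,2}-|\langle\dbar\tilde L_z,\tilde L_z\rangle_*|^{\,2}}{\tilde L(z,z)^{\,2}}\ge 0,
\end{equation*}
and at points where $\tilde L(z,z)=0$ the function $\log\tilde L(z,z)=-\infty$ trivially satisfies the sub-mean-value property. Thus $z\mapsto\tilde L(z,z)$ is logarithmically subharmonic on all of $\C$, giving $\Lap Q_0+\Lap\log\tilde R\ge 0$ in the distributional sense (and pointwise wherever $\tilde R>0$), in particular in a neighborhood of $z_0$.

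The one step that needs a little care is the positive-matrix claim for $\tilde L$ near the exceptional point $z_0$; this is why I phrased the verification so that the reduction to $L$ only uses points $w_j\ne z_0$ and then invoked continuity. Once that is in place, the whole argument is a direct rerun of the Cauchy--Schwarz reasoning of Lemma \ref{lem:logsub}, with no further analytic obstacles.
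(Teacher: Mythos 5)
Your approach is a valid alternative to the paper's, and the underlying idea is sound, but there is a small algebraic slip in the positive‑matrix verification that needs fixing.

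First, the comparison. The paper applies Lemma \ref{lem:logsub} directly to $L$: it notes that $\Lap\log L(z,z)=\Lap Q_0+\Lap\log\tilde R+\delta_{z_0}$ as distributions, invokes $\Lap\log L(z,z)\ge 0$, concludes $\Lap(Q_0+\log\tilde R)\ge 0$ on the punctured disc, and then extends the (possibly $-\infty$-valued) function $S=Q_0+\log\tilde R$ across $z_0$ to get subharmonicity on the full disc. You instead factor the zero out at the level of the holomorphic kernel, $L(z,w)=(z-z_0)(w-z_0)^*\tilde L(z,w)$, verify that $\tilde L$ is itself a positive matrix, and rerun the Cauchy--Schwarz argument for $\tilde L$, so that $\Lap Q_0+\Lap\log\tilde R=\Lap\log\tilde L(z,z)\ge 0$ directly. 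This is cleaner in that it avoids all bookkeeping with $\delta_{z_0}$ and the extension across $z_0$, at the cost of repeating the reproducing-kernel argument for a new kernel; the two routes are closely parallel in substance.

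Now the slip. You set $\alpha_j=(w_j-z_0)\beta_j$, but then the displayed identity
\begin{equation*}
\sum_{i,j}\beta_i\bar\beta_j\,\tilde L(w_i,w_j)=\sum_{i,j}\alpha_i\bar\alpha_j\,\frac{L(w_i,w_j)}{(w_i-z_0)(w_j-z_0)^*}
\end{equation*}
is false: with that choice the right-hand side collapses to $\sum\beta_i\bar\beta_jL(w_i,w_j)$, which equals $\sum\beta_i\bar\beta_j(w_i-z_0)(w_j-z_0)^*\tilde L(w_i,w_j)$, not the left-hand side. The correct reduction is to take $\alpha_j=\beta_j/(w_j-z_0)$ (for $w_j\ne z_0$), giving
\begin{equation*}
\sum_{i,j}\beta_i\bar\beta_j\,\tilde L(w_i,w_j)=\sum_{i,j}\alpha_i\bar\alpha_j\,L(w_i,w_j)\ge 0,
\end{equation*}
and then extend to $w_j=z_0$ by continuity as you indicate. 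With that one-line fix, your argument is complete and correct.
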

\begin{proof}
We choose a small disc $D=D(z_0, \epsilon)$ and consider the function $$S(z) = \log \left(e^{\,Q_0(z)}\,\tilde{R}(z)\right).$$ Observing that $\Lap_z \log L(z,z) = \Lap Q_0(z) + \Lap \log \tilde{R}(z) + \delta_{z_0}$ in the sense of distributions, Lemma \ref{lem:logsub} gives us that $\Lap S \geq 0$ in the sense of distributions on $D\backslash \{z_0\}$. If $\tilde{R}(z_0)>0$
we extend $S$ analytically to $z_0$. On the other hand, if $\tilde{R}(z_0)=0$ we define $S(z_0)=-\infty$. In both cases, the extended function $S$ is subharmonic on $D$.
\end{proof}

We now turn to the left hand side in the rescaled version of Ward's identity, namely the function
$\dbar C_n$ where $C_n$ is the Cauchy transform of $B_n$ (see \eqref{cauchyn}).

\begin{lem}\label{ubdd} Suppose that $R=\lim R_{n_\ell}$ is a limiting $1$-point function which does not vanish identically.
Let $Z$ be the set of isolated zeros of $R$ and let $B(z,w)=\lim B_{n_\ell}(z,w)$ be the corresponding Berezin kernel for $z\not\in Z$.
Then $C_{n_\ell} \to C$ locally uniformly on the complement $Z^{c}=\C\setminus Z$ as $\ell\to \infty$, where the function $$C(z)=\int\frac {B(z,w)}{z-w}\, dA(w)$$
is bounded on $Z^{c} \cap \compset$ for each compact subset $\compset$ of $\C$.
\end{lem}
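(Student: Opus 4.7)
My plan is to split the Cauchy integral defining $C_n$ into a near part over a small disk around $z$ and a far part, using two standing bounds: the normalization $\int_\C B_n(z,w)\,dA(w)=1$, which controls the far integral, and the reproducing-kernel Cauchy--Schwarz estimate
$$B_n(z,w)=\frac{|K_n(z,w)|^2}{K_n(z,z)}\le K_n(w,w)=R_n(w),$$
which, combined with the uniform local bound on $R_n$ furnished by Lemma \ref{lem:bound}, controls the near integral.

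As preparation, fix a compact set $\compset\subset\C$ and set $\compset_0=\compset\cap Z^c$. Since zeros of $R$ are isolated (Lemma \ref{R:zero}), $R\ge c>0$ on $\compset_0$; combined with the locally uniform diagonal convergence $R_{n_\ell}\to R$ (which follows from $c_{n_\ell}K_{n_\ell}\to K$ on $\C^2$ together with the fact that cocycles are unimodular on the diagonal), this forces $R_{n_\ell}\ge c/2$ on $\compset_0$ for all large $\ell$. Berezin kernels are cocycle-invariant, so $B_{n_\ell}\to B$ locally uniformly on $Z^c\times\C$.

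For boundedness, I split for $z\in\compset_0$ and any $\epsilon>0$:
\begin{align*}
|C_{n_\ell}(z)|\le\int_{|w-z|<\epsilon}\frac{B_{n_\ell}(z,w)}{|w-z|}\,dA(w)+\int_{|w-z|\ge\epsilon}\frac{B_{n_\ell}(z,w)}{|w-z|}\,dA(w).
\end{align*}
The far integral is bounded by $\epsilon^{-1}\int B_{n_\ell}\,dA=\epsilon^{-1}$. For the near integral, the Cauchy--Schwarz bound together with Lemma \ref{lem:bound} gives $B_{n_\ell}(z,w)\le M_\compset$ on a fixed compact neighbourhood of $\compset$, and $\int_{|w-z|<\epsilon}|w-z|^{-1}\,dA=2\epsilon$, producing the bound $2M_\compset\epsilon$. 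Thus $|C_{n_\ell}(z)|\le\epsilon^{-1}+2M_\compset\epsilon$ uniformly in $z\in\compset_0$ and in large $\ell$; letting $\ell\to\infty$ transfers the same bound to $C$ once convergence is established.

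For uniform convergence on $\compset_0$ I use the same splitting at a large cutoff $M$ in place of $\epsilon$. The tail contributions to $|C_{n_\ell}(z)-C(z)|$ are dominated by $M^{-1}\bigl(\int B_{n_\ell}\,dA+\int B(z,\cdot)\,dA\bigr)\le 2/M$, the inequality $\int B(z,\cdot)\,dA\le 1$ following from Fatou's lemma applied along the subsequence. On the region $|w-z|<M$, the integrand $(B_{n_\ell}-B)(z,w)/(z-w)$ converges pointwise to $0$ and, by the Cauchy--Schwarz bound again, is dominated by an integrable majorant of the form $\mathrm{const}/|w-z|$; dominated convergence, uniform in $z\in\compset_0$, forces this piece to tend to $0$ as $\ell\to\infty$. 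Choosing $M$ large and then $\ell$ large yields the required uniform convergence. The main (and essentially only) technical point is the uniform-in-$n$ pointwise bound on $B_n$ over a neighbourhood of $\compset$, which is produced for free by the Cauchy--Schwarz bound above combined with Lemma \ref{lem:bound}.
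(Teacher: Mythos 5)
Your proposal is correct and follows essentially the same route as the paper's proof: split the Cauchy integral at a radius, bound the near part by the Cauchy--Schwarz inequality $B_{n_\ell}(z,w)\le K_{n_\ell}(w,w)$ together with Lemma \ref{lem:bound}, bound the far part by the mass-one property $\int B_{n_\ell}\,dA=1$ (and $\int B\,dA\le 1$ via Fatou), and obtain uniform convergence from the locally uniform convergence $B_{n_\ell}\to B$ on the relevant compact region. The only difference is cosmetic: the paper parametrizes the cutoff radius as $1/\epsilon$ and chooses $N$ once, while you use a cutoff $M$ and pass to the limit in two stages.
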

\begin{proof}
 We have that $c_{n_\ell}K_{n_\ell} \to K$ locally uniformly on $\C^2$ where $K(z,z)=R(z)>0$ when $z\not\in Z$. Hence, for fixed $\epsilon$ with $0<\epsilon<1 $ we can choose $N$ such that if $\ell \geq N$ then
\begin{align*}
	\babs{\,B_{n_\ell}(z,w) - B(z,w)\,} < \epsilon^{\,2}
\end{align*}
for all $z,w$ with $\babs{\,z\,} \leq 1/\epsilon$, $\babs{\,w\,} \leq 2/\epsilon$, and $\dist(z, Z) \ge \epsilon$.
Then, for $z$ with $\babs{\,z\,} \leq 1/\epsilon$ and $\dist(z, Z) \ge \epsilon$,
\begin{align*}
	\babs{\,C_{n_\ell}(z) - C(z)\,}
	&\leq \left( \int_{\babs{\,z-w\,}<1/\epsilon}+ \int_{\babs{\,z-w\,}>1/\epsilon} \right)
	\babs{\,\frac{B_{n_\ell}(z,w) - B(z,w)}{z-w}\,} dA(w)\\
	&\leq \epsilon^{\,2} \int_{\babs{\,z-w\,}<1/\epsilon} \frac{1}{\babs{\,z-w\,}} dA(w)
	+ \epsilon \int \babs{\,B_{n_\ell}(z,w)-B(z,w)\,} dA(w)\\
	&\leq 4 \epsilon.
\end{align*}
Here, we have used the mass-one inequality for the third inequality. Thus $C_{n_\ell} \to C$ uniformly on compact subsets of $Z^c$.

Now fix a compact subset $\compset$ of $\C$. Then, for all $z,w$ with $z \in \compset\setminus Z$ and $\dist(w,\compset)\leq 1$
\begin{align*}
	B_{n_\ell}(z,w) = \frac{\babs{\,K_{n_\ell}(z,w)\,}^{\,2}}{K_{n_\ell}(z,z)} \leq K_{n_\ell}(w,w) \leq M
\end{align*}
for some $M=M_\compset$ which depends only on $\compset$ by Lemma \ref{lem:bound}.
Thus, for $z \in \compset \setminus Z$
\begin{align*}
	\babs{C_{n_\ell}}
	&\leq \left( \int_{\babs{\,z-w\,}<1}+ \int_{\babs{\,z-w\,}>1} \right)
	\babs{\,\frac{B_{n_\ell}(z,w)}{z-w}\,} dA(w)\\
	& \leq M \int_{\babs{\,z-w\,}<1} \frac{1}{\babs{\,z-w\,}} dA(w) +  \int B_{n_\ell}(z,w) dA(w) \leq 2M+1
\end{align*}
Hence we obtain $\babs{\,C(z)\,} \leq 2M+1$ for $z \in \compset \setminus Z$.
\end{proof}

\begin{lem}
If $R$ does not vanish identically, the Ward's equation
\begin{equation}\label{wa:2}
	\bar{\partial} C = R - \Lap Q_0 - \Lap \log R
\end{equation}
holds in the sense of distributions.
\end{lem}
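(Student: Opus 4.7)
The plan is to pass to the distributional limit in the pre-limit rescaled Ward identity along the subsequence $n_\ell$ for which $K_{n_\ell}\to K$, $L_{n_\ell}\to L$, and $R_{n_\ell}\to R$. Since $R$ may vanish on the discrete set $Z$ of Lemma \ref{R:zero}, pointwise convergence of $\log R_{n_\ell}$ is useless (the logarithm of a small positive number is unbounded below), so before taking limits I would first replace $\log R_n$ by a quantity with manifestly better convergence properties.

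The key algebraic reduction uses the identity $R_n(z)=L_n(z,z)\,e^{-Q_0(z)-Q_{1,n}(z)}$ from Lemma \ref{lem:compact}, which yields (classically, since $R_n>0$ in the pre-limit)
$$\Lap\log R_n \;=\; \Lap\log L_n(z,z)\;-\;\Lap Q_0\;-\;\Lap Q_{1,n}.$$
Substituting into the pre-limit identity rewrites it as
$$\dbar C_n \;=\; R_n \;-\; \Lap\log L_n(z,z) \;+\; \Lap Q_{1,n} \;+\; o(1),$$
and the gain is that $\log L_n(z,z)$ is subharmonic by the argument of Lemma \ref{lem:logsub}, so the troublesome Laplacian is now a positive measure amenable to subharmonic-compactness arguments.

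I then test against $\psi\in C_c^\infty(\C)$ and pass to the limit term by term along $n_\ell$: (a) $R_{n_\ell}\to R$ uniformly on compacts by Lemma \ref{lem:compact}; (b) $\Lap Q_{1,n_\ell}(z)=n\,r_n^{\,2}\,\Lap Q_1(r_n z)=O(r_n)\to 0$ uniformly on compacts; (c) by Lemma \ref{ubdd}, $C_{n_\ell}$ is locally uniformly bounded and converges a.e.\ to $C$, hence $C_{n_\ell}\to C$ in $L^1_{\mathrm{loc}}$ by dominated convergence and $\dbar C_{n_\ell}\to \dbar C$ in the sense of distributions; (d) since $L_{n_\ell}(z,z)\to L(z,z)$ uniformly on compacts, the subharmonic functions $\log L_{n_\ell}(z,z)$ are locally uniformly bounded above and converge uniformly on compact subsets of the open dense set $\{L(z,z)>0\}$, so by the classical $L^1_{\mathrm{loc}}$-compactness of locally-bounded-above sequences of subharmonic functions together with a subsequence--subsubsequence argument, $\log L_{n_\ell}(z,z)\to\log L(z,z)$ in $L^1_{\mathrm{loc}}$ and hence $\Lap\log L_{n_\ell}(z,z)\to\Lap\log L(z,z)$ weakly. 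Combining (a)--(d) yields the distributional identity $\dbar C = R - \Lap\log L(z,z)$.

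Finally, the relation $L(z,z)=R(z)\,e^{Q_0(z)}$ gives $\log L(z,z)=\log R(z)+Q_0(z)$ pointwise wherever both sides are finite, i.e.\ on the open dense set $\{L(z,z)>0\}$; since $\log L(z,z)$ is subharmonic (Lemma \ref{lem:logsub}) hence in $L^1_{\mathrm{loc}}$, both sides lie in $L^1_{\mathrm{loc}}$ and the identity extends as an equality of distributions on $\C$, giving $\Lap\log L(z,z)=\Lap Q_0+\Lap\log R$ in the sense of distributions. Substituting into $\dbar C = R - \Lap\log L(z,z)$ produces the stated Ward equation. The main obstacle is step (d), which promotes pointwise convergence (modulo a discrete set) to $L^1_{\mathrm{loc}}$ convergence via the classical compactness theorem for subharmonic functions; once this is granted, the remaining steps are routine identifications.
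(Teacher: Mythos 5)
Your proposal follows the paper's overall strategy — pass to the distributional limit along the subsequence in the rescaled pre-limit Ward identity, after first establishing that $\dbar C_{n_\ell}\to\dbar C$ via the local boundedness of $C_{n_\ell}$ from Lemma~\ref{ubdd} and the discreteness of $Z$ — and this part matches the paper almost verbatim. Where you go further is in justifying the convergence $\Lap\log R_{n_\ell}\to\Lap\log R$ in the sense of distributions. The paper dispatches this in one sentence ("By Ward's equation and the locally uniform convergence $R_{n_\ell}\to R$ it then follows\dots''), which is delicate precisely because $R$ may vanish on $Z$, so $\log R_{n_\ell}$ is not locally uniformly bounded below near the zeros, and locally uniform convergence of $R_{n_\ell}$ does not, by itself, control $\Lap\log R_{n_\ell}$ there. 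Your substitution $\Lap\log R_n=\Lap\log L_n(z,z)-\Lap Q_0-\Lap Q_{1,n}$ converts the problematic term into the Laplacian of a sequence of subharmonic functions (Lemma~\ref{lem:logsub}) that are locally uniformly bounded above (Lemma~\ref{lem:bound}) and converge locally uniformly to the subharmonic $\log L(z,z)$ on the open dense set $\{L(z,z)>0\}$; the classical $L^1_{\mathrm{loc}}$-compactness of subharmonic families then delivers the distributional convergence of the Laplacians through the zeros, including the correct point masses. This is the honest way to substantiate the paper's one-line claim, and your final identification $\Lap\log L(z,z)-\Lap Q_0=\Lap\log R$ in distributions correctly bridges back to the stated form. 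The remaining ingredients (the $O(r_n)$ decay of $\Lap Q_{1,n}$ from $Q_1=O(|\zeta|^{2k+1})$, dominated convergence for $C_{n_\ell}$) are all as the paper has them. In short: same route, but you supply the subharmonic-compactness step that the paper leaves implicit, which is the genuinely subtle point of the proof.
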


\begin{proof} The preceding lemmas show that
\begin{equation}\label{rwa}\dbar C_n=R_n-1-\Lap\log R_n+o(1)\end{equation} and that a subsequence $C_{n_\ell}$ converges to $C$ boundedly and locally uniformly on $\C\setminus Z$. Since $Z\cap \compset$ is a finite set
for each compact set $\compset$, it follows that $C_{n_\ell}\to C$ in the sense of distributions, and hence $\dbar C_{n_\ell}\to\dbar C$. By Ward's equation and the locally uniform convergence $R_{n_\ell}\to R$ it then follows that
$\Lap\log R_{n_\ell}\to\Lap\log R$ in the sense of distributions. We can thus pass to the limit as $n_\ell\to\infty$
in the rescaled Ward identity \eqref{rwa}.
\end{proof}

\begin{proof}[Proof of Theorem \ref{01ward}] We follow the strategy in \cite{AKM}, Theorem 4.8.
Suppose that $R(z_0)=0$. We must prove that $R=0$ identically.

Let $D$ be a small disk centered at $z_0$ and write $\chi=\chi_D$ for the characteristic function. Also write $R(z)=\babs{\,z-z_0\,}^{\,2}\tilde{R}(z)$.

Consider the measures
$\mu=\chi\cdot(\Lap Q_0+\Lap\log R)$ and $\nu=\chi\cdot(\Lap Q_0+\Lap\log\tilde{R})$. By lemmas \ref{lem:logsub} and \ref{zjick}, these measures are positive, and $\mu=\delta_{z_0}+\nu$.
Write $C^\mu(z)=\int_\C\frac 1 {z-w}\, d\mu(w)$ for the Cauchy transform of $\mu$. Clearly,
$$C^\mu(z)=\frac 1 {z-z_0}+C^\nu(z),\quad z\in D.$$
Also $\dbar C^\nu=\nu\ge 0$. When $z\in D$, the right hand side in Ward's equation equals $R(z)-\Lap (Q_0+\log R)(z)=R(z)-\dbar C^\mu(z).$
If $C(z)=\int\frac {B(z,w)}{z-w}\, dA(w)$, we have, by Ward's equation, that
$$\dbar(C+C^\mu)(z)=R(z).$$
Hence, by Weyl's lemma, $C(z)=-1/(z-z_0)-C^\nu(z)+v(z)$ where $v$ is smooth near $z_0$. If $C^\mu(z)$ were bounded as $z\to z_0$ then the measure $\mu=\nu+\delta_{z_0}$ would place no mass at $\{z_0\}$, so
$\nu=-\delta_{z_0}+\rho$ where $\rho(\{z_0\})=0$. This contradicts that $\nu\ge 0$. The contradiction shows that
$\babs{\,C(z)\,}\to\infty$ as $z\to z_0$. This in turn contradicts that $C$ is bounded (Lemma \ref{ubdd}), and hence $R(z_0)=0$ is impossible. Hence $\Lap\log R$ is a smooth function on $\C$. Applying Weyl's lemma to the distributional Ward equation
$\dbar C=R-\Lap Q_0-\Lap\log R$ now shows that $C(z)$ is smooth and hence that the equation holds pointwise on $\C$.
\end{proof}

\section{Universality results}\label{Sec4}
In this section, we prove theorems \ref{mth2} and \ref{mth3}. The proof of Theorem \ref{mth3} relies on certain apriori estimates, whose proofs are postponed to Section \ref{Sec5}.

\subsection{Homogeneous singularities}
Assume that $Q$ has a homogeneous singularity of type $2k-2$ at the origin, i.e., that the canonical decomposition is of the form
$Q=Q_0+\re H,\, H=c\,\zeta^{\,2k}$, where $Q_0$ is positively homogeneous of degree $2k$. As always, we write
$\mu_0$ for the measure $d\mu_0=e^{-Q_0}\, dA$.

We now recall the kernel $L_n$ (defined in \eqref{rhk})
$$L_n(z,w)=k_n(z,w)e^{-H_n(z)/2-\bar{H}_n(w)/2},\quad (H_n(z)=nH(r_nz),\quad r_n=n^{-1/2k}).$$
In the present case,
$L_n(z,w)=k_n(z,w)e^{-c\,z^{\,2k}/2-\bar{c}\,\bar{w}^{\,2k}/2}.$
By Lemma \ref{lem:compact}, $L_n$ is the reproducing kernel for the space
$$\calH_n=\{f(z)=q(z)\cdot e^{-c\,z^{\,2k}/2};\,q\in\Poly(n)\}$$
regarded as a subspace of $L^2(\mu_0)$. (This is because $\tilde{\mu}_n=\mu_0$ for the measure $\tilde{\mu}_n$ in \eqref{apmeas}.)

Since the spaces $\calH_n$ are increasing,
$\calH_n\subset\calH_{n+1},$
where the inclusions are isometric, it follows that a unique limiting holomorphic kernel $L=\lim L_{n}$ exists.
 By Theorem \ref{pthm}, the kernel $L$ is the reproducing kernel for a contractively embedded subspace $\calH_*$ of $L^2_a(\mu_0)$, which must contain the dense subset $U= \bigcup \calH_n$. Furthermore, by the reproducing property of $L_n$, we have for each element $f(z)=q(z)\cdot e^{-cz^{2k}/2}\in U$ that
$\langle f,L_{n,z}\rangle_{L^2(\mu_0)}=f(z)$, whenever $n>\operatorname{degree}q$.
It follows that
$$f(z)=\lim_{n\to\infty}\langle f,L_{n,z}\rangle_{L^2(\mu_0)}=\langle f,L_z\rangle_{L^2(\mu_0)},\quad f\in U.$$
Since $U$ is dense in $L^2_a(\mu_0)$, $L$ must equal to the reproducing kernel $L_0$ of $L^2_a(\mu_0)$.
The proof of Theorem \ref{mth2} is complete. q.e.d.

\subsection{Rotational symmetry}
Referring to the canonical decomposition $Q=Q_0+\re H+Q_1$ we now
suppose that $Q_0(z)=Q_0(|\,z\,|)$, and we fix a rotationally symmetric limiting holomorphic kernel
$$L(z,w)=E(z\bar{w}).$$ Writing $E(z)=\sum_0^\infty a_jz^j$, the mass-one inequality
$$\int e^{-Q_0(w)}\babs{\,L(z,w)\,}^{\,2}\, dA(w)\le L(z,z)$$
is seen to be equivalent to that
\begin{equation}\label{co:mass}\sum \babs{\,a_j\,}^{\,2}\babs{\,z\,}^{\,2j}\|\,w^{\,j}\,\|_{L^2(\mu_0)}^{\,2}\le \sum a_j\babs{\,z\,}^{\,2j}.\end{equation}
To use Ward's equation, we first compute the Cauchy transform $C(z)$ as follows:
\begin{align*}C(z)&=\frac 1 {L(z,z)}\int_\C\frac {e^{-Q_0(w)}}{z-w}\babs{\,L(z,w)\,}^{\,2}\, dA(w)\\
&=\frac 1 {E(\babs{\,z\,}^{\,2})}\sum_{j,k}a_j\bar{a}_kz^{\,j}\bar{z}^{\,k}\int_\C \frac {e^{-Q_0(w)}}{z-w}\bar{w}^{\,j}w^{\,k}\, dA(w)\\
&=\frac 1 {E(|\,z\,|^{\,2})}\sum_{j,k}a_j\bar{a}_kz^{\,j}\bar{z}^{\,k}\frac 1 \pi \int_0^\infty
e^{-Q_0(r)}r^{\,j+k}\, dr\int_0^{2\pi}\frac {e^{\,i(k-j)\theta}}{z/r-e^{\,i\theta}}\, d\theta.
\end{align*}
However, as is shown in \cite{AK}, we have that
$$\frac 1 {2\pi}\int_0^{2\pi}\frac {e^{\,i(k-j)\theta}}{z/r-e^{\,i\theta}}\, d\theta=
\begin{cases}-(z/r)^{\,k-j-1}&\,\,\text{if}\,\,|\,z\,|<r,\,\, k-j\ge 1,\cr
\,\,\,\,(z/r)^{\,k-j-1}&\,\,\text{if}\,\,|\,z\,|>r,\,\,k-j\le 0,\cr
\quad 0&\,\,\text{otherwise}.\cr
\end{cases}$$
Thus
\begin{align*}C(z)&=\frac 2 {E(|\,z\,|^{\,2})}\sum_{j,k}a_j\bar{a}_kz^{\,j}\bar{z}^{\,k}\\
&\left(\int_0^{|\,z\,|}e^{-Q_0(r)}r^{j+k}\left(\frac z r\right)^{k-j-1}\1(k\le j)\, dr-\int_{|\,z\,|}^\infty
e^{-Q_0(r)}r^{j+k}\left(\frac z r\right)^{k-j-1}\1(k\ge j+1)\, dr\right)\\
&=A(z)-B(z)\end{align*}
where
\begin{align*}A(z)&=\frac 2 {E(|\,z\,|^{\,2})}\sum_{j,k}a_j\bar{a}_kz^{\,j}\bar{z}^{\,k}\int_0^{|\,z\,|}e^{-Q_0(r)}r^{\,j+k}\left(\frac z r\right)^{\,k-j-1}\, dr,\\ B(z)&=\frac 2 {E(|\,z\,|^{\,2})}\sum_{j,k}a_j\bar{a}_kz^{\,j}\bar{z}^{\,k}\int_{0}^\infty
e^{-Q_0(r)}r^{\,j+k}\left(\frac z r\right)^{\,k-j-1}\1(k\ge j+1)\, dr.
\end{align*}
The term $A(z)$ can be written as
\begin{align*}A(z)&=\frac 1 {zE(|\,z\,|^{\,2})}\sum_{j,k}a_j\bar{a}_k\babs{\,z\,}^{\,2k}\int_0^{|\,z\,|^{\,2}}e^{-Q_0( \sqrt{r})}r^{\,j} \, dr\\
&=\frac 1 z\int_0^{|\,z\,|^{\,2}}e^{-Q_0(\sqrt{r})} E(r)\, dr,\end{align*}
which gives
$$\dbar A(z)=e^{-Q_0(z)}E(|\,z\,|^{\,2})=R(z).$$
The term $B(z)$ is computed as follows,
\begin{align*}B(z)&=\frac 1 {E(|\,z\,|^{\,2})}\sum_{k=1}^\infty\bar{a}_kz^{\,k-1}\bar{z}^{\,k}\sum_{j=0}^{k-1} a_j\int_0^\infty
e^{-Q_0(\sqrt{r})}r^{\,j}\, dr\\
&=\frac 1 {E(|\,z\,|^{\,2})}\sum_{k=1}^\infty\bar{a}_kz^{\,k-1}\bar{z}^{\,k}\sum_{j=0}^{k-1}a_j\|\,z^{\,j}\,\|_{Q_0}^{\,2}.
\end{align*}
Noting that
$$\d_z\log L(z,z)=\frac {\d_z E(|\,z\,|^{\,2})}{E(|\,z\,|^{\,2})}=\frac 1 {E(|\,z\,|^{\,2})}\sum_{k=1}^\infty k\,\bar{a}_kz^{\,k-1}\bar{z}^{\,k},$$
we infer that Ward's equation
$$\dbar A-\dbar B=R-\Lap_z\log L(z,z)$$
is equivalent to that $\dbar (B-\d_z\log L(z,z))=0$. This in turn, is equivalent to that the function
$$\frac 1 {E(|\,z\,|^{\,2})}\sum_{k=1}^\infty \bar{a}_kz^{\,k-1}\bar{z}^{\,k}\left(k-\sum_{j=0}^{k-1}a_j\|\,z^{\,j}\,\|_{L^2(\mu_0)}^{\,2}\right)$$
be entire. It is easy to check that this is the case if and only if all coefficients in the sum vanish, that is, if and only if for each $k\geq 1$ we have that
\begin{equation}\label{01}a_k=0\quad \text{or}\quad \sum_{j=0}^{k-1}a_j\|\,z^{\,j}\,\|_{L^2(\mu_0)}^{\,2}=k.\end{equation}

We now apply the growth estimate in Theorem \ref{apth}, part \ref{r1p}, which says that
\begin{equation}\label{grow}E(|\,z\,|^{\,2})=\Lap Q_0(z)\,e^{\,Q_0(z)}\,(1+o(1))\quad\text{as}\quad  z\to\infty.\end{equation}
We claim that this implies the second alternative in \eqref{01}.

 Indeed, \eqref{grow} is clearly not satisfied if $E$ is constant. Next note that the mass-one inequality \eqref{co:mass} and the zero-one law (Theorem \ref{01ward}) imply that $0<a_0 \leq 1/\|\,1\,\|^{\,2}_{L^2(\mu_0)}$. Since $E(z)$ is not a polynomial by \eqref{grow}, for any $k$ there exists $N \geq k$ such that $a_N \neq 0$. By \eqref{01}, we obtain that if $a_N \neq 0$ but $a_j=0$ for all $j$ with $1\leq j \leq N-1$ then $N=1$ and $a_0 = 1/\|\,1\,\|^{\,2}_{L^2(\mu_0)}$. By a simple induction, we then have $a_k=1/\|\,z^{\,k}\,\|_{L^2(\mu_0)}^{\,2}$ for all $k \geq 0$. Thus, we have
$$E(z)=\sum_{j=0}^\infty\frac 1 {\|\,z^{\,j}\,\|_{L^2(\mu_0)}^{\,2}}z^{\,j}.$$
Since the polynomial $\phi_j(z)=z^{\,j}/\|\,z^{\,j}\,\|_{L^2(\mu_0)}$ is the $j$:th orthonormal polynomial with respect to the measure $\mu_0$, we have
$$L(z,w)=E(z\bar{w})=\sum_{j=0}^\infty \phi_j(z)\bar{\phi}_j(w)=L_0(z,w)$$
where $L_0$ is the Bergman kernel for the space $L^2_a(\mu_0).$
The proof is complete. q.e.d.

\section{Asymptotics for $L_0(z,z)$} \label{Sec4.5}
In this section, we prove part \ref{r0p} of Theorem \ref{apth}.

To this end, let $A_0(z,w)$ be the Hermitian polynomial such that $A_0(z,z)=Q_0(z)$ and
put
$$L_0^\sharp(z,w)=\left[\d_1\dbar_2 A_0\right](z,w)\cdot e^{\,A_0(z,w)}.$$
We write $L_z^\sharp(w)$ for $L_0^\sharp(w,z)$ and, for suitable functions $u$,
$$\pi^\sharp u(z)=\langle u,L_z^\sharp\rangle_{L^2(\mu_0)}=\int_\C u \bar{L}_z^\sharp e^{-Q_0} \, dA.$$
Below, we fix a $z$ with $|z|$ large enough; we must estimate $L_0(z,z)$.
We also fix a number $\delta_0=\delta_0(z)>0$ and write $\chi_z$ for a fixed $C^\infty$-smooth test-function
with $\chi_z(w)=1$ when $|\,w-z\,|\le \delta_0$ and $\chi_z(w)=0$ when $|\,w-z\,|\ge 2\delta_0$.

We will use the following estimate.

\begin{lem} \label{abo} If $|1-w/z|$ is sufficiently small, then
$2\re A_0(z,w)\le Q_0(z)+Q_0(w)-c|z|^{2k-2}|w-z|^2$ where $c$ is a positive constant.
\end{lem}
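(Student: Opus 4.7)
The plan is to expand $F(z,w) := Q_0(z) + Q_0(w) - 2\re A_0(z,w)$ around the diagonal $w = z$ and identify its Hessian. Since the singularity at $0$ is non-degenerate of type $2k-2$, the dominant part $Q_0 = P - \re H$ contains only monomials $z^{\,j}\bar z^{\,k}$ with $j, k \ge 1$ and $j+k = 2k$; writing $Q_0(z) = \sum_{\substack{j+k=2k \\ j,k\ge 1}} c_{jk}\, z^{\,j}\bar z^{\,k}$ with $c_{jk} = \bar c_{kj}$, the polarization is $A_0(z,w) = \sum c_{jk}\, z^{\,j} \bar w^{\,k}$. Regrouping the four sums making up $F$, I obtain the clean identity
\begin{equation*}
F(z,w) = \sum_{\substack{j+k = 2k \\ j, k \ge 1}} c_{jk}\, (z^{\,j} - w^{\,j})\,(\bar z^{\,k} - \bar w^{\,k}),
\end{equation*}
which already shows $F(z,z) = 0$.

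Next I would set $h = w - z$ and, assuming $|1 - w/z| = |h|/|z|$ is small, expand each factor as $z^{\,j} - w^{\,j} = -j\,h\,z^{\,j-1}\,(1 + O(|h/z|))$, and similarly for $\bar z^{\,k} - \bar w^{\,k}$. Multiplying the two expansions and summing over $j,k$ gives
\begin{equation*}
F(z,w) = |h|^{\,2}\,\Lap Q_0(z) + O\!\left(\,|1 - w/z|\cdot |z|^{\,2k-2}\,|h|^{\,2}\,\right),
\end{equation*}
with implicit constants depending only on the coefficients of $Q_0$.

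The non-degeneracy hypothesis says $\Lap Q_0 = \tilde{P}$ is a positive definite polynomial, homogeneous of degree $2k-2$, so $\Lap Q_0(z) \ge c_0\,|z|^{\,2k-2}$ for some $c_0 > 0$, uniformly in $z$. Combining the two estimates,
\begin{equation*}
F(z,w) \ge \bigl(c_0 - C\,|1-w/z|\bigr)\,|z|^{\,2k-2}\,|w-z|^{\,2}.
\end{equation*}
Imposing $|1-w/z| \le c_0/(2C)$ then yields the lemma with $c = c_0/2$.

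The only point requiring care is the remainder estimate: one must verify that the subleading terms in $(z^{\,j} - w^{\,j})(\bar z^{\,k} - \bar w^{\,k})$ are smaller than the leading term by a factor of $|1-w/z|$, so that they can be absorbed into half of the leading contribution once $|1 - w/z|$ is small enough. Everything else is cosmetic; the structural input that makes this lower bound non-trivial is the non-degeneracy estimate $\Lap Q_0(z) \ge c_0|z|^{\,2k-2}$, which is built into the definition of a non-degenerate bulk singularity.
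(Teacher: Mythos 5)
Your argument is correct and follows essentially the same route as the paper: both isolate $\Lap Q_0(z)\,|w-z|^2$ as the second-order term in a Taylor expansion of $Q_0(z)+Q_0(w)-2\re A_0(z,w)$ around the diagonal, bound the higher-order remainder by $C\,|1-w/z|\cdot|z|^{2k-2}|w-z|^2$ using the homogeneity of $Q_0$, and finish with the non-degeneracy estimate $\Lap Q_0(z)\ge c_0|z|^{2k-2}$. Your factored identity $F(z,w)=\sum c_{jm}(z^{j}-w^{j})(\bar z^{m}-\bar w^{m})$ is a clean repackaging of the paper's equation (5.1), not a different method.
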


\begin{proof} Put $h=w-z$. By Taylor's formula, $A_0(w,z)=Q_0(z)+\sum_1^{2k}\frac {\d^jQ_0(z)}{j!}h^j$. Similarly,
$A_0(w,w)=Q_0(z)+\sum_{i+j\ge 1}\frac {\d^i\dbar^j Q_0(z)}{i!j!}h^i\bar{h}^j.$ Hence
\begin{equation}\label{broj}2\re A_0(z,w)-Q_0(z)-Q_0(w)+\Lap Q_0(z)|h|^2=-\sum_{i,j\ge 1,i+j\ge 3}\frac {\d^i\dbar^j Q_0(z)}
{i!j!}h^i\bar{h}^j.\end{equation}
However, since $Q_0$ is homogeneous of degree $2k$, the derivative $\d^i\dbar^j Q_0$ is homogeneous of degree
$2k-i-j$. Hence
$$\babs{\d^i\dbar^j Q_0(z)}\babs{w-z}^{i+j}\le C|z|^{2k-2}|w-z|^2|1-w/z|^{i+j-2}.$$
Thus, if $i+j\ge 3$ and $|1-w/z|$ is sufficiently small, then the left hand side in \eqref{broj} is dominated by an arbitrarily small multiple of $|z|^{2k-2}|z-w|^2$. On the other hand, by homogeneity and positive definiteness of $\Lap Q_0$ we have that
$\Lap Q_0(z)|z-w|^2\ge c'|z|^{2k-2}|z-w|^2$ where $c'$ is a positive constant. The lemma thus follows with any positive constant $c<c'$
\end{proof}

As always, we write $d\mu_0=e^{-Q_0}\, dA$; $L^2_a(\mu_0)$ denotes the associated Bergman space of entire functions, and $L_0$ is the Bergman kernel of that space.

\begin{lem} \label{uv}
Let $|z|\ge 1$ and $\delta_0$ a positive number with $\delta_0/|z|$ sufficiently small. Then there is a constant $C=C(\delta_0)$ such that, for all functions $u\in L^2_a(\mu_0)$
$$\babs{\,u(z)-\pi^\sharp[\chi_zu](z)\,}\le C
\|\,u\,\|_{L^2(\mu_0)}\,(\delta_0^{-1}+1)e^{\,Q_0(z)/2}.$$
\end{lem}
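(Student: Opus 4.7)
The plan is to combine the Cauchy--Pompeiu formula for $u(z)$ with a factorization of the weight $\Psi(z,w):=e^{\,A_0(z,w)-Q_0(w)}$ that mirrors the structure of $L_0^\sharp(z,w)\,e^{-Q_0(w)}$. Since $A_0$ is Hermitian-analytic (hence antianalytic in its second argument) and $\dbar_2 A_0(w,w)=\dbar Q_0(w)$, Taylor-expanding the analytic-in-first-argument function $\dbar_2 A_0(\,\cdot\,,w)$ around $z=w$ yields the key identity
\[
\dbar_w\Psi(z,w)=-(w-z)\,G(z,w)\,\Psi(z,w),\qquad G(z,w):=\int_0^1[\d_1\dbar_2 A_0]\bigl((1-s)w+sz,\,w\bigr)\,ds,
\]
with $\Psi(z,z)=1$ and $G(z,z)=\Lap Q_0(z)$.

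First I would apply Cauchy--Pompeiu, $F(z)=-\int\dbar_w F(w)/(w-z)\,dA(w)$, to the compactly supported function $F(w):=\chi_z(w)\,u(w)\,\Psi(z,w)$, which satisfies $F(z)=u(z)$. Since $\dbar u=0$, the factorization above turns this into
\[
u(z)=\int\chi_z(w)\,u(w)\,G(z,w)\,\Psi(z,w)\,dA(w)-\int\frac{(\dbar\chi_z)(w)\,u(w)\,\Psi(z,w)}{w-z}\,dA(w).
\]
For the cutoff integral (second term), on $\supp(\dbar\chi_z)\subset\{\delta_0\le|w-z|\le 2\delta_0\}$ Lemma \ref{abo} gives $|\Psi(z,w)|^{\,2}\,e^{Q_0(w)}=e^{Q_0(z)}\,e^{-D(z,w)}\le e^{Q_0(z)}$ where $D(z,w):=Q_0(z)+Q_0(w)-2\re A_0(z,w)$, while $|\dbar\chi_z|\lesssim\delta_0^{-1}$ and $|w-z|^{-1}\le\delta_0^{-1}$. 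Cauchy--Schwarz against $\|u\|_{L^2(\mu_0)}$, together with the annulus area $\lesssim\delta_0^{\,2}$, yields the bound $C\,\|u\|_{L^2(\mu_0)}\,\delta_0^{-1}\,e^{Q_0(z)/2}$, matching the $\delta_0^{-1}$ term in the lemma.

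It remains to identify the first integral with $\pi^\sharp[\chi_z u](z)$, whose integrand has $[\d_1\dbar_2 A_0](z,w)$ in place of $G(z,w)$. These differ by the divided-difference remainder $[\d_1\dbar_2 A_0](z,w)-G(z,w)=(w-z)\,H(z,w)$ for an explicit polynomial $H$. Using the stronger form of Lemma \ref{abo} with $D(z,w)\ge c\,|z|^{\,2k-2}|w-z|^{\,2}$ (valid since $\delta_0/|z|$ is small), Cauchy--Schwarz bounds the discrepancy by $\|u\|_{L^2(\mu_0)}\,e^{Q_0(z)/2}\bigl(\int|H|^{\,2}|w-z|^{\,2}\,e^{-D(z,w)}\,dA\bigr)^{1/2}$. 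The Gaussian moment integral scales as $|z|^{-4(k-1)}$, while $|H|^{\,2}\lesssim|z|^{\,2(2k-3)}$ on the effective Gaussian support $|w-z|\sim|z|^{-(k-1)}$, so the bound simplifies to $C\,\|u\|_{L^2(\mu_0)}\,|z|^{-1}\,e^{Q_0(z)/2}\le C\,\|u\|_{L^2(\mu_0)}\,e^{Q_0(z)/2}$ for $|z|\ge 1$, contributing the "$+1$" term. The main obstacle is precisely this replacement: one must control the $|z|$-growth of each divided-difference Taylor coefficient in $H$ against the Gaussian decay of $\Psi$, and if a single step is insufficient one can iterate the identity $(w-z)\Psi=-\dbar_w\Psi/G(z,w)$ finitely many times, each round trading a power of $(w-z)$ for a $\dbar$-derivative via integration by parts and generating only additional $\dbar\chi_z$-supported error terms of the same structural form as the cutoff error already bounded.
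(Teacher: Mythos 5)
Your proof is correct and takes essentially the same route as the paper: you both reduce the difference $u(z)-\pi^\sharp[\chi_z u](z)$ to a $\dbar\chi_z$-supported cutoff term (giving $\delta_0^{-1}$) and a Taylor-remainder comparison between the Hessian $\d_1\dbar_2 A_0(z,w)$ and its divided-difference average $G(z,w)$ (giving the $+1$), with Lemma \ref{abo} supplying the Gaussian off-diagonal decay in both places. The paper packages this by introducing the ratio $F(z,w)=(w-z)[\d_1\dbar_2 A_0](z,w)/(\dbar_2 A_0(w,w)-\dbar_2 A_0(z,w))$ and integrating by parts in $\pi^\sharp[\chi_z u]$, whereas you start from Cauchy--Pompeiu applied to $\chi_z u\Psi$; these are algebraically the same manipulation, and your final hedge about iterating the $\dbar$-identity is unnecessary, since your single-step bound $|z|^{-1}\le 1$ already closes the argument.
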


\begin{proof} Note that
\begin{equation}\label{pa:1}\begin{split}\pi^\sharp[\chi_zu](z)&=\int_\C\chi_z(w)u(w)\left[\d_1\dbar_2A_0\right](z,w)\cdot e^{\,A_0(z,w)-A_0(w,w)}\, dA(w)\\
&=-\int_\C\frac {u(w)\chi_z(w)F(z,w)}{w-z}\dbar_w\left[e^{\,A_0(z,w)-A_0(w,w)}\right]\, dA(w),\\
\end{split}
\end{equation}
where
\begin{equation}\label{fzw}F(z,w)=\frac {(w-z)\left[\d_1\dbar_2 A_0\right](z,w)}{\dbar_2A_0(w,w)-\dbar_2A_0(z,w)}.\end{equation}
Now fix $w$. The denominator $P(z)=\dbar_2A_0(w,w)-\dbar_2A_0(z,w)$ is by Taylor's formula equal to the polynomial
$$- \Lap Q_0(w)\cdot(z-w) - \frac {\d\Lap Q_0(w)} 2\cdot (z-w)^{\,2} - \cdots- \frac {\d^{k-1}\Lap Q_0(w)}{k!}\cdot (z-w)^{\,k}.$$
Here the derivative
$\d^j\Lap Q_0(w)=\babs{\,w\,}^{\,2k-2-j} \d^j\Lap Q_0(w/|\,w\,|)$
is positively homogeneous of degree $2k-2-j$. Put $c(w)=\Lap Q_0(w/|\,w\,|)$. We then have that
$$P(z)=c(w)|\,w\,|^{\,2k-2}\cdot(w-z)+O(\,(w-z)^{\,2}\,),\quad (z\to w).$$
Since also $\d_1\dbar_2A_0(z,w)=c(z)\babs{\,z\,}^{\,2k-2}(1+O(w-z))$, we have by \eqref{fzw}
\begin{equation}\label{eu}F(z,w)=1+O(w-z),\qquad (w\to z).\end{equation}
By the form of $F$ it is also clear that
\begin{equation}\label{fu}\dbar_2 F(z,w)=O(w-z),\quad (w\to z).\end{equation}

An integration by parts in \eqref{pa:1} gives
$\pi^\sharp[\chi_zu](z)=u(z)+\epsilon_1+\epsilon_2$ where
\begin{align*}\epsilon_1&=\int\frac {u(w)\dbar\chi_z(w)F(z,w)}{w-z}\,e^{\,A_0(z,w)-A_0(w,w)}\, dA(w),\\
\epsilon_2&=\int\frac {u(w)\chi_z(w)\dbar_2F(z,w)}{w-z}\,e^{\,A_0(z,w)-A_0(w,w)}\, dA(w).
\end{align*}
Inserting the estimates \eqref{eu} and \eqref{fu}, using also that $\dbar\chi_z(w)=0$ when $|\,w-z\,|\le\delta_0$ we find that
\begin{align*}\babs{\,\epsilon_1\,}&\le C\delta_0^{-1}\int\babs{\,u(w)\,}\babs{\,\dbar\chi_z(w)\,}e^{\,\re A_0(z,w)-Q_0(w)}\, dA(w),\\
\babs{\,\epsilon_2\,}&\le C\int\chi_z(w)\babs{\,u(w)}e^{\,\re A_0(z,w)-Q_0(w)}\,dA(w).
\end{align*}
To estimate $\epsilon_1$ we use Lemma \ref{abo} to get
\begin{equation}\label{tay}e^{\,\re A_0(z,w)-Q_0(w)/2}\le Ce^{\,Q_0(z)/2-c|z-w|^2}.\end{equation}
This gives
\begin{align*}\babs{\,\epsilon_1\,}e^{-Q_0(z)/2}&\le C\delta_0^{-1}\int\babs{\,u(w)\,}|\,\dbar\chi_z(w)\,|e^{-Q_0(w)/2}\, dA(w)\\
&\le
C\delta_0^{-1}\|\,u\,\|_{L^2(\mu_0)} \|\,\dbar\chi_z\,\|_{L^2} \le C'\|\,u\,\|_{L^2(\mu_0)}.
\end{align*}
To estimate $\epsilon_2$ we note that (again by \eqref{tay})
$$\babs{\,\epsilon_2\,}e^{-Q_0(z)/2}\le C\|\,u\,\|_{L^2(\mu_0)}\left(\int_{|\,w-z\,|\le 2\delta_0}e^{-c|z-w|^2}dA(w)\right)^{1/2}\le C\|\,u\,\|_{L^2(\mu_0)}.$$
The proof is complete.
\end{proof}

Let $\pi_0:L^2(\mu_0)\to L^2_a(\mu_0)$ be the Bergman projection, $\pi_0[f](z)=\langle f,L_z\rangle_{L^2(\mu_0)}$, where we write $L_z(w)$ for $L_0(w,z)$. Noting that
$$(\pi^\sharp[\chi_zL_z](z))^{\,*}=\langle\chi_zL_z,L_z^\sharp\rangle^{\,*}=\langle \chi_zL_z^\sharp,L_z\rangle=\pi_0[\chi_zL_z^\sharp](z),$$
we see that
$$\babs{\,L_z(z)-\pi_0[\chi_z L_z^\sharp](z)\,}=\babs{\,L_z(z)-\pi^\sharp[\chi_zL_z](z)\,}.$$
If we now choose $u=L_z$ in Lemma \ref{uv} and recall that $\|\,L_z\,\|_{L^2(\mu_0)}^{\,2}=L_0(z,z)$, we obtain the estimate
\begin{equation}\label{l0:cor}\babs{\,L_0(z,z)-\pi_0[\chi_zL_z^\sharp](z)\,}\le C\sqrt{L_0(z,z)}\cdot e^{\,Q_0(z)/2},\qquad  \babs{z}\ge 1.\end{equation}

\begin{lem}\label{ho:lem} There is a constant $C$ such that for all $|z|\ge 1$ and all $\delta_0=\delta_0(z)>0$ with $\delta_0/|z|$ small enough
$$\babs{\,\Lap Q_0(z)\,e^{\,Q_0(z)}-\pi_0\left[\chi_zL_z^\sharp\right](z)\,}\le C|\,z\,|^{\,k-1}e^{\,Q_0(z)}.$$
\end{lem}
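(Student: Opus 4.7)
The quantity to bound equals $v(z)$, where $v := \chi_z L_z^\sharp - \pi_0[\chi_z L_z^\sharp]$; indeed $L_z^\sharp(z) = \Lap Q_0(z)\,e^{Q_0(z)}$ and $\chi_z(z) = 1$, so $v(z) = \Lap Q_0(z)\,e^{Q_0(z)} - \pi_0[\chi_z L_z^\sharp](z)$. Since $L_z^\sharp$ is entire in $w$, we have $\dbar v = (\dbar \chi_z)\,L_z^\sharp$, a datum supported on the annulus $\{\,\delta_0 \leq |\,w-z\,| \leq 2\delta_0\,\}$; moreover $v$ is orthogonal to $L^2_a(\mu_0)$, hence is the minimal $L^2(\mu_0)$-solution of the $\dbar$-equation. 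The plan is a two-step argument: an $L^2$-bound on $v$ via H\"ormander, then a sharp pointwise bound via a subharmonic majorization twisted by the holomorphic polarization $A_0$.

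For the $L^2$-bound, I would apply H\"ormander's estimate with weight $e^{-Q_0}$ (justified by an $\eps$-regularization $Q_0 + \eps|\,w\,|^{\,2}$ since the source is supported far from the degeneracy of $\Lap Q_0$ at the origin), obtaining
\[
\|\,v\,\|_{L^2(\mu_0)}^{\,2} \leq \int_{\supp \dbar \chi_z} \frac{|\,\dbar \chi_z\,|^{\,2}\, |\,L_z^\sharp\,|^{\,2}}{\Lap Q_0}\, e^{-Q_0}\, dA.
\]
On the annulus $|\,w\,| \sim |\,z\,|$, homogeneity gives $\Lap Q_0(w) \geq c|\,z\,|^{\,2k-2}$ and $|\,\d_1\dbar_2 A_0(w,z)\,|^{\,2} \leq C|\,z\,|^{\,4k-4}$; together with Lemma \ref{abo} (which yields $2\re A_0(w,z) - Q_0(w) \leq Q_0(z) - c|\,z\,|^{\,2k-2}|\,w-z\,|^{\,2}$), the estimates $|\,\dbar\chi_z\,|^{\,2} \leq C/\delta_0^{\,2}$, and area $O(\delta_0^{\,2})$, this yields
\[
\|\,v\,\|_{L^2(\mu_0)}^{\,2} \leq C\,|\,z\,|^{\,2k-2}\, e^{Q_0(z)}\, e^{-c|\,z\,|^{\,2k-2}\delta_0^{\,2}}.
\]

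For the pointwise bound, I would observe that $v$ is holomorphic on $D(z, \delta_0)$ (since $\dbar\chi_z = 0$ there), so $u(w) := v(w)\, e^{-A_0(w,z)}$ is holomorphic there as well, and $|\,u\,|^{\,2} = |\,v\,|^{\,2}\, e^{-2\re A_0(\cdot,z)}$ is subharmonic. The mean-value inequality over $D(z,r)$ with $r\le\delta_0$ reads
\[
|\,v(z)\,|^{\,2}\, e^{-2Q_0(z)} \leq \frac{1}{\pi r^{\,2}} \int_{D(z,r)} |\,v(w)\,|^{\,2}\, e^{-2\re A_0(w,z)}\, dA(w).
\]
The same Taylor expansion that produces Lemma \ref{abo} also yields the matching lower bound $2\re A_0(w,z) \geq Q_0(z) + Q_0(w) - C|\,z\,|^{\,2k-2}|\,w-z\,|^{\,2}$, so for radii $r$ with $r^{\,2}|\,z\,|^{\,2k-2} = O(1)$ we get $e^{-2\re A_0(w,z)} \leq C\, e^{-Q_0(z) - Q_0(w)}$ on $D(z,r)$, whence $|\,v(z)\,|^{\,2} \leq C\, r^{-2}\, e^{Q_0(z)}\, \|\,v\,\|_{L^2(\mu_0)}^{\,2}$.

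Finally, I would choose $r = |\,z\,|^{-(k-1)}$ and $\delta_0 = \sqrt{K\log|\,z\,|}\, |\,z\,|^{-(k-1)}$ with $K$ a sufficiently large constant. Both conditions $r \leq \delta_0$ and $\delta_0/|\,z\,| \to 0$ hold, $|\,z\,|^{\,2k-2}\delta_0^{\,2} = K\log|\,z\,|$, and $1/r^{\,2} = |\,z\,|^{\,2k-2}$. Combining the two estimates gives $|\,v(z)\,|^{\,2} \leq C\, |\,z\,|^{\,4k-4-cK}\, e^{\,2Q_0(z)}$, which for $K \geq (2k-2)/c$ is $\leq C\,|\,z\,|^{\,2(k-1)}\, e^{\,2Q_0(z)}$, yielding the claim. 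The key technical point--and the step most likely to look mysterious without this setup--is the second one: replacing the naive subharmonic majorant $|\,v\,|^{\,2}$ by the twisted version $|\,v\,|^{\,2} e^{-2\re A_0(\cdot,z)}$ relaxes the admissible averaging radius from $r \lesssim |\,z\,|^{-(2k-1)}$ (needed to control the oscillation of $Q_0$ directly) up to $r \lesssim |\,z\,|^{-(k-1)}$, and it is precisely this improvement that converts the crude $|\,z\,|^{\,2k-2}$ loss into the sharp $|\,z\,|^{\,k-1}$.
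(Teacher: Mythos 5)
Your proof is correct and shares the paper's overall skeleton (minimize the $\dbar$-problem, H\"ormander $L^2$-bound, then a pointwise sub-mean-value step), but the second step is implemented in a genuinely different way. The paper invokes a generic pointwise-$L^2$ estimate of the form
$|\,u(z)\,|^2 e^{-Q_0(z)}\le C\,e^{\,c''\delta\Lap Q_0(z)\,|z|}\,\delta^{-2}\int_{D(z,\delta)}|\,u\,|^2 e^{-Q_0}\,dA$,
which is obtained by twisting with the radial (non-holomorphic) factor $e^{\alpha|w|^2/2}$; the price is the exponential error $e^{\,c''\delta|z|^{\,2k-1}}$, which is then beaten by taking $\delta_0\sim|z|^{1/2}$ so that the H\"ormander gain $e^{-c'\delta_0^2|z|^{\,2k-2}}=e^{-c'\epsilon^2|z|^{\,2k-1}}$ dominates. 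You instead twist by the holomorphic polarization $e^{-A_0(\cdot,z)}$, making $|\,v\,|^2 e^{-2\re A_0(\cdot,z)}$ exactly subharmonic, so the exponential error disappears; the constraint becomes the geometric one $r\lesssim|z|^{-(k-1)}$, and the H\"ormander gain only needs to kill a polynomial factor $|z|^{\,4k-4}$, which explains why $\delta_0\sim\sqrt{\log|z|}\,|z|^{-(k-1)}$ suffices. Both are valid; your twist is arguably the more structurally natural one, and it isolates the correct length scale $|z|^{-(k-1)}$ transparently.

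One caveat worth flagging: although your $\delta_0\sim\sqrt{\log|z|}\,|z|^{-(k-1)}$ establishes the lemma as stated, this cutoff is too narrow for the downstream use in the proof of Theorem \ref{apth}\,\ref{r0p}, where the same $\chi_z$ must also feed into \eqref{l0:cor}. The constant there is traceable to Lemma \ref{uv}, which carries a factor $\delta_0^{-1}+1$; with your $\delta_0$ this grows like $|z|^{\,k-1}/\sqrt{\log|z|}$, which would spoil the argument deducing $R_0=\Lap Q_0\cdot(1+O(z^{1-k}))$ from \eqref{s:in}. The paper's $\delta_0\sim|z|^{1/2}$ keeps $\delta_0^{-1}$ bounded. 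This is easily repaired: your holomorphically twisted pointwise estimate still applies with $\delta_0=\epsilon|z|^{1/2}$ (take $r=|z|^{-(k-1)}$; then $r\le\delta_0$ for large $|z|$, and the H\"ormander gain $e^{-c\epsilon^2|z|^{\,2k-1}}$ is far more than enough to absorb $|z|^{\,4k-4}$), so the method is fine---only the specific choice of $\delta_0$ should be enlarged to match what the surrounding argument needs.
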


\begin{proof} Consider the function
$u_0=\chi_zL_z^\sharp-\pi_0[\chi_zL_z^\sharp].$
This is the norm-minimal solution in $L^2(\mu_0)$ to the problem $\dbar u=(\dbar\chi_z)\cdot L_z^\sharp$.

Since $Q_0$ is strictly subharmonic on the support of $\chi_z$ we can apply the standard Hörmander estimate (e.g. \cite{H}, p. 250) to obtain
\begin{align*}\|\,u&\,\|_{L^2(\mu_0)}^{\,2}\le\int_\C\babs{\,\dbar\chi_z\,}^{\,2}\babs{\,L_z^\sharp\,}^{\,2}\frac {e^{-Q_0}}{\Lap Q_0}\, dA\\
&\le C \babs{\,z\,}^{\,-(2k-2)}\|\,\dbar \chi_z\,\|_{L^2}^{\,2}\sup_{\delta_0\le \babs{w-z}\le 2\delta_0}
\babs{\,[\d_1\dbar_2 A_0](z,w)\,}^{\,2}e^{\,2\re A_0(z,w)-A_0(w,w)},
\end{align*}
where we used homogeneity of $\Lap Q_0$.

By Taylor's formula and the estimate \eqref{tay} we have when $\delta_0\le \babs{\,w-z\,}\le 2\delta_0$
$$\babs{\,[\d_1\dbar_2 A_0](z,w)\,}^{\,2}e^{\,2\re A_0(z,w)-A_0(w,w)}\le C\Lap Q_0(z)^{\,2}e^{\,Q_0(z)-2c|z|^{2k-2}|z-w|^2}.$$
By the homogeneity of $\Lap Q_0$
we thus obtain the estimate
\begin{equation}\label{verd}\|\,u\,\|_{L^2(\mu_0)}\le C|\,z\,|^{\,k-1} e^{\,Q_0(z)/2-c'\delta_0^2|z|^{2k-2}}.\end{equation}
 We now pick another (small) number $\delta>0$ and
 invoke the following pointwise-$L^2$ estimate (see e.g \cite{AKM}, Lemma 3.1, or the proof of the inequality \eqref{sub:ineq})
\begin{equation}\label{vemm}\babs{\,u(z)\,}^{\,2}e^{-Q_0(z)}\le Ce^{c''\delta\Lap Q_0(z)|z|}\delta^{-2}\int_{D(z,\delta)}\babs{\,u(w)\,}^{\,2}e^{-Q_0(w)}\, dA(w).\end{equation}
Combining with \eqref{verd}, this gives
\begin{equation}\label{veck}\babs{\,u(z)\,}^{\,2}e^{-Q_0(z)}\le C\delta^{-2}e^{-c'\delta_0^2|z|^{2k-2}+c''\delta|z|^{2k-1}}|\,z\,|^{\,2k-2}e^{\,Q_0(z)}.\end{equation}
Choosing $\delta_0$ a small multiple of $|z|^{1/2}$ and then $\delta$ small enough, we insure that the right hand side is
dominated by $C|z|^{2k-2}e^{Q_0(z)}$, as desired.
\end{proof}

\begin{proof}[Proof of Part \ref{r0p} of Theorem \ref{apth}]
By the estimate \eqref{l0:cor} and Lemma \ref{ho:lem} we have
\begin{align*}\babs{\,\Lap Q_0(z)\,e^{\,Q_0(z)}-L_0(z,z)\,}&\le C_1\sqrt{L_0(z,z)}e^{\,Q_0(z)/2}+C_2|\,z\,|^{\,k-1}e^{\,Q_0(z)}.
\end{align*}
Writing $R_0(z)=L_0(z,z)e^{-Q_0(z)}$, this becomes
\begin{equation}\label{s:in}\babs{\,|\,z\,|^{\,2k-2}c(z)-R_0(z)\,}\le C_1\sqrt{R_0(z)}+C_2|\,z\,|^{\,k-1},\quad c(z)=\Lap Q_0(z/|\,z\,|).\end{equation}
We must prove that the left hand side in \eqref{s:in} is dominated by $M\babs{\,z\,}^{\,1-k}\Lap Q_0(z)$ for all large $|\,z\,|$, where $M$ is a suitable constant. If this is false, there are two possibilities. Either $R_0(z)\le (1-M|\,z\,|^{\,1-k})\Lap Q_0(z)$ for arbitrarily large $|\,z\,|$. Then \eqref{s:in} implies
$$M\babs{\,z\,}^{\,k-1}c(z)\le C_1\sqrt{R_0(z)}+C_2|\,z\,|^{\,k-1}\le (C_1'+C_2)|\,z\,|^{\,k-1},$$
and we reach a contradiction for large enough $M$.

 In the remaining case we have $R_0(z)\ge (1+M|\,z\,|^{\,1-k})\Lap Q_0(z)$. Then \eqref{s:in} gives the estimate $R_0(z)\ge cM^{\,2}|\,z\,|^{\,2k-2}$ for some $c>0$. Since $\Lap Q_0(z) \leq c'|\,z\,|^{\,2k-2}$ for some $c'>0$, we obtain
$$ R_0(z)-\Lap Q_0(z) \geq (cM^{\,2}-c') \babs{\,z\,}^{\,2k-2}.$$
Choosing $M$ large enough, we obtain $R_0(z) \geq C_3 M \babs{\,z\,}^{\,4k-4}$ by \eqref{s:in} again. Repeating the above argument gives $R_0(z) \geq C_p M \babs{\,z\,}^{\,2p}$ for all sufficiently large $|\,z\,|$ for some constant $C_p>0$. On the other hand, we will show that
\begin{equation}\label{u-est}R_0(z)\le C(1+|\,z\,|^{\,4k-2})\end{equation} for all $z$, which will give the desired contradiction. To see this, note that for functions $u\in L_a^2(\mu_0)$, the estimate \eqref{vemm} gives
$$\babs{\,u(z)\,}^2e^{-Q_0(z)}\le C\delta^{-2}e^{C|\,z\,|^{\,2k-1}\delta}\|u\|_{L^2(\mu_0)}^2,\quad (|z|\ge 1,\,0<\delta<1).$$
Taking $\delta=|\,z\,|^{1-2k}$ we obtain $\babs{u(z)}^2\le C|\,z\,|^{4k-2}e^{Q_0(z)}
\|u\|_{L^2(\mu_0)}^2$.
Since
$$ L_0(z,z)= \sup \{\, \babs{\,u(z)\,}^{\,2}\, ;\,  u\in L_a^2(\mu_0),\, \|\, u\, \|_{L^2(\mu_0)} \leq 1\,\},$$
we now obtain the estimate \eqref{u-est}.
\end{proof}

\section{Apriori estimates for the one-point function} \label{Sec5}
In this section, we prove part \ref{r1p} of Theorem \ref{apth}.

As before, we write $Q=Q_0+\re H+Q_1$ for the canonical decomposition of $Q$ at $0$, and we write
$\mu_0$ for the measure $d\mu_0=e^{-Q_0}\, dA$. In this section, the assumption that $0$ is in the bulk
of the droplet will become important.

Our arguments below essentially follow by adaptation of the previous section.

Fix a point $\zeta$ in a small neighbourhood of $0$ with $\babs{\,\zeta\,}\ge r_n$.
We also fix a number $\delta_0=\delta_0(\zeta)\ge \const>0$ with $\delta_0(\zeta)\cdot r_n/|\zeta|$ uniformly small, and a smooth
function $\psi$ with $\psi=1$ in $D(0,\delta_0)$ and $\psi=0$ outside $D(0,2\delta_0)$.
We define a function $\chi_\zeta=\chi_{\zeta,n}$ by
$$\chi_\zeta(\omega)=\psi((\omega-\zeta)/r_n).$$

Let
$A(\eta,\omega)$ be a Hermitian-analytic function in a neighbourhood of $(0,0)$, satisfying $A(\eta,\eta)=Q(\eta)$. We shall essentially apply the definition of the approximating kernel (denoted $L_0^\sharp$ in the preceding section) with "$A_0$'' replaced by "$nA$''. We denote this kernel by $\bfL_{n}^\sharp$, viz.
$$\bfL_{n}^\sharp(\zeta,\eta)=n\,\d_1\dbar_2 A(\zeta,\eta)\cdot e^{\,nA(\zeta,\eta)}.$$
The corresponding "approximate projection'' is defined on suitable functions $u$ by
$$\pi_{n}^\sharp u(\zeta)=\langle u,\bfL_\zeta^\sharp\rangle_{L^2(\mu_n)},\quad d\mu_n=e^{-nQ}\, dA,$$
where, for convenience, we write $\bfL^\sharp_\zeta$ instead of $\bfL^\sharp_{n,\zeta}$.

\begin{lem} \label{mack}
Suppose that $u$ is holomorphic in a neighbourhood of $\zeta$ and $\delta_0(\zeta)\cdot r_n/|\zeta|\le\eps_0$ (small enough). Then there is a constant $C=C(\eps_0)$ such that, when $r_n\le |\,\zeta\,|\le r_n\log n$,
$$\babs{\,u(\zeta)-\pi_{n}^\sharp[\chi_\zeta u](\zeta)\,}\le C(1+(\delta_0r_n)^{-1})\|\,u\,\|_{L^2(\mu_n)}e^{\,nQ(\zeta)/2}.$$
\end{lem}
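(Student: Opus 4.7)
The plan is to adapt the proof of Lemma \ref{uv} to the rescaled setting with full potential $nQ$ on the $r_n$-scale about $\zeta$. The ingredients are the same: an algebraic factorisation turning $n\d_1\dbar_2 A(\zeta,\omega)$ into a $\dbar$-derivative, integration by parts via Cauchy--Pompeiu, and a pointwise Gaussian-decay estimate playing the role of Lemma \ref{abo}.

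First, set $g(\omega):=e^{nA(\zeta,\omega)-nQ(\omega)}$. A direct computation (the factor $n$ cancelling between numerator and denominator) gives
\[
n\,\d_1\dbar_2 A(\zeta,\omega)\cdot g(\omega)=-\frac{F(\zeta,\omega)}{\omega-\zeta}\,\dbar_\omega g(\omega),
\]
where $F$ is defined exactly as in \eqref{fzw} (with $A_0$ replaced by $A$) and satisfies $F(\zeta,\zeta)=1$, $\dbar_2 F(\zeta,\omega)=O(\omega-\zeta)$ as $\omega\to\zeta$. Substituting into $\pi_n^\sharp[\chi_\zeta u](\zeta)$ and integrating by parts via Cauchy--Pompeiu (valid since $u$ is holomorphic throughout $\supp\chi_\zeta\subset D(\zeta,2\delta_0 r_n)$ for $n$ large) would yield
\[
\pi_n^\sharp[\chi_\zeta u](\zeta)=u(\zeta)+\epsilon_1+\epsilon_2,
\]
with $\epsilon_1$ an integral carrying $\dbar\chi_\zeta$ (supported on the annulus $\delta_0 r_n\le |\omega-\zeta|\le 2\delta_0 r_n$) and $\epsilon_2$ an integral carrying the factor $\chi_\zeta\,\dbar_2 F(\zeta,\omega)/(\omega-\zeta)$ (supported on $|\omega-\zeta|\le 2\delta_0 r_n$, and bounded there).

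The decisive step is the analogue of Lemma \ref{abo}: for $|\omega-\zeta|/|\zeta|$ sufficiently small, which on $\supp\chi_\zeta$ follows from the standing hypothesis $\delta_0 r_n/|\zeta|\le\eps_0$,
\[
2\re A(\zeta,\omega)-Q(\zeta)-Q(\omega)\le -c\,|\zeta|^{2k-2}\,|\omega-\zeta|^2.
\]
This is proved by Taylor-expanding at $\zeta$ exactly as in \eqref{broj}, using that $\Lap Q(\zeta)=\Lap Q_0(\zeta)(1+O(|\zeta|))\ge c|\zeta|^{2k-2}$ by non-degeneracy, and that the remainders $|\d^i\dbar^j Q(\zeta)||\omega-\zeta|^{i+j}$ with $i+j\ge 3$ are bounded by $C|\zeta|^{2k-2}|\omega-\zeta|^2(|\omega-\zeta|/|\zeta|)^{i+j-2}$, hence absorbed once $\eps_0$ is small; the contributions from $\re H$ and $Q_1$ are even more subleading because $|\zeta|\le r_n\log n\to 0$. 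Multiplying by $n$ and combining $|\zeta|\ge r_n$ with $n\,r_n^{2k}=1$ turns this into Gaussian decay on the natural scale $r_n$:
\[
2n\re A(\zeta,\omega)-nQ(\omega)\le nQ(\zeta)-c\,(|\omega-\zeta|/r_n)^2.
\]

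With this pointwise bound, both error terms are handled by Cauchy--Schwarz against $\|u\|_{L^2(\mu_n)}$. Choosing $\psi$ as a $\delta_0$-dilate of a fixed bump gives $|\dbar\chi_\zeta|\le C/(\delta_0 r_n)$; combined with $1/|\omega-\zeta|\le 1/(\delta_0 r_n)$ on the annulus of area $O((\delta_0 r_n)^2)$ and the Gaussian factor, this yields $|\epsilon_1|\le C(\delta_0 r_n)^{-1}\|u\|_{L^2(\mu_n)}\,e^{nQ(\zeta)/2}$, the $1/(\delta_0 r_n)$ part of the claim. For $\epsilon_2$, boundedness of $\dbar_2 F/(\omega-\zeta)$ together with $\int e^{-c(|\omega-\zeta|/r_n)^2}\,dA=O(r_n^2)$ gives $|\epsilon_2|\le C\|u\|_{L^2(\mu_n)}\,e^{nQ(\zeta)/2}$, the $O(1)$ part. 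The main obstacle is the Taylor estimate: one must verify uniformity of the constants throughout the range $r_n\le|\zeta|\le r_n\log n$, which is precisely why the upper bound $|\zeta|\le r_n\log n$ is imposed---it keeps $\zeta$ in a fixed small neighbourhood of $0$ where $\re H$ and $Q_1$ are genuinely lower-order perturbations of $\Lap Q_0(\zeta)|\omega-\zeta|^2$.
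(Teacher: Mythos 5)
Your proposal is correct and follows essentially the same route as the paper: the factorisation $n\,\d_1\dbar_2 A\cdot e^{nA(\zeta,\cdot)-nQ}=-\frac{F(\zeta,\cdot)}{\cdot-\zeta}\dbar(e^{nA(\zeta,\cdot)-nQ})$, integration by parts yielding $u(\zeta)+\epsilon_1+\epsilon_2$, the estimates $F=1+O(\omega-\zeta)$, $\dbar_2F=O(\omega-\zeta)$, the Gaussian-decay bound for $2\re A(\zeta,\omega)-Q(\zeta)-Q(\omega)$, and Cauchy--Schwarz. The only (cosmetic) difference is that you establish the Gaussian bound directly for $A$ and $Q$ by Taylor expansion, whereas the paper invokes the homogeneous version (Lemma \ref{abo} for $A_0,Q_0$) and remarks that $A$ may be replaced by $A_0$ to within negligible terms on the range $r_n\le|\zeta|\le r_n\log n$; both arguments require the same control of $\d^i\dbar^j Q(\zeta)$ by $|\zeta|^{2k-i-j}$, which you spell out correctly.
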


\begin{proof} It will be sufficient to indicate how the proof of Lemma \ref{uv} is modified in the present setting. We start as earlier, by writing
\begin{align*}\pi_n^\sharp[\chi_\zeta f](\zeta)&=-\int\frac {u(\omega)\chi_\zeta(\omega)F(\zeta,\omega)} {\omega-\zeta}
\dbar_\omega\left[e^{-n(A(\omega,\omega)-A(\zeta,\omega))}\right]\, dA(\omega),
\end{align*}
where
$$F(\zeta,\omega)=\frac{(\omega-\zeta)\d_1\dbar_2 A(\zeta,\omega)}
{\dbar_2 A(\omega,\omega)-\dbar_2 A(\zeta,\omega)}.$$
 Here, we may replace "$A$'' by "$A_0$'' to within negligible terms, for the relevant $\zeta$ and $\omega$. More precisely, Taylor's formula gives that
\begin{align}\label{aests}\dbar_2 A(\omega,\omega)-\dbar_2 A(\zeta,\omega)
&=\Lap Q_0(\omega)(1+O(r_n\log n)) \cdot (\omega-\zeta)+O(\,(\omega-\zeta)^{\,2}\,),\\ \d_1\dbar_2 A(\zeta,\omega)
&=\d_1\dbar_2 A_0(\zeta,\omega)(1+O(r_n\log n)),\end{align}
when $r_n\le|\,\zeta\,|\le r_n\log n$ and $\babs{\,\omega-\zeta\,}\le 2\delta_0 r_n$.

From \eqref{aests} and the form of $F$, we see (as in the proof of Lemma \ref{uv}) that
\begin{equation}\label{viso}F(\zeta,\omega)=1+O(\zeta-\omega),\quad \dbar_2 F(\zeta,\omega)=O(\omega-\zeta).\end{equation}
We continue to write $\pi_n^\sharp u(\zeta)=u(\zeta)+\epsilon_1+\epsilon_2$ where
\begin{align*}\epsilon_1&=\int\frac {u(\omega)\cdot \dbar\chi_\zeta(\omega)\cdot F(\zeta,\omega)}
{\omega-\zeta}e^{-n\left[A(\omega,\omega)-A(\zeta,\omega)\right]}\, dA(\omega),\\
\epsilon_2&=\int\frac {u(\omega)\cdot\chi_\zeta(\omega)\cdot\dbar_2 F(\zeta,\omega)}
{\omega-\zeta}e^{-n(A(\omega,\omega)-A(\zeta,\omega))}\, dA(\omega).
\end{align*}
To estimate $\epsilon_1$ and $\epsilon_2$, we note that there is a positive constant $c$ such that
\begin{equation}\label{taly}e^{-n(Q_0(\omega)/2-\re A_0(\zeta,\omega))}\le Ce^{\,nQ_0(\zeta)/2-cn|\zeta|^{2k-2}|\zeta-\omega|^2},\quad \babs{\,\omega-\zeta\,}\le 2\delta_0 r_n.\end{equation}
See Lemma \ref{abo}.

Inserting the estimates in \eqref{viso} and \eqref{taly}, using also that $\dbar\chi_\zeta(w)=0$ when $\babs{\,\zeta-\omega\,}\le \delta_0 r_n$, we find that if $|\zeta|\ge r_n$
\begin{align*}\babs{\,\epsilon_1\,}e^{-nQ(\zeta)/2}&\le C\delta_0^{-1}r_n^{-1}\int\babs{\,u(\omega)\,}|\,\dbar\chi_\zeta(\omega)\,|e^{-nQ(\omega)/2} \, dA(\omega),\\
\babs{\,\epsilon_2\,} e^{-nQ(\zeta)/2}&\le C\int\chi_\zeta(\omega)\babs{\,u(\omega)\,}e^{-nQ(\omega)/2}e^{-cnr_n^{2k-2}|\zeta-\omega|^2}\, dA(\omega).
\end{align*}

Using the Cauchy-Schwarz inequality, we find now that
$$(\babs{\,\epsilon_1\,}+\babs{\,\epsilon_2\,})e^{-nQ(\zeta)/2}\le C(1+\delta_0^{-1}r_n^{-1})\|\,u\,\|_{L^2(\mu_n)}.$$
The proof is complete.
\end{proof}

Choosing $u(\eta)=\bfk_{n}(\eta,\zeta)$ where $\bfk_n$ is the Bergman kernel for the subspace
$\calP_n$ of $L^2(\mu_n)$, we obtain the following estimate, valid when $r_n \le \babs{\,\zeta\,}\le r_n\log n$:
\begin{equation}\label{hk1:eq}\babs{\,\bfk_n(\zeta,\zeta)-\pi_n\left[\chi_\zeta \bfL_\zeta^\sharp\right](\zeta)\,}
\le Cr_n^{-1}\sqrt{\bfk_n(\zeta,\zeta)}\cdot e^{\,nQ(\zeta)/2}.\end{equation}
Here $\pi_n:L^2(\mu_n)\to\calP_n$ is the orthogonal projection, $\pi_n u(\zeta)=\langle u,\bfk_{n,\zeta}\rangle_{L^2(\mu_n)}$. (Cf. \eqref{l0:cor} for details on the derivation of
equation \eqref{hk1:eq} from Lemma \ref{mack}.)


\begin{lem} \label{bp1} For all $\zeta$ in the annulus
$r_n\le\babs{\,\zeta\,}\le  \log n \cdot r_n$, and for $\delta_0(\zeta)\cdot r_n$ a small enough multiple of $|\zeta|$, we have the estimate
$$\babs{\,\pi_n\left[\chi_\zeta\bfL_\zeta^\sharp\right](\zeta)-n\Lap Q(\zeta)\,e^{\,nQ(\zeta)}\,}\le C \sqrt{n}r_n^{-1}|\,\zeta\,|^{\,k-1}e^{\,nQ(\zeta)}.$$
\end{lem}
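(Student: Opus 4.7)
The plan is to mimic the argument of Lemma \ref{ho:lem} of the previous section, with the obvious substitutions $Q_0 \rightsquigarrow nQ$, $L^2_a(\mu_0) \rightsquigarrow \calP_n \subset L^2(\mu_n)$, and Bergman projection $\pi_0 \rightsquigarrow \pi_n$. Since $A$ was chosen with $A(\eta,\eta)=Q(\eta)$, we have
\[
\bfL_\zeta^\sharp(\zeta)=n\,\d_1\dbar_2 A(\zeta,\zeta)\cdot e^{\,nA(\zeta,\zeta)}=n\,\Lap Q(\zeta)\,e^{\,nQ(\zeta)},
\]
so the desired inequality can be rewritten as $|v_n(\zeta)|\le C\sqrt{n}\,r_n^{-1}|\zeta|^{k-1}e^{\,nQ(\zeta)}$, where
\[
v_n:=\chi_\zeta\bfL_\zeta^\sharp-\pi_n[\chi_\zeta\bfL_\zeta^\sharp].
\]
Since $\bfL_\zeta^\sharp$ is entire in its first variable, $\dbar v_n=(\dbar\chi_\zeta)\bfL_\zeta^\sharp$ is supported on the thin annulus $A_\zeta=\{\delta_0 r_n\le|\omega-\zeta|\le 2\delta_0 r_n\}$, which stays away from the singularity of $\Lap Q$ at $0$.

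The first step is to estimate $\|v_n\|_{L^2(\mu_n)}$ by Hörmander's weighted $L^2$-inequality. The growth condition \eqref{eq:grow} implies that $L^2_a(\mu_n)$ is a (finite-dimensional) perturbation of $\calP_n$, so $v_n$ agrees with the $L^2(\mu_n)$-norm-minimal solution to $\dbar v=(\dbar\chi_\zeta)\bfL_\zeta^\sharp$ up to an element of $L^2_a(\mu_n)\ominus\calP_n$ whose contribution is absorbed by the main term. Thus
\[
\|v_n\|_{L^2(\mu_n)}^2\le C\int_{A_\zeta}\frac{|\dbar\chi_\zeta|^2\,|\bfL_\zeta^\sharp|^2}{n\Lap Q}\,e^{-nQ}\,dA.
\]
On $A_\zeta$ one has $|\dbar\chi_\zeta|\lesssim(\delta_0 r_n)^{-1}$ and $\Lap Q(\omega)\gtrsim|\zeta|^{2k-2}$, while a Taylor expansion of $A$ around $\zeta$ (in the spirit of Lemma \ref{abo}) gives
\[
|\bfL_\zeta^\sharp(\omega)|^2\,e^{-nQ(\omega)}\lesssim n^2|\zeta|^{4k-4}\,e^{\,nQ(\zeta)-cn|\zeta|^{2k-2}|\omega-\zeta|^2}.
\]
Integrating over the annulus of area $O((\delta_0 r_n)^2)$ yields
\[
\|v_n\|_{L^2(\mu_n)}^2\lesssim n\,|\zeta|^{2k-2}\,e^{\,nQ(\zeta)}\,e^{-c'n|\zeta|^{2k-2}(\delta_0 r_n)^2}.
\]

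Next one passes from the $L^2$-bound to a pointwise bound on $v_n(\zeta)$. Because $\chi_\zeta\equiv 1$ on $D(\zeta,\delta_0 r_n)$ and $\pi_n[\chi_\zeta\bfL_\zeta^\sharp]$ is a polynomial, $v_n$ is holomorphic in that disk; the pointwise-to-$L^2$ estimate of the type \eqref{vemm} then gives
\[
|v_n(\zeta)|^2\,e^{-nQ(\zeta)}\le C\delta^{-2}\,e^{\,C''n\delta\,|\zeta|^{2k-1}}\,\|v_n\|_{L^2(\mu_n)}^2
\]
for any $0<\delta<\delta_0 r_n$. Balancing the competing exponentials forces $\delta\sim(\delta_0 r_n)^2/|\zeta|$, which is admissible thanks to the hypothesis $\delta_0 r_n\le c|\zeta|$. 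Choosing further $\delta_0\sim\sqrt{|\zeta|/r_n}$ (compatible with both constraints in the range $r_n\le|\zeta|\le r_n\log n$) makes $(\delta_0 r_n)^2\sim|\zeta|r_n$ and hence $\delta^{-2}\sim r_n^{-2}$. Inserting this into the pointwise estimate and taking square roots produces
\[
|v_n(\zeta)|\lesssim \sqrt{n}\,r_n^{-1}\,|\zeta|^{k-1}\,e^{\,nQ(\zeta)},
\]
as required.

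The main technical subtlety is the first step, namely justifying the Hörmander estimate for $v_n$ given that $\pi_n$ projects onto $\calP_n$ rather than onto the full Bergman space. One must show that the orthogonal complement of $\calP_n$ inside the space of entire $L^2(\mu_n)$-functions is exponentially small, so that $v_n$ truly behaves like the canonical $\dbar$-solution; this is a direct consequence of \eqref{eq:grow}. The remaining work is bookkeeping: verifying the Taylor-type bound for $|\bfL_\zeta^\sharp|^2 e^{-nQ}$ (an adaptation of Lemma \ref{abo} to the germ of $Q$ at $\zeta$ rather than at $0$) and tuning the parameters $\delta_0,\delta$ so that the two weighted inequalities combine to give exactly the claimed rate.
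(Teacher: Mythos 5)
Your overall plan is the right one, and the Taylor-type bound for $|\bfL_\zeta^\sharp|^2e^{-nQ}$, the pointwise-$L^2$ step (the analogue of \eqref{vemm}), and the parameter tuning at the end all match the paper's argument in substance. But there is a genuine gap at the single step you flag as "the main technical subtlety," and the way you dispose of it does not work.

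The issue is that you claim $L^2_a(\mu_n)$ is "a finite-dimensional perturbation of $\calP_n$" and that the orthogonal complement of $\calP_n$ inside $L^2_a(\mu_n)$ is "exponentially small... a direct consequence of \eqref{eq:grow}." Neither is true. Condition \eqref{eq:grow} only bounds $Q$ from below; for, say, $Q=|\zeta|^{2k}$ the space $L^2_a(\mu_n)$ contains $\zeta^j$ for all $j\ge 0$, so $L^2_a(\mu_n)\ominus\calP_n$ is infinite-dimensional. Since $v_n=f-\pi_n f$ is the minimizer of $\|f-p\|_{L^2(\mu_n)}$ over $p\in\Poly(n)$, while the Hörmander estimate controls the minimizer over all of $L^2_a(\mu_n)$, one has $\|v_n\|\ge\|w_0\|$, which is the wrong direction: a bound on $\|w_0\|$ does not bound $\|v_n\|$. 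To close the gap one must either exhibit \emph{some} solution $u$ of $\dbar u=\dbar f$ with $u-f\in\Poly(n)$ and the required $L^2(\mu_n)$-bound (then $\|v_n\|\le\|u\|$ by minimality over the polynomial-shifted class), or else genuinely estimate $\|P_{L^2_a\ominus\calP_n}f\|$; you assert neither.

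The paper resolves this with the obstacle function $\check{Q}$: set $\phi=\check{Q}+n^{-1}\log(1+|\omega|^2)$ and $d\mu_n'=e^{-n\phi}\,dA$. Because $\check{Q}$ grows like $\log|\omega|^2+O(1)$, one has $L^2_a(\mu_n')=\Poly(n)$ exactly, so Hörmander's estimate in $L^2(\mu_n')$ produces $v_0=f-\pi_n'f$ with $v_0-f\in\Poly(n)$ automatically; and since $\Lap\check{Q}=\Lap Q$ near the bulk point $\zeta$ (this is where the bulk hypothesis enters, which your write-up never invokes) one gets the factor $1/(n\Lap Q_0)$ in the Hörmander bound. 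Finally the inequality $n\phi\le nQ+\const$ (a consequence of \eqref{eq:grow}, via $\check{Q}\le Q$) transfers the $L^2(\mu_n')$-bound to an $L^2(\mu_n)$-bound, giving the required solution $u$ in the polynomial-shifted class, after which the rest of your argument goes through unchanged. Without this modified-measure device your Hörmander step is unjustified.
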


\begin{proof}
Let
$u_0=\chi_\zeta \bfL_\zeta^\sharp-\pi_n\left[\chi_\zeta \bfL_\zeta^\sharp\right]$
be the norm-minimal solution in $L^2(\mu_n)$ to the problem $\dbar u_0=\dbar f$ where $f=\chi_\zeta \bfL_\zeta^\sharp$. We will prove that the problem $\dbar u=\dbar f$ has a solution $u$
with $u-f\in \Poly(n)$ and
\begin{equation}\label{wilp}\|\,u\,\|_{L^2(\mu_n)}\le Cn^{-1/2}|\zeta|^{-(k-1)}\left\|\,\dbar\left[\chi_\zeta \bfL_\zeta^\sharp\right]\,\right\|_{L^2(\mu_n)}.\end{equation}
This is done by a standard device, which now we briefly recall.

Let $\check{Q}$ be the "obstacle function'' pertaining to $Q$. The main facts about this function to be used here are the following (cf. \cite{ST} for details). The obstacle function can be defined as $\check{Q}=\gamma-2U^\sigma$ where $U^\sigma$ is the logarithmic potential of the equilibrium measure and $\gamma$ is a constant chosen so that $\check{Q}=Q$ on $S$. One has that $\check{Q}$ is harmonic outside $S$, and that its gradient is Lipschitz continuous on $\C$. Furthermore, $\check{Q}(\omega)$ grows like $2\log\babs{\,\omega\,}+O(1)$ as $\omega\to\infty$.

We use the obstacle function to form the strictly subharmonic function
$\phi(\omega)=\check{Q}(\omega)+n^{-1}\log(1+\babs{\,\omega\,}^{\,2})$, and we go on to define a measure
$\mu_n'$ by $d\mu_n'(\omega)=e^{-n\phi(\omega)}\, dA(\omega)$. Write $\calP_n'$ for the subspace
of $L^2(\mu_n')$ of holomorphic polynomials of degree at most $n-1$, and let $\pi_n'$ be the corresponding orthogonal projection. Finally, we put
$$v_0=f-\pi_n'f.$$

Since $\phi$ is now strictly subharmonic, the standard Hörmander estimate can be applied. It gives
$$\|\,v_0\,\|_{L^2(\mu_n')}^{\,2}\le\int_\C\babs{\,\dbar f\,}^{\,2}\frac {e^{-n\phi}}{n\Lap\phi}\, dA.$$
Since $\chi_\zeta$ is supported in the disk $D(\zeta,2\delta_0 r_n)$, and since $\Lap\check{Q}=\Lap Q=\Lap Q_0\cdot (1+o(1))$ there, we see that
$$\|\,v_0\,\|_{L^2(\mu_n')}\le Cn^{-1/2}|\zeta|^{-(k-1)}\left\|\,\dbar f\,\right\|_{L^2(\mu_n)}.$$
Next we use the estimate $n\phi\le nQ+\const$ which holds by the growth assumption on $Q$ near infinity. This gives $\|\,v_0\,\|_{L^2(\mu_n)}\le C\|\,v_0\,\|_{L^2(\mu_n')}$, and so we have shown \eqref{wilp} with $u=v_0$.

Since $n\phi(\omega)=(n+1)\log\babs{\,\omega\,}^{\,2}+O(1)$ as $\omega\to\infty$ we have that $L^2_a(\mu_n')=\Poly(n)$. Hence $u=v_0$ solves, in addition to \eqref{wilp}, the problem
$$\dbar u=\dbar f\quad \text{and}\quad u-f\in\Poly(n).$$

Using the form of $\dbar f=\dbar \chi_\zeta\cdot\bfL_\zeta^\sharp$ and the estimate \eqref{taly}, we find that for $|\omega-\zeta|\le \delta_0 r_n$,
\begin{align*}|\,\dbar u(\omega)\,|^{\,2}e^{-nQ(\omega)}&\le C(n\Lap Q_0(\zeta))^{\,2}|\,\dbar\chi_\zeta(\omega)\,|^{\,2}e^{nQ(\zeta)-2nc|\zeta|^{2k-2}\,\babs{\,\omega-\zeta\,}^{\,2}}.
\end{align*}
By the homogeneity of $\Lap Q_0$ and the fact that $\dbar\chi_\zeta=0$ when $|\omega-\zeta|\le\delta_0r_n$ this gives the estimate
$$\|\,\dbar f\,\|_{L^2(\mu_n)}\le Cn\babs{\,\zeta\,}^{\,2k-2}e^{nQ(\zeta)/2}e^{-cn|\zeta|^{2k-2}(\delta_0r_n)^2}.$$
Applying \eqref{wilp}, we now get
\begin{equation}\label{moww}\|\,u\,\|_{L^2(\mu_n)}\le C\sqrt{n}|\,\zeta\,|^{\,k-1}e^{nQ(\zeta)/2}e^{-cn(\delta_0r_n)^2|\zeta|^{2k-2}}.\end{equation}
We now pick a small constant $\delta$ (independent of $n$) and use
the pointwise-$L^2$ estimate
\begin{align*}\babs{\,u(\zeta)\,}^{\,2}e^{-nQ(\zeta)}&\le C(r_n\delta)^{-2}e^{\,c'nr_n\delta|\zeta|^{2k-1}}\|\,u\,\|_{nQ}^{\,2}.
\end{align*}
Choosing $\delta_0r_n$ as a small multiple of $|\zeta|$ and then $\delta$ small enough, can now use \eqref{moww} to deduce that
$$\babs{\,u(\zeta)\,}e^{-nQ(\zeta)/2}\le Cr_n^{-1}\sqrt{n}|\zeta|^{k-1}e^{nQ(\zeta)/2},$$
finishing the proof.
\end{proof}

\begin{proof}[Proof of Theorem \ref{apth}, part \ref{r1p}.] Fix $\eps>0$ and take
$\zeta$ with $r_n\le \babs{\,\zeta\,}\le\log n\cdot r_n$. By the estimate \eqref{hk1:eq} and Lemma \ref{bp1} we have for all large $n$ that
\begin{align*}\babs{\,\bfR_n(\zeta)-n\Lap Q_0(\zeta)\,}&\le C_1r_n^{-1}\sqrt{\bfR_n(\zeta)}+C_2 r_n^{-k-1}|\zeta|^{k-1},
\end{align*}
for some constants $C_1,C_2$.
Multiplying through by $r_n^{\,2}$ and writing $R_n(z)=r_n^{\,2}\,\bfR_n(\zeta)$, $z=r_n^{-1}\zeta$, we get
\begin{equation}\label{fup}\babs{\,R_n(z)-\Lap Q_0(z)\,}\le C_1\sqrt{R_n(z)}+C_2|z|^{k-1}.\end{equation}
It follows that each limiting $1$-point function $R$ must satisfy
\begin{equation}\label{woul}\babs{\,R(z)-c(z)\babs{\,z\,}^{\,2k-2}\,}\le C_1\sqrt{R(z)}+C_2|z|^{k-1},\quad |z|\ge 1.\end{equation}
where $c(z)=\Lap Q_0(z/|\,z\,|)>0$. The proof of part \ref{r0p} of Theorem \ref{apth} shows that
this is only possible if $R(z)=\Lap Q_0(z)(1+O(z^{1-k}))$ as $z\to\infty$.
\end{proof}

\end{document}